\let\cite\citep
\newcommand{\parencite}{\citep}
\newcommand{\textcite}{\citet}
\icmltitlerunning{Efficient Deviation Types and Learning}
\definecolor{mydarkblue}{rgb}{0,0.08,0.45}
\def\correctionPaperReferenceKey/{edl2021arxivCorrections}
\DeclarePairedDelimiter{\abs}{\lvert}{\rvert}
\DeclarePairedDelimiter{\norm}{\lVert}{\rVert}
\DeclarePairedDelimiter{\subex}{(}{)}
\DeclarePairedDelimiter{\subblock}{[}{]}
\DeclarePairedDelimiter{\tuple}{(}{)}
\DeclarePairedDelimiter{\set}{\{}{\}}
\newcommand{\reals}{\mathbb{R}}
\newcommand{\Simplex}{\Delta}
\newcommand{\simplex}{\Simplex}
\newcommand{\like}{\widetilde}
\newcommand{\bs}[1]{\bm{#1}}
\newcommand{\expectation}{\mathbb{E}}
\newcommand{\E}{\expectation}
\newcommand{\as}{\doteq}
\newcommand{\zeros}{\bs{0}}
\newcommand{\bigO}[1]{\operatorname{\mathcal{O}}\subex{#1}}
\newcommand{\ip}[2]{\langle #1, \, #2 \rangle}
\newcommand{\ind}[1]{\mathbb{1}\set*{#1}}
\newcommand{\given}{\,|\,}
\newcommand{\where}{\;|\;}
\newcommand{\PureStratSet}{S}
\newcommand{\pureStrat}{s}
\newcommand{\pureProfile}{\pureStrat}
\newcommand{\reward}{r}
\newcommand{\utility}{u}
\newcommand{\policy}{\pi}
\newcommand{\StrategySet}{\Pi}
\newcommand{\strategy}{\policy}
\newcommand{\strat}{\strategy}
\newcommand{\recDist}{\mu}
\newcommand{\Actions}{\mathcal{A}}
\newcommand{\regret}{\rho}
\newcommand{\supReward}{U}
\newcommand{\maxReward}{\supReward}
\newcommand{\grad}{\nabla}
\newcommand{\EXT}{\textsc{ex}}
\newcommand{\IN}{\textsc{in}}
\newcommand{\INT}{\IN}
\newcommand{\SWAP}{\textsc{sw}}
\newcommand{\infoSet}{I}
\newcommand{\InfoSets}{\mathcal{I}}
\newcommand{\reachProb}{P}
\newcommand{\chance}{c}
\newcommand{\Histories}{\mathcal{H}}
\newcommand{\TerminalHistories}{\mathcal{Z}}
\newcommand{\playerChoice}{\mathcal{P}}
\newcommand{\emptyHistory}{\varnothing}
\newcommand{\termValue}{\reward}
\newcommand{\cfIv}{v}
\newcommand{\cfv}{\cfIv}
\newcommand{\immStrat}{\sigma}
\newcommand{\DevSet}{\Phi}
\newcommand{\dev}{\phi}
\newcommand{\COUNTERFACTUAL}{\textsc{cf}}
\newcommand{\CF}{\COUNTERFACTUAL}
\newcommand{\BHV}{\IN}
\newcommand{\TARGET}{\odot}
\newcommand{\TRIGGER}{\text{!}}
\newcommand{\tSelectionFn}{w}
\newcommand{\TSelectionSet}{W}
\newcommand{\rmOperator}{L}
\newcommand{\maxActivation}{\omega}
\newcommand{\parent}{\mathbb{p}}
\newcommand{\parentAction}{\mathbb{a}}
\newcommand{\arbHistory}{\mathbb{h}}
\newcommand{\infoSetOf}{\mathbb{I}}
\DeclareRobustCommand\onedot{\futurelet\@let@token\@onedot}
\def\@onedot{\ifx\@let@token.\else.\null\fi\xspace}
\def\eg/{\emph{e.g}\onedot} \def\Eg/{\emph{E.g}\onedot}
\def\ie/{\emph{i.e}\onedot} \def\Ie/{\emph{I.e}\onedot}
\def\cf/{\emph{c.f}\onedot} \def\Cf/{\emph{C.f}\onedot}
\def\vs/{\emph{vs}\onedot} \def\Vs/{\emph{Vs}\onedot}
\def\etc/{\emph{etc}\onedot}
\def\wrt/{w.r.t\onedot} \def\dof/{d.o.f\onedot}
\def\etal/{\emph{et al}\onedot}
\def\viceversa/{\emph{vice-versa}}
\def\ow/{\emph{o.w}\onedot}
\def\whp/{w.h.p\onedot}
\def\apriori/{\emph{a priori}} \def\Apriori/{\emph{A priori}}
\def\ala/{\`{a} la}
\def\naive/{na\"{\i}ve} \def\Naive/{Na\"{\i}ve}
\def\rmPlus/{regret matching\textsuperscript{+}}
\def\rrmPlus/{RRM\textsuperscript{+}}
\def\rcfrPlus/{RCFR\textsuperscript{+}}
\def\cfrPlus/{CFR\textsuperscript{+}}
\def\NashConv/{\textsc{NashConv}}
\def\NashConvAUC/{$\overline{\textsc{NashConv}}$}
\newcommand{\textbxf}[1]{{\fontseries{b}\selectfont #1}}
\def\efceFootnote/{3}
 \definecolor{offWhite}{RGB}{240,240,240}
\definecolor{grey}{RGB}{180,180,180}
\definecolor{darkgreen}{RGB}{0,125,0}
\definecolor{lime}{RGB}{255,200,0}
\definecolor{amiiBlue}{RGB}{16,72,118}
\definecolor{amiiPink}{RGB}{241,97,119}
\definecolor{amiiYellow}{RGB}{248,209,109}
\definecolor{amiiPurple}{RGB}{123,105,145}
\colorlet{yes}{cyan!50!white}
\colorlet{newYes}{cyan!75!white}
\colorlet{no}{red!50!white}
\colorlet{newNo}{red!75!white}
  \newtheorem{theorem}{Theorem}
  \newtheorem{lemma}{Lemma}
  \newtheorem{corollary}{Corollary}
  \newtheorem{definition}{Definition}
\def\stretchBox{\pbox{\linewidth}}
\DeclareMathSymbol{\negative}{\mathbin}{AMSa}{"39}
\newcommand{\State}{\STATE}
\newcommand{\For}{\FOR}
\newcommand{\EndFor}{\ENDFOR}
\newcommand{\Return}{\OUTPUT}
\newcommand{\If}{\IF}
\newcommand{\Else}{\ELSE}
\newcommand{\EndIf}{\ENDIF}
\newcommand{\CommentChar}{\#}
\newcommand{\Indent}[1][1]{\hspace{#1em}}
\newcommand{\Input}{\State \textbf{Input:}}
\newcommand{\LComment}[1]{\State \CommentChar\ #1}
\newcommand\safeIncCounter[1]{\@ifundefined{c@#1}{\newcounter{#1}\stepcounter{#1}}{\stepcounter{#1}}}
\newcounter{resetCounter}
\def\gTwoFive/{g\textsubscript{2, 5, $\uparrow$}}
\def\gThreeFour/{g\textsubscript{3, 4, $\uparrow$}}
  \def\correctionBlockDescription/{The changes to the original paper pertaining to this correction can be found within labeled and numbered ``correction'' blocks.
    A brief description of the specific change within a given correction block can be found at the start of that block.}
  \def\correctionBlockDescription/{The changes to the original paper pertaining to this correction are highlighted with labeled and numbered ``correction'' blocks in the separate report \textcite{\correctionPaperReferenceKey/}.}
\newcommand{\edlIntroCorrectionsFootnote}{
  \footnote{This paper is changed from its original version published as \textcite{edl2021}.
    That version asserts that counterfactual and partial sequence deviations subsume external and causal deviations, respectively, without qualification.
    This assertion is incorrect, as \textcite{macQueen2022counterexampleSlackMessages}'s counterexample shows.
    This example and its consequences are described in the corrected version of \textcite{hsr2020arxiv}.
    Instead, observable sequential rationality is required to ensure that counterfactual deviations and partial sequence deviations subsume external deviations and causal deviations, respectively.
    \correctionBlockDescription/
  }
}
\def\cmToEmFactor{2.3710630158366}
\tikzset{
  influenceArrow/.style={>=Triangle, -{>[scale=#1]}},
  influenceArrow/.default=0.4
}
\tikzstyle{devColor} = [draw=red]
\tikzstyle{dev} = [devColor, very thick]
\tikzstyle{followColor} = [draw=black]
\tikzstyle{follow} = [followColor, very thick]
\tikzstyle{infoColor} = [draw=cyan]
\tikzstyle{info} = [infoColor, very thick]
\tikzstyle{infoArrow} = [influenceArrow, thin, infoColor]
\tikzstyle{alt} = [draw=grey]
\tikzstyle{zeroProb} = [draw=grey]
\tikzstyle{rec} = [draw=black, densely dashed, very thick]
\def\stateMinimumRadius{0.18cm}
\tikzstyle{state} = [circle, draw=black, minimum size=2*\stateMinimumRadius, inner sep=0.5mm, fill=white]
\tikzstyle{util} = [inner sep=1mm]
\tikzstyle{behaveLabel} = [text width=2cm]
\tikzstyle{actionArrow} = [thick, >=Stealth, -{>[scale=0.7]}]
\tikzstyle{showPoint} = [shape=circle, fill=black, minimum size=0.3em]
 \tikzset{
  seqPath/.style args={segment length #1 amplitude #2}{
    very thick,
    line join=round,
    decorate,
    decoration={zigzag, segment length=#1, amplitude=#2}},
  seqPath/.default={segment length 0.2cm amplitude 1mm}
}
\def\actLen{\cmToEmFactor*0.5em}
\def\seqLen{\actLen}
\tikzset{
    absDecisionDag/.style args={size #1}{very thick, inner sep=0em, fill=white, regular polygon, regular polygon sides=3, minimum size=#1+#1/3.4641},
    absDecisionDag/.default={size \seqLen}
}
\newcommandx{\inTrans}[3][1=\actLen, 2=very thick]{
  \def\inTransActLen{#1}
  \def\inTransActionStyle{#2}
  \def\inTransRoot{#3}

\coordinate(\inTransRoot/dev) at ($(\inTransRoot)-(0,\inTransActLen)$);
  \coordinate(\inTransRoot/info) at ($(\inTransRoot/dev)+(-\inTransActLen,0.05em)$);
  \coordinate(\inTransRoot/influenceSource)
    at ($(\inTransRoot)!0.8!(\inTransRoot/info)+(0.1em, 0)$);
  \coordinate(\inTransRoot/south)
    at ($(\inTransRoot/dev)!0.5!(\inTransRoot/info)$);
  \coordinate(\inTransRoot/north)
    at (\inTransRoot -| \inTransRoot/south);
  \coordinate(\inTransRoot/center)
    at ($(\inTransRoot/north)!0.5!(\inTransRoot/south)$);

  \draw[\inTransActionStyle, rounded corners]
    (\inTransRoot/info)
    edge[info]
    ($(\inTransRoot)-(0,0.05em)$)
    (\inTransRoot)
    edge[dev]
    (\inTransRoot/dev);
}
\newcommandx{\seqInTrans}[3][1=follow, 2=\seqLen]{
  \def\seqInTransStyle{#1}
  \def\seqInTransSeqLen{#2}
  \def\seqInTransRoot{#3}

\coordinate(\seqInTransRoot/seq) at ($(\seqInTransRoot.south)-(0,\seqInTransSeqLen)$);
  \inTrans{\seqInTransRoot/seq}

  \draw[seqPath, \seqInTransStyle] (\seqInTransRoot.north) -- (\seqInTransRoot/seq);
}
\tikzstyle{tree} = [absDecisionDag={size \seqLen}]
\newcommand{\inDevDiagram}[1]{
  \def\root{#1}
  \node(\root/recState)
    [tree, info, anchor=north]
    at ($(\root.south)+(-\cmToEmFactor*0.3em,0)$)
    {};

  \node(\root/devState)
    [tree, dev, anchor=north]
    at ($(\root.south)+(\cmToEmFactor*0.3em,0)$)
    {};
  \draw[infoArrow, bend left=50]
    (\root/recState)
    edge[infoColor]
    (\root/devState);

\coordinate(\root/south) at (\root.north |- \root/recState.south);
  \coordinate(\root/center) at ($(\root.north)!0.5!(\root/south)$);
  \coordinate(\root/west) at (\root/center -| \root/recState.west);
}
\newcommand{\behavDevDiagram}[1]{
  \def\behavDevDiagramRoot{#1}

  \coordinate(\behavDevDiagramRoot/_) at (\behavDevDiagramRoot);
  \inTrans{\behavDevDiagramRoot/_};
  \inTrans{\behavDevDiagramRoot/_/dev};
  \inTrans{\behavDevDiagramRoot/_/dev/dev};

\coordinate(\behavDevDiagramRoot/north)
    at (\behavDevDiagramRoot -| \behavDevDiagramRoot/_/south);
  \coordinate(\behavDevDiagramRoot/south)
    at (\behavDevDiagramRoot/_/dev/dev/dev -| \behavDevDiagramRoot/north);
  \coordinate(\behavDevDiagramRoot/center)
    at ($(\behavDevDiagramRoot/north)!0.5!(\behavDevDiagramRoot/south)$);
  \coordinate(\behavDevDiagramRoot/east) at (\behavDevDiagramRoot/center -| \behavDevDiagramRoot);
}
\newcommand{\tipsDevDiagram}[1]{
  \def\tipsDevDiagramRoot{#1}

  \seqInTrans{\tipsDevDiagramRoot}
  \seqInTrans[dev]{\tipsDevDiagramRoot/seq/dev}
  \node(\tipsDevDiagramRoot/followTree)
    [tree, follow, anchor=north]
    at (\tipsDevDiagramRoot/seq/dev/seq/dev) {};

\coordinate(\tipsDevDiagramRoot/north)
    at (\tipsDevDiagramRoot);
  \coordinate(\tipsDevDiagramRoot/south)
    at (\tipsDevDiagramRoot/north |- \tipsDevDiagramRoot/followTree.south);
  \coordinate(\tipsDevDiagramRoot/center)
    at ($(\tipsDevDiagramRoot/north)!0.5!(\tipsDevDiagramRoot/south)$);
  \coordinate(\tipsDevDiagramRoot/west)
    at (\tipsDevDiagramRoot/center -| \tipsDevDiagramRoot/seq/info);
}
\newcommand{\cspsDevDiagram}[1]{
  \def\cspsDevDiagramRoot{#1}

  \seqInTrans{\cspsDevDiagramRoot}
  \node(\cspsDevDiagramRoot/followTree)
    [tree, follow, anchor=north]
    at ($(\cspsDevDiagramRoot/seq/dev)-(0,\seqLen)$)
    {};
  \draw[seqPath, dev]
    (\cspsDevDiagramRoot/seq/dev)
    --
    (\cspsDevDiagramRoot/followTree);

\coordinate(\cspsDevDiagramRoot/north)
    at (\cspsDevDiagramRoot);
  \coordinate(\cspsDevDiagramRoot/south)
    at (\cspsDevDiagramRoot/north |- \cspsDevDiagramRoot/followTree.south);
  \coordinate(\cspsDevDiagramRoot/center)
    at ($(\cspsDevDiagramRoot/north)!0.5!(\cspsDevDiagramRoot/south)$);
}
\newcommand{\cfpsDevDiagram}[1]{
  \def\cfpsDevDiagramRoot{#1}

  \coordinate(\cfpsDevDiagramRoot/seq) at ($(\cfpsDevDiagramRoot)-(0, \seqLen)$);
  \seqInTrans[dev]{\cfpsDevDiagramRoot/seq};
  \node(\cfpsDevDiagramRoot/followTree)
    [tree, follow, anchor=north]
    at (\cfpsDevDiagramRoot/seq/seq/dev)
    {};
  \draw[seqPath, follow]
    (\cfpsDevDiagramRoot)
    --
    (\cfpsDevDiagramRoot/seq);

\coordinate(\cfpsDevDiagramRoot/north)
    at (\cfpsDevDiagramRoot);
  \coordinate(\cfpsDevDiagramRoot/south)
    at (\cfpsDevDiagramRoot/north |- \cfpsDevDiagramRoot/followTree.south);
  \coordinate(\cfpsDevDiagramRoot/center)
    at ($(\cfpsDevDiagramRoot/north)!0.5!(\cfpsDevDiagramRoot/south)$);
}
\newcommand{\inCsDevDiagram}[1]{
  \def\root{#1}
  \seqInTrans{\root};

  \node(\root/devTree) [tree, devColor, anchor=north] at (\root/seq/dev) {};

\coordinate(\root/south) at (\root/devTree.south);
  \coordinate(\root/center) at ($(\root.north)!0.5!(\root/south)$);
  \coordinate(\root/north east) at (\root.north -| \root/devTree.south east);
}
\newcommand{\inCfDevDiagram}[1]{
  \def\root{#1}
  \seqInTrans[dev]{\root};
  \node(\root/followTree)
    [tree, follow, anchor=north]
    at (\root/seq/dev)
    {};

\coordinate(\root/south) at (\root/followTree.south);
  \coordinate(\root/center) at ($(\root.north)!0.5!(\root/south)$);
}
\newcommand{\inActDevDiagram}[1]{
  \def\root{#1}
  \seqInTrans{\root};
  \node(\root/followTree)
    [tree, follow, anchor=north]
    at (\root/seq/dev)
    {};

\coordinate(\root/south) at (\root/followTree.south);
  \coordinate(\root/center) at ($(\root.north)!0.5!(\root/south)$);
}
\newcommand{\bpsDevDiagram}[1]{
  \def\bpsDevDiagramRoot{#1}

  \coordinate(\bpsDevDiagramRoot/seq) at ($(\bpsDevDiagramRoot)-(0, \seqLen)$);

  \node(\bpsDevDiagramRoot/followTree)
    [tree, follow, anchor=north]
    at ($(\bpsDevDiagramRoot/seq)-(0, \seqLen)$)
    {};
  \draw[seqPath, follow]
    (\bpsDevDiagramRoot)
    --
    (\bpsDevDiagramRoot/seq);
  \draw[seqPath, dev]
    (\bpsDevDiagramRoot/seq)
    --
    (\bpsDevDiagramRoot/followTree);

\coordinate(\bpsDevDiagramRoot/north)
    at (\bpsDevDiagramRoot);
  \coordinate(\bpsDevDiagramRoot/south)
    at (\bpsDevDiagramRoot/north |- \bpsDevDiagramRoot/followTree.south);
  \coordinate(\bpsDevDiagramRoot/center)
    at ($(\bpsDevDiagramRoot/north)!0.5!(\bpsDevDiagramRoot/south)$);
}
\newif\csname ifGin@setpagesize\endcsname
\newcommand{\todonote}[4][inline]{\safeIncCounter{#2NoteCounter}
  \todo[color=offWhite,bordercolor=#3,linecolor=#3,#1]{\textbf{\uppercase{#2}$_{\arabic{#2NoteCounter}}$:}~#4}}
\newcommand{\replaced}[3]{\def\counterPrefix{#1}
  \def\arrowMarker{#2}
  \def\replacedText{#3}
  \todo[color=offWhite,bordercolor=red,inline]{$\bs{\arrowMarker}$ \textbf{Replaced (\arabic{Replaced\counterPrefix{}NoteCounter})} \replacedText }}
\newcommand{\replacedStart}[2]{\def\user{#1}
  \def\text{#2}
  \safeIncCounter{Replaced#1NoteCounter}\replaced{\user}{\downarrow}{\text}}
\newcommand{\replacedEnd}[1]{\def\user{#1}
  \replaced{\user}{\uparrow}{}}
\newcommand{\issue}[3]{\todo[color=black,inline]{\textcolor{white}{$\bs{#2}$ \textbf{Issue \##3} (Part \arabic{Issue#3NoteCounter}) #1}}}
\newcommand{\issueChangeStart}[2][]{\safeIncCounter{Issue#2NoteCounter}\issue{#1}{\downarrow}{#2}}
\newcommand{\issueChangeEnd}[2][]{\issue{#1}{\uparrow}{#2}}
\newcommand{\correction}[2]{\def\arrowMarker{#1}
  \def\correctionLabel{#2}
  \todo[color=offWhite,bordercolor=red,inline]{$\bs{\arrowMarker}$ \textbf{Correction (\arabic{CorrectionNoteCounter})} \correctionLabel }}
\newcommand{\correctionStart}[1]{\def\correctionStartLabel{#1}
  \safeIncCounter{CorrectionNoteCounter}
  \correction{\downarrow}{\correctionStartLabel}}
\newcommand{\correctionEnd}{\correction{\uparrow}{}}
 \renewcommand{\todonote}[4][inline]{\ignorespaces}
\renewcommand{\issueChangeStart}[2][]{\ignorespaces}
\renewcommand{\issueChangeEnd}[2][]{\ignorespaces}
\renewcommand{\replacedStart}[2][]{\ignorespaces}
\renewcommand{\replacedEnd}[1][]{\ignorespaces}
\renewcommand{\correctionStart}[2][]{\ignorespaces}
\renewcommand{\correctionEnd}[1][]{\ignorespaces}
\begin{document}
\icmltitle{Efficient Deviation Types and Learning for Hindsight Rationality in Extensive-Form Games}

\begin{icmlauthorlist}
  \icmlauthor{Dustin Morrill}{alberta}
  \icmlauthor{Ryan D'Orazio}{montreal}
  \icmlauthor{Marc Lanctot}{deepmind}\\
  \icmlauthor{James R.\ Wright}{alberta}
  \icmlauthor{Michael Bowling}{alberta,deepmind}
  \icmlauthor{Amy R.\ Greenwald}{brown}
\end{icmlauthorlist}

\icmlaffiliation{alberta}{Department of Computing Science, University of Alberta; Alberta Machine Intelligence Institute, Edmonton, Alberta, Canada}
\icmlaffiliation{montreal}{DIRO, Universit\'{e} de Montr\'{e}al; Mila, Montr\'{e}al, Qu\'{e}bec, Canada}
\icmlaffiliation{deepmind}{DeepMind}
\icmlaffiliation{brown}{Computer Science Department, Brown University, Providence, Rhode Island, United States}

\icmlcorrespondingauthor{Dustin Morrill}{morrill@ualberta.ca}

\icmlkeywords{ICML,[Game Theory and Economic Paradigms] Equilibrium,[Game Theory and Economic Paradigms] Imperfect Information,[Multiagent Systems] Multiagent Learning,[Machine Learning] Online Learning & Bandits}

\vskip 0.3in
\printAffiliationsAndNotice{}

\begin{abstract}
  Hindsight rationality is an approach to playing general-sum games that prescribes no-regret learning dynamics for individual agents with respect to a set of deviations, and further describes jointly rational behavior among multiple agents with mediated equilibria.
To develop hindsight rational learning in sequential decision-making settings, we formalize behavioral deviations as a general class of deviations that respect the structure of extensive-form games.
Integrating the idea of time selection into counterfactual regret minimization (CFR), we introduce the extensive-form regret minimization (EFR) algorithm that achieves hindsight rationality for any given set of behavioral deviations with computation that scales closely with the complexity of the set.
We identify behavioral deviation subsets, the partial sequence deviation types, that subsume previously studied types and lead to efficient EFR instances in games with moderate lengths.
In addition, we present a thorough empirical analysis of EFR instantiated with different deviation types in benchmark games, where we find that stronger types typically induce better performance. \end{abstract}

\section{Introduction}
We seek more effective algorithms for playing multi-player, general-sum extensive-form games (EFGs).
The hindsight rationality framework~\parencite{hsr2020} suggests a game playing approach that prescribes no-regret dynamics and describes jointly rational behavior with mediated equilibria~\parencite{aumann1974ce}.
Rationality within this framework is measured by regret in hindsight relative to strategy transformations, also called deviations, rather than as prospective optimality with respect to beliefs.
Each deviation transforms the learner's behavior into a competitor that the learner must surpass, so a richer set of deviations pushes the learner to perform better.

While larger deviation sets containing more sophisticated deviations produce stronger competitors, they also tend to raise computational and storage requirements.
For example, there is one external deviation (constant strategy transformation) for each strategy in a set of $n$, but there are $n^2$ internal deviations~\parencite{foster1999int-regret} that transform one particular strategy into another, and the latter is fundamentally stronger.
Though even achieving hindsight rationality with respect to external deviations appears intractable because the number of strategies in an EFG grows exponentially with the size of the game.

The counterfactual regret minimization (CFR)~\parencite{cfr} algorithm makes use of the EFG structure to be efficiently hindsight rational for external deviations.
Modifications to CFR by \textcite{celli2020noregret} and \textcite{hsr2020} are efficiently hindsight rational for other types of deviations as well.
\correctionStart{Mention observable sequential rationality.}
We generalize these algorithms as \emph{extensive-form regret minimization} (\emph{EFR}), a simple and extensible algorithm that is observably sequentially hindsight rational~\parencite{hsr2020} for any given deviation set where each deviation can be decomposed into action transformations at each decision point.\edlIntroCorrectionsFootnote{}It is generally intractable to run EFR with all of these \emph{behavioral deviations} so we identify four subsets that lead to efficient EFR instantiations that are observably sequentially hindsight rational for all previously studied tractable deviation types (external, causal~\parencite{forges2002efce,von2008efce-complexity,gordonGreenwaldMarks2008,Dudik09causal-dev,farina2020efcce}, action~\parencite{von2008efce-complexity,hsr2020}, and counterfactual~\parencite{hsr2020}) simultaneously.
\correctionEnd
We provide EFR instantiations and sublinear regret bounds for each of these new \emph{partial sequence deviation} types.

We present a thorough empirical analysis of EFR's performance with different deviation types in benchmark games from OpenSpiel~\parencite{LanctotEtAl2019OpenSpiel}.
Stronger deviation types typically lead to better performance, and EFR with the strongest type of partial sequence deviation often performs nearly as well as that with all behavioral deviations, in games where the latter is tractable.
 
\section{Background}
This work will continuously reference decision making from both the macroscopic, \emph{normal-form} view, and the microscopic, \emph{extensive-form} view.
We first describe the normal-form view, which models simultaneous decision making, before extending it with the extensive-form view, which models sequential decision making.

\subsection{The Normal-Form View}
At the macro-scale, players in a game choose \emph{strategies} that jointly determine the \emph{utility} for each player.
We assume a bounded utility function
$\utility_i : \TerminalHistories \to [-\maxReward, \maxReward]$
for each player $i$ on a finite set of outcomes, $\TerminalHistories$.
Each player has a finite set of \emph{pure strategies}, $\pureStrat_i \in \PureStratSet_i$, describing their decision space.
A set of results for entirely random events, \eg/, die rolls, is denoted $\PureStratSet_{\chance}$.
A \emph{pure strategy profile},
$\pureStrat \in \PureStratSet = \PureStratSet_{\chance} \times \bigtimes_{i=1}^N \PureStratSet_i$,
is an assignment of pure strategies to each player, and each strategy profile corresponds to a unique outcome
$z \in \TerminalHistories$
determined by the \emph{reach function} $\reachProb(z; \pureStrat) \in \set{0, 1}$.

A \emph{mixed strategy}, $\strat_i \in \StrategySet_i = \simplex^{\abs{\PureStratSet_i}}$, is a probability distribution over pure strategies.
In general, we assume that strategies are mixed where pure strategies are point masses.
The probability of a chance outcome, $\pureStrat_{\chance} \in \PureStratSet_{\chance}$, is determined by the ``chance player'' who plays the fixed strategy $\strat_{\chance}$.
A \emph{mixed strategy profile},
$\strat \in \StrategySet = \set{\strat_{\chance}} \times \bigtimes_{i=1}^N \StrategySet_i$,
is an assignment of mixed strategies to each player.
The probability of sampling a pure strategy profile, $\pureStrat$, is the product of sampling each pure strategy individually, \ie/,
$\strat(\pureStrat) = \strat_c(\pureStrat_c) \prod_{i = 1}^N \strat_i(\pureStrat_i)$.
For convenience, we denote the tuple of mixed strategies for all players except $i$ as $\strat_{-i} \in \StrategySet_{-i} = \set{\strat_{\chance}} \times \bigtimes_{j\neq i} \StrategySet_j$.
We overload the reach function to represent the probability of realizing outcome $z$ according to mixed profile $\strat$, \ie/,
$\reachProb(z; \strat) = \E_{\pureStrat \sim \strat}[\reachProb(z; \pureStrat)]$,
allowing us to express player $i$'s expected utility as
$\utility_i(\strat_i, \strat_{-i}) \as \utility_i(\strat)
  = \E_{z \sim \reachProb(\cdot; \strat)}[\utility_i(z)]$.

The \emph{regret} for playing strategy $\strat_i$ instead of deviating to an alternative strategy $\strat'_i$ is their difference in expected utility
$\utility_i(\strat'_i, \strat_{-i}) - \utility_i(\strat)$.
We construct alternative strategies by transforming $\strat_i$.
Let $\DevSet^{\SWAP}_{\mathcal{X}} = \{\dev : \mathcal{X} \to \mathcal{X}\}$ be the set of transformations to and from a given finite set $\mathcal{X}$.
The pure strategy transformations in $\DevSet^{\SWAP}_{\PureStratSet_i}$ are known as \emph{swap deviations}~\parencite{greenwald2003general}.
Given a mixed strategy $\strat_i$, the transformed mixed strategy under deviation
$\dev \in \DevSet^{\SWAP}_{\PureStratSet_i}$
is the pushforward measure of $\strat_i$, denoted as $\dev(\strat_i)$ and defined by
$[\dev\strat_i](\pureStrat'_i) = \sum_{\pureStrat_i \in \dev^{-1}(\pureStrat'_i)} \strat_i(\pureStrat_i)$ for all $\pureStrat'_i \in \PureStratSet_i$,
where $\dev^{-1} : \pureStrat'_i \mapsto \set{ \pureStrat_i \where \dev(\pureStrat_i) = \pureStrat_i' }$ is the pre-image of $\dev$.
The regret for playing strategy $\strat_i$ instead of deviating according to $\dev$ is then
$\regret(\dev; \strat) = \utility_i(\dev(\strat_i), \strat_{-i}) - \utility_i(\strat)$.

In an online learning setting, a learner repeatedly plays a game with unknown, dynamic, possibly adversarial players.
On each round $1 \le t \le T$, the learner who acts as player $i$ chooses a strategy, $\strat_i^t$, simultaneously with the other players who in aggregate choose $\strat^t_{-i}$.
The learner is evaluated on their strategies, $\tuple{\strat_i^t}_{t = 1}^T$, against a deviation, $\dev$, with the cumulative regret
$\regret^{1:T}(\dev) \as \sum_{t = 1}^T \regret(\dev; \strat^t)$.
A learner is rational in hindsight with respect to a set of deviations,
$\DevSet \subseteq \DevSet^{\SWAP}_{\PureStratSet_i}$,
if the maximum positive regret,
$\max_{\dev \in \DevSet} \big(\regret^{1:T}(\dev)\big)^+$ where $\cdot^+ = \max\set{\cdot, 0}$,
is zero.
A \emph{no-regret} or \emph{hindsight rational} algorithm ensures that average maximum positive regret vanishes as $T \to \infty$.

The \emph{empirical distribution of play},
$\recDist^T \in \simplex^{\abs{\PureStratSet}}$,
is the distribution that summarizes online correlated play, \ie/,
$\recDist^T(\pureProfile) = \frac{1}{T} \sum_{t = 1}^T \strat^t(\pureProfile)$,
for all pure strategy profiles, $\pureProfile$.
The distribution $\recDist^T$ can be viewed as a source of ``strategy recommendations'' distributed to players by a neutral ``mediator''.
The incentive for player $i$ to deviate from the mediator's recommendations, sampled from $\recDist^T$, to behavior chosen by $\dev$ is then player $i$'s average regret
$\E_{\pureStrat \sim \recDist^T}[ \regret(\dev; \pureStrat) ]
  = \frac{1}{T} \regret^{1:T}(\dev)$.
Jointly hindsight rational play converges toward a \emph{mediated equilibrium}~\parencite{aumann1974ce} where no player has an incentive to deviate from the mediator's recommendations, since hindsight rational players ensure that their average regret vanishes.

The deviation set influences what behaviors are considered rational and the difficulty of ensuring hindsight rationality.
For example, the \emph{external deviations}, $\DevSet^{\EXT}_{\PureStratSet_i} = \set{ \dev^{\to \pureStrat_i^{\TARGET}} : \pureStrat_i \mapsto \pureStrat_i^{\TARGET} }_{\pureStrat_i^{\TARGET} \in \PureStratSet_i}$, are the constant strategy transformations, a set which is generally limited in strategic power compared to the full set of swap deviations.
However, it is generally intractable to directly minimize regret with respect to the external deviations in sequential decision-making settings because the set of pure strategies grows exponentially with the number of decision points.

\subsection{The Extensive-Form View}

\textbf{Actions, histories, and information sets.}
An \emph{extensive-form game} (\emph{EFG}) models player behavior as a sequence of decisions.
Outcomes, here called \emph{terminal histories}, are constructed incrementally from the empty history, $\emptyHistory \in \Histories$.
At any history $h$, one player determined by the \emph{player function} $\playerChoice : \Histories \setminus \TerminalHistories \to \set{1, \dots, N} \cup \set{\chance}$ plays an \emph{action}, $a \in \Actions(h)$, from a finite set, which advances the current history to $ha$.
We write $h \sqsubset ha$ to denote that $h$ is a predecessor of $ha$.
We denote the maximum number of actions at any history as $n_{\Actions}$.

Histories are partitioned into \emph{information sets} to model imperfect information, \eg/, private cards.
The player to act in each history $h \in \infoSet$ in information set $\infoSet \in \InfoSets$ must do so knowing only that the current history is in $\infoSet$.
The unique information set that contains a given history is returned by $\infoSetOf$ (``blackboard I'') and
an arbitrary history of a given information set is returned by $\arbHistory$ (``blackboard h'').
Naturally, the action sets of each history within an information set must coincide, so we overload $\Actions(\infoSet) = \Actions(\arbHistory(\infoSet))$.

Each player $i$ has their own \emph{information partition}, denoted $\InfoSets_i$.
We restrict ourselves to \emph{perfect-recall} information partitions that ensure players never forget the information sets they encounter during play and their information set transition graphs are forests (not trees since other players may act first).
We write $\infoSet \prec \infoSet'$ to denote that $\infoSet$ is a predecessor of $\infoSet'$ and $\parent(\infoSet')$ to reference the unique parent (immediate predecessor) of $\infoSet'$.
Let $d_{\infoSet}$ be the number of $\infoSet$'s predecessors representing $\infoSet$'s depth and $d_* = \max_{\infoSet \in \InfoSets_i} d_{\infoSet}$ be the depth of player $i$'s deepest information set.
We use $a_h^{\to \infoSet'}$ or $a_{\infoSet}^{\to \infoSet'}$ to reference the unique action required to play from $h \in \infoSet$ to a successor history in $\infoSet' \succ \infoSet$.

\textbf{Strategies and reach probabilities.}
From the extensive-form view, a pure strategy is an assignment of actions to each of a player's information sets, \ie/, $\pureStrat_i(\infoSet)$ is the action that player $i$ plays in information set $\infoSet$ according to pure strategy $\pureStrat_i$.
A natural generalization is to randomize at each information set, leading to the notion of a \emph{behavioral strategy}~\parencite{Kuhn53}.
A behavioral strategy is defined by an assignment of \emph{immediate strategies}, $\strat_i(\infoSet) \in \simplex^{\abs{\Actions(\infoSet)}}$, to each of player $i$'s information sets, where $\strat_i(a \given \infoSet)$ is the probability that $i$ plays action $a$ in $\infoSet$.
Perfect recall ensures \emph{realization equivalence} between the set of mixed and behavioral strategies
where there is always a behavioral strategy that applies the same weight to each terminal history as a mixed strategy and \viceversa/.
Thus, we treat mixed and behavioral strategies (and by extension pure strategies) as interchangeable representations.

Since histories are action sequences and behavioral strategies define conditional action probabilities, the probability of reaching a history under a profile is the joint action probability that follows from the chain rule of probability.
We overload $\reachProb(h; \strat)$ to return the probability of a non-terminal history $h$.
Furthermore, we can look at the joint probability of actions played by just one player or a subset of players, denoted, for example, as $\reachProb(h; \strat_i)$ or $\reachProb(h; \strat_{-i})$.
We can use this and perfect recall to define the probability that player $i$ plays to their information set $\infoSet \in \InfoSets_i$ as
$\reachProb(\arbHistory(\infoSet); \strat_i)$.
Additionally, we can exclude actions taken before some initial history $h$ to get the probability of playing from $h$ to history $h'$, written as $\reachProb(h, h'; \cdot)$, where it is $1$ if $h = h'$ and $0$ if $h \not\sqsubseteq h'$.
 
\correctionStart{Move section up to here since we need counterfactual values to define observable sequential rationality. This move also required adding a bit more background on CFR and a couple of forward references.}
\subsection{Counterfactual Regret Minimization}
\label{sec:cfrBackground}
\emph{Counterfactual regret minimization} (\emph{CFR})~\parencite{cfr} is based on information-set-local hindsight rational learning, where each immediate strategy is evaluated based on \emph{counterfactual value}.
Given a strategy profile, $\strat$, the counterfactual value for taking $a$ in information set $\infoSet$ is the expected utility for player $i$, assuming they play to reach $\infoSet$ before playing $\strat_i$ thereafter and that the other players play according to $\strat_{-i}$ throughout, \ie/,
\[
  \cfv_{\infoSet}(a; \strat)
    = \sum_{\substack{
      h \in \infoSet,\\
      z \in \TerminalHistories}}
        \reachProb(h; \strat_{-i})
        \underbrace{\reachProb(ha, z; \strat) \utility_i(z)}_{\text{Future value given $ha$.}}.
\]
The counterfactual value allows each of CFR's local learners to be evaluated at each information set in relative isolation according to \emph{immediate counterfactual regret}, which is the extra counterfactual value achieved by choosing a given action instead of following $\strat_i$ at $\infoSet$, \ie/,
$\regret^{\CF}_{\infoSet}(a; \strat)
    = \cfv_{\infoSet}(a; \strat) - \E_{a' \sim \strat_i(\infoSet)} \cfv_{\infoSet}(a'; \strat)$.

CFR was originally shown to be hindsight rational for the external deviations~\parencite{cfr}, and is now better understood as achieving observable sequential hindsight rationality with respect to the counterfactual deviations~\parencite{hsr2020}.
Observable sequential rationality will be discussed in \cref{sec:osr} and counterfactual deviations will be defined in \cref{sec:behavDevs}.
\correctionEnd

\subsection{Extensive-Form Correlated Equilibrium}

\emph{Extensive-form correlated equilibrium} (\emph{EFCE}) is defined by Definition 2.2 of \textcite{von2008efce-complexity} as a mediated equilibrium with respect to deviations that are constructed according to the play of a \emph{deviation player}.
At the beginning of the game, the mediator samples a pure strategy profile (strategy recommendations), $\pureStrat$, and the game plays out according to this profile until it is player $i$'s turn to act.
Player $i$'s decision at this information set $\infoSet$ is determined by the deviation player who observes $\pureStrat_i(\infoSet)$, which is the action recommended to player $i$ by the mediator at $\infoSet$, and then chooses an action by either following this recommendation or deviating to a different action.
After choosing an action and waiting for the other players to move according to their recommended strategies, the deviation player arrives at $i$'s next information set.
Knowing the actions that were previously recommended to $i$, they again choose to follow the next recommendation or to deviate from it.
This process continues until the game ends.

The number of different states that the deviation player's memory could be in upon reaching information set $\infoSet$ at depth $d_{\infoSet}$ is $n_{\Actions}^{d_{\infoSet}}$ corresponding to the number of action combinations across $\infoSet$'s predecessors.
One way to avoid this exponential growth is to assume that recommended strategies are \emph{reduced}, that is, they do not assign actions to information sets that could not be reached according to actions assigned to previous information sets.
Thus, the action recommendation that the deviation player would normally observe after a previous deviation does not exist to observe.
This assumption effectively forces the deviation player to behave according to an \emph{informed causal deviation}~\parencite{gordonGreenwaldMarks2008,Dudik09causal-dev} defined by a ``trigger'' action and information set pair, along with a strategy to play after triggering, and the number of possible memory states grows linearly with depth.
Defining EFCE as a mediated equilibrium with respect to informed causal deviations allows them to be computed efficiently, which has led to this becoming the conventional definition of EFCE.

\section{Behavioral Deviations}
\label{sec:behavDevs}

Instead of achieving tractability by limiting the amount of information present in strategy recommendations, what if we intentionally hide information from the deviation player?
At each information set, $\infoSet$, we now provide the deviation player with three options: (i) follow the action recommendation at information set $\infoSet$, $\pureStrat_i(\infoSet)$, sight unseen, (ii) choose a new action without ever seeing $\pureStrat_i(\infoSet)$, or (iii) observe $\pureStrat_i(\infoSet)$ and then choose an action.

If
$\Actions_* = \bigcup_{\infoSet \in \InfoSets_i} \Actions(\infoSet)$
is the union of player $i$'s action sets, then we can describe the deviation player's memory,
$g \in G_i \subseteq (\set{*} \cup \Actions_*)^{d_*}$,
as a string that begins empty and gains a character after each of player $i$'s actions.
The recommendation, $\pureStrat_i(\infoSet)$, at information set $\infoSet$ where option one or three is chosen must be revealed to the deviation player, either as a consequence of play (option one) or as a prerequisite (option three), thus resulting in the next memory state $g\pureStrat_i(\infoSet)$.
Otherwise, the next memory state is formed by appending the ``$*$'' character to indicate that $\pureStrat_i(\infoSet)$ remains hidden.
Limiting the options available to the deviation player thus limits the number of memory states that they can realize.
Given a memory state $g$, there is only one realizable child memory state at the next information set if the deviation player is allowed either option one or two, or two memory states if both options one and two are allowed.
If all three options are allowed, the number of realizable child memory states at the next information set is equal to the number of actions at the current information set plus one.

Formally, these three options are executed at each information set $\infoSet$ with an \emph{action transformation}, $\dev_{\infoSet} : \Actions(\infoSet) \to \Actions(\infoSet)$, chosen from one of three sets: (i) the singleton containing the \emph{identity transformation}, $\set{\dev^1 : a \mapsto a}$, (ii) the external transformations,
$\DevSet_{\Actions(\infoSet)}^{\EXT}$,
or (iii) the \emph{internal transformations}~\parencite{foster1999int-regret}
\[\DevSet_{\Actions(\infoSet)}^{\IN}
  = \set*{
    \dev^{a^{\TRIGGER} \to a^{\TARGET}} : a \mapsto \begin{cases}
      a^{\TARGET} &\mbox{if } a = a^{\TRIGGER}\\
      a &\mbox{o.w.}
    \end{cases}
  }_{a^{\TRIGGER}, a^{\TARGET} \in \Actions(\infoSet)}.\]
While internal transformations can only swap one action with another, there is no loss in generality because every multi-action swap can be a represented as the combination of single swaps~\parencite{Dudik09causal-dev,greenwald2003general}.
Thus, any strategy sequence that can be improved upon by a swap deviation can also be improved upon by at least one internal deviation.

A complete assignment of action transformations to each information set and realizable
memory state
represents a complete strategy for the deviation player.
We call such an assignment a \emph{behavioral deviation} in analogy with behavioral strategies and denote them as $\DevSet^{\IN}_{\InfoSets_i}$ since the behavioral deviations are a natural analog of the internal transformations in EFGs.

All previously described EFG deviation types can be represented as sets of behavioral deviations:

\textbf{\textcite{von2008efce-complexity}'s deviations.}
Any strategy that \textcite{von2008efce-complexity}'s deviation player could employ\footnote{Where the deviation player makes only single-action swaps, which, again, is a simplification made without a loss in generality~\parencite{Dudik09causal-dev,greenwald2003general}.} is an assignment of internal transformations to every information set and memory state so the set of all such behavioral deviations represents all possible deviation player strategies.
A mediated equilibrium with respect to the behavioral deviations could thus perhaps be called a ``full strategy EFCE'', though ``behavioral correlated equilibrium'' may lead to less confusion with the conventional EFCE definition.

\textbf{Causal deviations.}
An informed causal deviation is defined by trigger information set $\infoSet^{\TRIGGER}$, trigger action $a^{\TRIGGER}$, and strategy $\strat'_i$. The following behavioral deviation reproduces any such deviation: assign (i) the internal transformation $\dev^{a^{\TRIGGER} \to a^{\TARGET}}$ to the sole memory state at $\infoSet^{\TRIGGER}$, (ii) external transformations to all successors $\infoSet' \succ \infoSet^{\TRIGGER}$ where $a^{\TRIGGER}$ is in the deviation player's memory to reproduce $\strat'_i$, and (iii) identity transformations to every other information set and memory state. The analogous \emph{blind causal deviation}~\parencite{farina2020efcce} always triggers in $\infoSet^{\TRIGGER}$, which is reproduced with the same behavioral deviation except that the external transformation $\dev^{\to \strat'_i(\infoSet^{\TRIGGER})}$ is assigned to $\infoSet^{\TRIGGER}$.

\textbf{Action deviations.}
An \emph{action deviation}~\parencite{von2008efce-complexity} modifies the immediate strategy at $\infoSet^{\TRIGGER}$, $\strat_i(\infoSet^{\TRIGGER})$, only, either conditioning on $\strat_i(\infoSet^{\TRIGGER})$ (an informed action deviation) or not (a blind action deviation~\parencite{hsr2020}), so any such deviation is reproduced by assigning either an internal or external transformation to the sole memory state at $\infoSet^{\TRIGGER}$, respectively, and identity transformations elsewhere.

\textbf{Counterfactual deviations.}
A \emph{counterfactual deviation}~\parencite{hsr2020} plays to reach a given ``target'' information set, $\infoSet^{\TARGET}$, and transforms the immediate strategy there so any such deviation is reproduced by assigning (i) external transformations to all of the information sets leading up to $\infoSet^{\TARGET}$, (ii) an external or internal transformation to the sole memory state at $\infoSet^{\TARGET}$ (for the blind and informed variant, respectively), and (iii) identity transformations elsewhere.

Phrasing these deviation types as behavioral deviations allows us to identify complexity differences between these deviation types by counting the number of realizable memory states they admit.
Across all action or counterfactual deviations, there is always exactly one memory state at each information set to which a non-identity transformation is assigned.
Thus, a hindsight rational algorithm need only ensure its strategy cannot be improved by applying a single action transformation at each information set.
Under the causal deviations, in contrast, the number of memory states realizable at information set $\infoSet$ is at least the number of $\infoSet$'s predecessors
since there is at least one causal deviation that triggers at each of them and plays to $\infoSet$.
This makes causal deviations more costly to compete with and gives them strategic power, though notably not enough to subsume either the action or counterfactual deviations~\parencite{hsr2020}.
Are there sets of behavioral deviations that subsume the causal, action, and counterfactual deviations without being much more costly than the causal deviations?
 
\correctionStart{Add section to discuss observable sequential rationality.}
\section{Observable Sequential Rationality}
\label{sec:osr}

The approach that this work takes to navigate tradeoffs between strategic power and computational efficiency is to use observable sequential rationality (OSR)~\parencite{hsr2020} to elevate the strength of simple deviation types.

We overload counterfactual value to apply to an entire strategy, not just an action, as
\[
  \cfv^{\CF}_{\infoSet}(\strat_i; \strat_{-i})
    = \sum_{\substack{
      h \in \infoSet,\\
      z \in \TerminalHistories}}
        \reachProb(h; \strat_{-i})
        \underbrace{\reachProb(h, z; \strat) \utility_i(z)}_{\text{Future expected value of $\strat_i$.}}.
\]
and define the reach-probability-weighted counterfactual value as
$\cfv_{\infoSet}(\strat_i; \strat_{-i})
  = \reachProb\subex*{\arbHistory(\infoSet); \strat_i} \cfv^{\CF}_{\infoSet}(\strat_i; \strat_{-i})$.
From this value function, we get a general notion of full regret, which can be used to define OSR.
\begin{definition}
  \label{def:generalFullRegret}
  Define the \emph{full regret} of deviation $\dev \in \DevSet^{\SWAP}_{\PureStratSet_i}$ under strategy profile $\strat$ from information set $\infoSet$ as the deviation reach-probability-weighted difference in counterfactual value
  $\regret_{\infoSet}(\dev; \strat)
    = \cfv_{\infoSet}\subex*{\dev(\strat_i); \strat_{-i}} - \cfv_{\infoSet}\subex*{\dev_{\prec \infoSet}(\strat_i); \strat_{-i}}$,
  where $\dev_{\prec \infoSet}$ is the deviation that applies $\dev$ only before $\infoSet$, \ie/,
  $[\dev_{\prec \infoSet} \pureStrat](\bar{\infoSet}) = [\dev \pureStrat](\bar{\infoSet})$
  if $\bar{\infoSet} \prec \infoSet$ and $\pureStrat(\bar{\infoSet})$ otherwise.
\end{definition}
\begin{definition}
  \label{def:obs-seq-rationality}
  A recommendation distribution,
  $\recDist \in \simplex^{\abs{\PureStratSet}}$,
  is OSR for player $i$ with respect to a set of deviations,
  $\DevSet \subseteq \DevSet^{\SWAP}_{\PureStratSet_i}$,
  if the maximum benefit for every deviation,
  $\dev \in \DevSet$,
  according to the reach-probability-weighted counterfactual value from every information set,
  $\infoSet \in \InfoSets_i$,
  is non-positive,
  \begin{align*}
    \E_{\pureStrat \sim \recDist}\subblock*{
      \regret_{\infoSet}(\dev; \pureStrat)
    }
    \le 0.
  \end{align*}
\end{definition}
The hindsight analogue to \cref{def:obs-seq-rationality} follows.
\begin{definition}
  \label{def:obsSeqHindsightRationality}
  Player $i$ is \emph{observably sequentially} (\emph{OS}) \emph{hindsight rational} for deviation set $\DevSet \subseteq \DevSet^{\SWAP}_{\PureStratSet}$ if they choose a strategy $\strat_i^t$ on each round $t$ so that their average full regret vanishes at each information set $\infoSet$, \ie/,
  $\lim_{T \to \infty}
      \frac{1}{T} \sum_{t = 1}^T \regret_{\infoSet}(\dev; \strat^t) \le 0$,
  with respect to the behavior of the other players, $\tuple{ \strat^t_{-i} }_{t = 1}^T$.
  The positive part of their maximum average full regret across information sets is their \emph{OSR gap}.
\end{definition}

To show how the strength of a deviation type can be elevated with OSR, we need to define the ``single-target'' deviations derived from a given deviation set.
\begin{definition}
  The set of \emph{single-target deviations} generated from an arbitrary set of deviations $\DevSet \subseteq \DevSet^{\SWAP}_{\PureStratSet_i}$ is
  \[\DevSet_{\preceq \TARGET}
    = \set{
        \dev' \where
          \forall \pureStrat_i, \,
          [\dev' \pureStrat_i](\bar{\infoSet}) = [\dev \pureStrat_i](\bar{\infoSet}) \mbox{ if } \bar{\infoSet} \preceq \infoSet
          \mbox{ and } \pureStrat_i(\bar{\infoSet}) \mbox{ o.w.}
      }_{\dev \in \DevSet, \, \infoSet \in \InfoSets_i}.
  \]
  $\DevSet_{\preceq \TARGET}$ is the set of deviations constructed from $\DevSet$ that only deviate along a single path of information sets up to a ``target'' information set and behave identically to the input strategy at all other information sets.
  For example, the counterfactual deviations are the single-target deviations constructed from the external deviations so we can denote the set of counterfactual deviations as $\DevSet_{\PureStratSet_i, \preceq \TARGET}^{\EXT}$.
\end{definition}

The set of single-target deviations is special because it captures all of the ways that the input strategy could be modified along any sequence of information sets without including the combinations of these modifications across multiple sequences.
The set of single-target deviations can therefore be much smaller than its generating set while preserving the generating set's capacity to express different behavior modifications.

The next result formalizes the intuition that the set of single-target deviations preserves the essential expressive capacity of its generating set by proving that there is no beneficial deviation in an arbitrary set of deviations $\DevSet$ if OSR is achieved with respect to $\DevSet_{\preceq \TARGET}$.
This result is a generalization of Theorem 3 from \textcite{hsr2020arxiv} for all deviation sets.
\begin{theorem}
  \label{thm:singleTargetDevElevationWithOsr}
  If the full regret for player $i$ at each information set $\infoSet$ for the sequence of $T$ strategy profiles,
$\tuple{ \strat^t }_{t = 1}^T$,
with respect to each single-target deviation
$\dev \in \DevSet_{\preceq \TARGET}$,
is
$d_{\infoSet}(\dev)f(T) \ge 0$,
where
$d_{\infoSet}(\dev)$
is the number of non-identity action transformations
$\dev$
applies from $\infoSet$ to the end of the game, then $i$'s OSR gap with respect to
$\DevSet_{\preceq \TARGET}$
and
$\DevSet \subseteq \DevSet^{\SWAP}_{\PureStratSet_i}$
is no more than
$\abs{\InfoSets_i} f(T)$. \end{theorem}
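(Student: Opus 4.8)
The plan is to reduce the full regret of an arbitrary deviation $\dev \in \DevSet$ at an information set $\infoSet$ to a sum of full regrets of single-target deviations drawn from $\DevSet_{\preceq \TARGET}$, and then invoke the hypothesis term by term. Concretely, I would first establish the decomposition
$\regret_{\infoSet}(\dev; \strat) = \sum_{\infoSet' \succeq \infoSet} \regret_{\infoSet'}(\dev^{(\infoSet')}; \strat)$,
where the sum ranges over player $i$'s information sets weakly succeeding $\infoSet$ and $\dev^{(\infoSet')} \in \DevSet_{\preceq \TARGET}$ is the single-target deviation that applies $\dev$'s action transformations along the path up to and including $\infoSet'$ and acts as the identity everywhere else. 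Granting this identity, the theorem follows quickly: by hypothesis each summand accumulates to $d_{\infoSet'}(\dev^{(\infoSet')}) f(T)$ over the $T$ rounds, and since $\dev^{(\infoSet')}$ is the identity strictly below its target, $d_{\infoSet'}(\dev^{(\infoSet')}) \in \set{0, 1}$, equalling one exactly when $\dev$ itself applies a non-identity transformation at $\infoSet'$.

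To establish the decomposition I would telescope. Enumerate the information sets $\infoSet = \infoSet_0, \infoSet_1, \dots, \infoSet_m$ weakly succeeding $\infoSet$ in a topological order placing every information set after all of its predecessors, and define a chain of interpolating deviations $\dev^{\langle 0 \rangle}, \dots, \dev^{\langle m \rangle}$ where $\dev^{\langle 0 \rangle} = \dev_{\prec \infoSet}$ and $\dev^{\langle k \rangle}$ additionally installs $\dev$'s transformation at $\infoSet_k$. Since $\dev^{\langle m \rangle}$ agrees with $\dev$ on the subtree rooted at $\infoSet$, the sum telescopes to $\regret_{\infoSet}(\dev; \strat) = \sum_k \subblock*{\cfv_{\infoSet}(\dev^{\langle k \rangle}(\strat_i); \strat_{-i}) - \cfv_{\infoSet}(\dev^{\langle k-1 \rangle}(\strat_i); \strat_{-i})}$. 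The crux is identifying each increment with $\regret_{\infoSet_k}(\dev^{(\infoSet_k)}; \strat)$. Because $\dev^{\langle k-1 \rangle}$ and $\dev^{\langle k \rangle}$ differ only at $\infoSet_k$, terminal histories avoiding $\infoSet_k$ cancel in the difference; and using perfect recall, player $i$'s reach to $\infoSet_k$ depends only on the transformations along the path from $\infoSet$ to $\infoSet_k$, which the topological order guarantees are already installed and identical in both interpolants, while the continuation below $\infoSet_k$ is the identity in both. Factoring $\cfv_{\infoSet}$ through the reach probability $\reachProb(\arbHistory(\infoSet_k); \dev_{\prec \infoSet_k}(\strat_i))$ then shows the increment coincides exactly with the reach-weighted single-target full regret $\regret_{\infoSet_k}(\dev^{(\infoSet_k)}; \strat)$.

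With the decomposition in hand, summing over rounds and applying the hypothesis gives $\sum_{t = 1}^T \regret_{\infoSet}(\dev; \strat^t) = \sum_{\infoSet' \succeq \infoSet} d_{\infoSet'}(\dev^{(\infoSet')}) f(T) = d_{\infoSet}(\dev) f(T)$, where the last step simply counts the information sets at or below $\infoSet$ at which $\dev$ is non-identity. Since $d_{\infoSet}(\dev) \le \abs{\InfoSets_i}$ and $f(T) \ge 0$, taking the positive part and maximizing over $\dev \in \DevSet$ and over $\infoSet \in \InfoSets_i$ bounds the OSR gap by $\abs{\InfoSets_i} f(T)$; the case of deviations already in $\DevSet_{\preceq \TARGET}$ is covered directly by the hypothesis together with $d_{\infoSet}(\dev) \le \abs{\InfoSets_i}$. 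The main obstacle is the increment-identification step: carefully tracking the reach-probability weights so that each telescoped difference of $\cfv_{\infoSet}$ values matches $\regret_{\infoSet_k}(\dev^{(\infoSet_k)}; \strat)$, which relies on off-path transformations and not-yet-installed descendant transformations leaving both the reach to $\infoSet_k$ and the continuation value through $\infoSet_k$ unchanged — precisely what the topological ordering secures.
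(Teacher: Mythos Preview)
Your approach is correct and closely parallels the paper's, though the packaging differs. Both arguments rest on the same identity
\[
  \regret_{\infoSet}(\dev; \strat) = \sum_{\infoSet' \succeq \infoSet} \regret_{\infoSet'}(\dev_{\preceq \infoSet'}; \strat),
\]
with $\dev_{\preceq \infoSet'} = \dev^{(\infoSet')} \in \DevSet_{\preceq \TARGET}$. The paper derives this implicitly by applying its one-step decomposition (\cref{lem:regretDecomposition}) recursively inside an induction on the height of $\infoSet$, bounding the ``immediate'' term $\regret_{\infoSet}(\dev_{\preceq \infoSet})$ by $f(T)$ at each level and pushing the rest to the children. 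You instead unroll the recursion in one shot via a telescoping sum over a topological ordering, then bound each term directly by the hypothesis. Your route is arguably cleaner: it yields the sharper intermediate bound $\sum_{\infoSet' \succeq \infoSet} d_{\infoSet'}(\dev^{(\infoSet')}) f(T)$, which counts exactly the information sets below $\infoSet$ where $\dev$ is non-identity, before relaxing to $\abs{\InfoSets_i} f(T)$; the paper's induction passes through the looser estimate $f(T) + (\text{\# children})(d-1)f(T)$. Two small slips: the hypothesis gives an \emph{upper bound}, so your chain of equalities $\sum_t \regret_{\infoSet}(\dev; \strat^t) = \sum_{\infoSet'} d_{\infoSet'}(\dev^{(\infoSet')}) f(T) = d_{\infoSet}(\dev) f(T)$ should read $\le$ at the first step; and $d_{\infoSet}(\dev)$ is only defined in the statement for single-target deviations, so for general $\dev \in \DevSet$ you should either state explicitly that you are extending the definition or simply bound $\sum_{\infoSet' \succeq \infoSet} d_{\infoSet'}(\dev^{(\infoSet')}) \le \abs{\set{\infoSet' \succeq \infoSet}} \le \abs{\InfoSets_i}$ without naming the intermediate quantity.
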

\correctionEnd

\section{Partial Sequence Deviations}

Notice that the causal, action, and counterfactual deviations are composed of contiguous blocks of the same type of action transformation.
We can therefore understand these deviations as having distinct phases.
The \emph{correlation phase} is an initial sequence of identity transformations, where ``correlation'' references the fact that the identity transformation preserves any correlation that player $i$'s behavior has with those of the other players.
There are causal and action deviations with a correlation phase, but no counterfactual deviation exhibits such behavior.
All of these deviation types permit a \emph{de-correlation phase} that modifies the input strategy with external transformations, breaking correlation.
Finally, the \emph{re-correlation phase} is where identity transformations follow a de-correlation phase, but it is only present in action and counterfactual deviations.
The informed variant of each deviation type separates these phases with a single internal transformation, which both modifies the strategy and preserves correlation.
The action deviation type is the only one that permits all three phases, but the de-correlation phase is limited to a single action transformation.

Why not permit all three phases at arbitrary lengths to subsume the causal, action, and counterfactual deviations?
We now introduce four types of \emph{partial sequence deviations} based on exactly this idea, where each phase spans a ``partial sequence'' through the game.

The \emph{blind partial sequence} (\emph{BPS}) deviation has all three phases and lacks any internal transformations.
\correctionStart{Correct discussion.}
Just as there is always a blind counterfactual deviations that reproduces the behavior of an external deviation along a single path of information sets, there is always a BPS deviation that reproduces the behavior of a blind causal deviation along a single path of information sets.
More precisely, the set of BPS deviations is the set of single-target deviations generated from the set of blind causal deviations.
Just as there are exponentially fewer counterfactual deviations than external deviations, there are exponentially fewer BPS deviations than blind causal deviations.
There are
$d_* n_{\Actions} \abs{\InfoSets_i}$
BPS deviations compared with
$\bigO{n_{\Actions}^{\abs{\InfoSets_i}}\abs{\InfoSets_i}}$
blind causal deviations.
Combined with OSR via \cref{thm:singleTargetDevElevationWithOsr}, the BPS deviations capture the same strategic power with an exponential reduction in complexity.
Even better, the set of BPS deviations includes the sets of blind action and blind counterfactual deviations.
The empirical distribution of play of learners that are hindsight rational for BPS deviations thus converge towards what we could call a \emph{BPS correlated equilibrium}.
An OS BPS correlated equilibrium (OS-BPSCE) is in the intersection of the OS versions of three equilibrium sets: extensive-form coarse-correlated equilibrium (EFCCE)~\parencite{farina2020efcce}, agent-form coarse-correlated equilibrium (AFCCE)~\parencite{hsr2020}, and counterfactual coarse-correlated equilibrium (CFCCE)~\parencite{hsr2020}.
\correctionEnd

\correctionStart{Add a note to connect back to single-target deviation terminology.}
In general, re-correlation is strategically useful~\parencite{hsr2020} and adding it to a deviation type (transforming it into a single-target deviation type) \emph{decreases} its complexity!
\correctionEnd
\correctionStart{Reword and remove for consistency with above changes.}
While this observation may be new in its generality, \textcite{cfr} implicitly uses this property of deviations in EFGs and specifically the fact that the set of blind counterfactual deviations is the set of single-target deviations generated from the set of external deviations.
\correctionEnd

There are three versions of informed partial sequence deviations due to the asymmetry between informed causal and informed counterfactual deviations.
A \emph{causal partial sequence} (\emph{CSPS}) deviation uses an internal transformation at the end of the correlation phase while a \emph{counterfactual partial sequence} (\emph{CFPS}) deviation uses an internal transformation at the start of the re-correlation phase.
A \emph{twice informed partial sequence} (\emph{TIPS}) deviation uses internal transformations at both positions, making it the strongest of our partial sequence deviation types.
\correctionStart{Add qualifications.}
The set of CSPS deviations is the set of single-target deviations generated from the set of informed causal deviations, and therefore subsumes the informed causal deviations, when used with OSR, while being exponentially smaller.
TIPS achieves our initial goal as it subsumes the informed causal, informed action, and informed counterfactual deviations at the cost of an $n_{\Actions}$ factor compared to CSPS or CFPS, when used with OSR.
Each type of informed partial sequence deviation corresponds to a new equilibrium concept and a new OS equilibrium concept in the intersection of previously studied equilibrium concepts.
\correctionEnd

\correctionStart{Fix diagram.}
\def\efrExtraSection/{E}
Table \efrExtraSection/.1 in Appendix \efrExtraSection/ gives a formal definition of each deviation type derived from behavioral deviations and \cref{fig:dev-diagram} gives a visualization of each type along with their relationships.
\correctionEnd
\correctionStart{Add a qualification to the behavioral deviation label in referenced table and remove now unnecessary footnote.}
The number of deviations contained within each deviation type is listed in \cref{tab:numRegrets}.
\correctionEnd

\tikzstyle{myAction} = [very thick]
\tikzstyle{mySeqPath} = [seqPath, very thick]

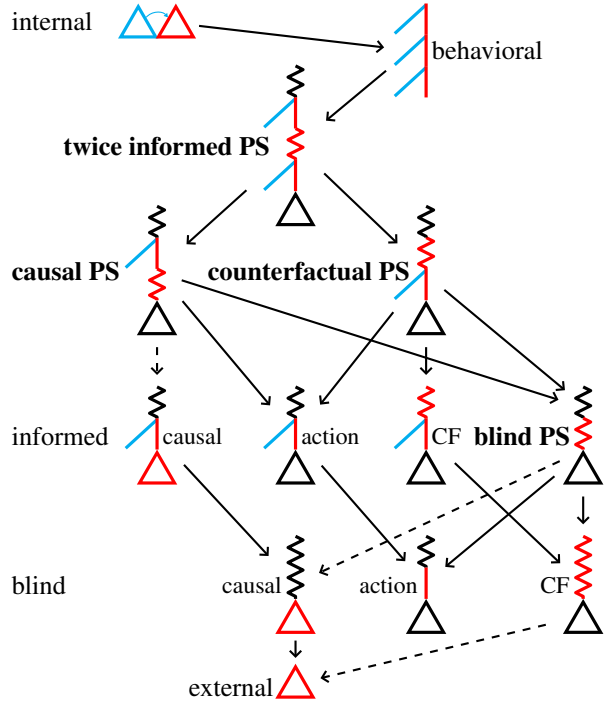
\begin{figure}[tb]
\centering
\begin{tikzpicture}[inner sep=0]
  \tikzstyle{arrow} = [thick, >=Straight Barb, -{>[scale=0.7]}]
  \tikzstyle{column 3} = [column sep=3em]
  \tikzstyle{osrArrow} = [arrow, dashed]

\matrix(fig) [column sep=0, row sep=\cmToEmFactor*0.5em] {
    \coordinate(inRoot);
    \inDevDiagram{inRoot};
    &
    &
    \coordinate(bhvRoot);
    \behavDevDiagram{bhvRoot};
    & \\[-2.5em]

    &
\coordinate(tipsRoot);
    \tipsDevDiagram{tipsRoot};
    &
    & \\[-2em]

\coordinate(cspsRoot);
    \cspsDevDiagram{cspsRoot};
    &
    &
\coordinate(cfpsRoot);
    \cfpsDevDiagram{cfpsRoot};
    & \\[\cmToEmFactor*0.3em]

\coordinate(iCauseRoot);
    \inCsDevDiagram{iCauseRoot};
    \node(label)
      [anchor=west, xshift=0.2em]
      at (iCauseRoot/center)
      {\small causal};
    &

\coordinate(iActRoot);
    \inActDevDiagram{iActRoot};
    \node(label)
      [anchor=west, xshift=0.2em]
      at (iActRoot/center)
      {\small action};
    &

\coordinate(iCfRoot);
    \inCfDevDiagram{iCfRoot};
    \node(label)
      [anchor=west, xshift=0.2em]
      at (iCfRoot/center)
      {\small CF};
    &
    \coordinate(bpsRoot);
    \bpsDevDiagram{bpsRoot}; \\[\cmToEmFactor*0.3em]

    &
\coordinate(bCauseRoot);
    \node(bCauseRoot/devTree) [tree, dev, anchor=north] at ($(bCauseRoot.south)+(0,-\seqLen-\actLen)$) {};
    \draw[mySeqPath, follow] (bCauseRoot) -- (bCauseRoot/devTree.north);

\coordinate(bCauseRoot/south) at (bCauseRoot/devTree.south);
    \coordinate(bCauseRoot/center) at ($(bCauseRoot)!0.5!(bCauseRoot/south)$);

    \node(label)
      [anchor=east, xshift=-0.5em]
      at (bCauseRoot/center)
      {\small causal};
    &
\coordinate(bActRoot);
    \coordinate(bActRoot/seq)
      at ($(bActRoot.north)+(0,-\seqLen)$);
    \draw[mySeqPath, follow] (bActRoot.north) -- (bActRoot/seq.north);
    \node(bActRoot/followTree)
      [tree, follow, anchor=north]
      at ($(bActRoot/seq.south)+(0,-\actLen)$) {};
    \draw[myAction, dev] (bActRoot/seq.north) -- (bActRoot/followTree.north);

\coordinate(bActRoot/south) at (bActRoot/followTree.south);
    \coordinate(bActRoot/center) at ($(bActRoot)!0.5!(bActRoot/south)$);

    \node(label)
      [anchor=east, xshift=-0.3em]
      at (bActRoot/center)
      {\small action};
    &
    \coordinate(bCfRoot);
    \node(bCfRoot/followTree)
      [tree, follow, anchor=north]
      at ($(bCfRoot.south)+(0,-\seqLen-\actLen)$) {};
    \draw[mySeqPath, dev] (bCfRoot.north) -- (bCfRoot/followTree.north);

\coordinate(bCfRoot/south) at (bCfRoot/followTree.south);
    \coordinate(bCfRoot/center) at ($(bCfRoot)!0.5!(bCfRoot/south)$);

    \node(label)
      [anchor=east, xshift=-0.5em]
      at (bCfRoot/center)
      {\small CF}; \\

    &
    \node(exRoot) [tree, dev] {};
    & \\
  };
  \node(inLabel)
    [anchor=east, xshift=-1.2em]
    at (inRoot/west)
    {internal};
  \node(bhvLabel)
    [anchor=west, xshift=0.2em]
    at (bhvRoot/east)
    {behavioral};
  \node(informedLabel -| inLabel.west)
    [anchor=west]
    at (iCauseRoot/center -| inLabel.west)
    {informed};
  \node(blindLabel)
    [anchor=west]
    at (bCauseRoot/center -| inLabel.west)
    {blind};
  \node(exLabel)
    [xshift=-0.3em, anchor=east]
    at (exRoot.west)
    {external};

  \node(tipsLabel)
    [anchor=east, xshift=-\cmToEmFactor*0.4em]
    at (tipsRoot/center)
    {\textbxf{twice informed PS}};
  \node(cspsLabel)
    [anchor=west]
    at (cspsRoot/center -| inLabel.west)
    {\textbxf{causal PS}};
  \node(cfpsLabel)
    [anchor=east, xshift=-0.6em]
    at (cfpsRoot/center)
    {\textbxf{counterfactual PS}};
  \node(bpsLabel)
    [anchor=east, xshift=-\cmToEmFactor*0.2em]
    at (bpsRoot/center)
    {\textbxf{blind PS}};

  \draw[arrow]
    ($(inRoot/center)!1.6em!(bhvRoot/center)$)
    --
    ($(bhvRoot/center)!1em!(inRoot/center)$);
  \draw[arrow]
    ($(bhvRoot/center)!1.2em!(tipsRoot/center)$)
    --
    ($(tipsRoot/center)!1.5em!(bhvRoot/center)$);
  \draw[arrow]
    ($(tipsRoot/center)!2.5em!(cspsRoot/center)$)
    --
    ($(cspsRoot/center)!1.5em!(tipsRoot/center)$);
  \draw[arrow]
    ($(tipsRoot/center)!1.5em!(cfpsRoot/center)$)
    --
    ($(cfpsRoot/center)!1.5em!(tipsRoot/center)$);

  \draw[osrArrow]
    ($(cspsRoot/center)!2.3em+\seqLen/2!(iCauseRoot/center)$)
    --
    ($(iCauseRoot/center)!2.3em!(cspsRoot/center)$);
  \draw[arrow]
    ($(cfpsRoot/center)!2.3em+\seqLen/2!(iCfRoot/center)$)
    --
    ($(iCfRoot/center)!2.3em!(cfpsRoot/center)$);

  \draw[arrow]
    ($(cspsRoot/center)!1.5em!(iActRoot/center)$)
    --
    ($(iActRoot/center)!1.5em!(cspsRoot/center)$);
  \draw[arrow]
    ($(cspsRoot/center)!1em!(bpsRoot/center)$)
    --
    ($(bpsRoot/center)!1em!(cspsRoot/center)+(0, 1em)$);

  \draw[arrow]
    ($(cfpsRoot/center)!2em!(iActRoot/center)$)
    --
    ($(iActRoot/center)!1.5em!(cfpsRoot/center)$);
  \draw[arrow]
    ($(cfpsRoot/center)!1em!(bpsRoot/center)$)
    --
    ($(bpsRoot/center)!1em!(cfpsRoot/center)+(0, 1em)$);

  \draw[arrow]
    ($(iCauseRoot/center)!1.5em!(bCauseRoot/center)$)
    --
    ($(bCauseRoot/center)!1.5em!(iCauseRoot/center)$);

  \draw[arrow]
    ($(iActRoot/center)!1.5em!(bActRoot/center)$)
    --
    ($(bActRoot/center)!1.2em!(iActRoot/center)$);
  \draw[arrow]
    ($(iCfRoot/center)!1.5em!(bCfRoot/center)$)
    --
    ($(bCfRoot/center)!1.2em!(iCfRoot/center)$);

  \draw[osrArrow]
    ($(bpsRoot/center)!1em!(bCauseRoot/center)-(0, 0.5em)$)
    --
    ($(bCauseRoot/center)!1em!(bpsRoot/center)$);
  \draw[arrow]
    ($(bpsRoot/center)!1.5em!(bActRoot/center)-(0, 0.5em)$)
    --
    ($(bActRoot/center)!1em!(bpsRoot/center)$);

  \draw[arrow]
    ($(bpsRoot/center)!2.3em!(bCfRoot/center)$)
    --
    ($(bCfRoot/center)!2.3em!(bpsRoot/center)$);

  \draw[arrow]
    ($(bCauseRoot/center)!2.1em!(exRoot)$)
    --
    ($(exRoot)!1.1em!(bCauseRoot/center)$);
  \draw[osrArrow]
    ($(bCfRoot/center)!1.5em!(exRoot)-(0, 1em)$)
    --
    ($(exRoot)!1em!(bCfRoot/center)$);
\end{tikzpicture}
    \caption{A summary of the deviation landscape in EFGs.
    Each pictogram is an abstract representation of a prototypical deviation.
    Games play out from top to bottom.
    Straight lines represent action transformations, zigzags are transformation sequences, and triangles are transformations of entire decision trees.
    Identity transformations are colored black; internal transformations have a cyan component representing the trigger action or strategy and a red component representing the target action or strategy; and external transformations only have a red component.
    Arrows denote ordering from a stronger to a weaker deviation type (and therefore a subset to superset equilibrium relationship), the dashed arrow denotes that this relationship holds only under OSR.}
  \label{fig:dev-diagram}
\end{figure}
 \def\numActions{n_{\Actions}}
\def\DevSetInMinusI{\DevSet^{\INT}_{\Actions(\infoSet)} \setminus \set{\dev^1}}
\def\DevSetI{\DevSet^{\INT}_{\Actions(\infoSet)}}
\def\DevSetE{\DevSet^{\EXT}_{\Actions(\infoSet)}}
\def\numDevInMinusI{\numActions^2 - \numActions}
\def\lightrule{\specialrule{0.0001pt}{1mm}{1mm}}
\begin{table}[tb]
  \centering
  \begin{threeparttable}
  \caption{A rough accounting of (i) realizable memory states, (ii) action transformations, and (iii) the total number of deviations showing dominant terms.
    Columns (i) and (ii) are with respect to a single information set.}
  \label{tab:numRegrets}
  \small
  \begin{tabularx}{\linewidth}{Xlll}
    \toprule
    type
      & \stretchBox{(i)}
      & \stretchBox{(ii)}
      & \stretchBox{(iii)}\\
    \midrule
    internal
      & N/A
      & N/A
      & $\numActions^{2 \abs{\InfoSets_i}}$\\[0.3em]
    single-target behavioral
      & $\numActions^{d_*}$
      & $\numActions^2$
      & $\numActions^{d_* + 2} \abs{\InfoSets_i}$\\
    \lightrule
    TIPS
      & $d_* \numActions$
      & $\numActions^2$
      & $d_* \numActions^3 \abs{\InfoSets_i}$\\[0.3em]
    CSPS
      & $d_* \numActions$
      & $\numActions$\tnote{$\dagger$}
      & $d_* \numActions^2 \abs{\InfoSets_i}$\\[0.3em]
    CFPS
      & $d_*$
      & $\numActions^2$
      & $d_* \numActions^2 \abs{\InfoSets_i}$\\[0.3em]
    BPS
      & $d_*$
      & $\numActions$
      & $d_* \numActions \abs{\InfoSets_i}$\\
    \lightrule
    informed causal
      & $d_*$
      & N/A
      & $\numActions^{\abs{\InfoSets_i} + 1} \abs{\InfoSets_i}$\\[0.3em]
    informed action
      & $1$
      & $\numActions^2$
      & $\numActions^2 \abs{\InfoSets_i}$\\[0.3em]
    informed CF
      & $1$
      & $\numActions^2$
      & $\numActions^2 \abs{\InfoSets_i}$\\
    \lightrule
    blind causal
      & $d_*$
      & N/A
      & $\numActions^{\abs{\InfoSets_i}} \abs{\InfoSets_i}$\\[0.3em]
    blind action
      & $1$
      & $\numActions$
      & $\numActions \abs{\InfoSets_i}$\\[0.3em]
    blind CF
      & $1$
      & $\numActions$
      & $\numActions \abs{\InfoSets_i}$\\
    \lightrule
    external
      & N/A
      & N/A
      & $\numActions^{\abs{\InfoSets_i}}$\\
\bottomrule
  \end{tabularx}
  \begin{tablenotes}
    \item[$\dagger$] One memory state at each information set is associated with the set of internal transformations which contains $\bigO{\numActions^2}$ transformations, but this is dominated by the number of external transformations associated with every other memory state in non-root information sets.
  \end{tablenotes}
  \end{threeparttable}
\end{table}

\section{Extensive-Form Regret Minimization}
\label{sec:efr}
\correctionStart{Add OS qualification. The previous statement is still true but this statement is stronger.}
We now develop \emph{extensive-form regret minimization} (\emph{EFR}), a general and extensible algorithm that is OS hindsight rational for any given set of behavioral deviations.
\correctionEnd
Its computational requirements and regret bound scale closely with the number of realizable memory states.

\correctionStart{Move content up and add discussion of upcoming material.}
CFR variants for different deviation types have already been derived by modifying the counterfactual values used to train CFR's information-set-local learners.
\textcite{celli2020noregret} define \emph{laminar subtree trigger regret} (\emph{immediate trigger regret} in our terminology), as the regret under counterfactual values weighted by the probability that player $i$ plays to a given predecessor and plays a particular action there.
Their \emph{ICFR} modification of \emph{pure CFR}~\parencite{gibson2014phd}\footnote{Pure CFR purifies the learner's strategy on each round by sampling actions at each information set.} is hindsight rational for informed causal deviations\footnote{Actually, ICFR is OS hindsight rational for CSPS deviations as well, but of course this was previously not understood.}.
\textcite{hsr2020} also observe that simply weighting the counterfactual regret at $\infoSet$ by the probability that player $i$ plays to $\infoSet$ modifies CFR so that it is hindsight rational for blind action deviations.

We derive EFR by generalizing these ideas to weighted counterfactual regrets with dynamic weights determined by deviation reach probabilities on each round.
Before we can construct EFR, we need to discuss regret minimization with time selection functions from the normal-form perspective.
Along the way, we will derive a new algorithm, time selection regret matching, to minimize regret in this scenario, which will allow us to use regret matching in EFR in much the same way as it is typically used in CFR.
\correctionEnd

\subsection{Time Selection}
The key insight that leads to EFR is that each of the deviation player's memory states corresponds to a different weighting function, which reduces the problem of minimizing immediate regret with respect to all weightings simultaneously to \emph{time selection regret minimization}~\parencite{blum2007time-selection}.
In a time selection problem, there is a finite set of $M(\dev)$ time selection functions,
$\TSelectionSet(\dev) = \set{t \mapsto \tSelectionFn^t_j \in [0, 1]}_{j = 1}^{M(\dev)}$,
for each deviation $\dev \in \DevSet \subseteq \DevSet^{\SWAP}_{\PureStratSet_i}$ that maps the round $t$ to a weight.
The regret with respect to deviation $\dev$ and time selection function $\tSelectionFn \in \TSelectionSet(\dev)$ after $T$ rounds is
$\regret^{1:T}(\dev, \tSelectionFn) \as \sum_{t = 1}^T \tSelectionFn^t \regret(\dev; \strat^t)$.
The goal is to ensure that each of these regrets grow sublinearly, which can be accomplished by simply treating each $(\dev, \tSelectionFn)$-pair as a separate transformation (here called an \emph{expert}) and applying a no-regret algorithm\footnote{\
  The \emph{regret matching++} algorithm~\parencite{kash2019combining} could ostensibly be used to minimize regret with respect to all time selection functions simultaneously without using more than $\abs{\PureStratSet_i}$ computation and memory, however, there is an error in the proof of the regret bound.
  In Appendix F, we give an example where regret matching++ suffers linear regret and we show that no algorithm can have a sublinear bound on the sum of positive instantaneous regrets.}.

We introduce a $(\DevSet, f)$-regret matching~\parencite{Hart00,greenwald2006bounds} algorithm for the time selection setting with a regret bound that depends on the size of the largest time selection function set, $M^* = \max_{\dev \in \DevSet} M(\dev)$.\footnote{While we only present the bound for the \emph{rectified linear unit} (\emph{ReLU}) link function, $\cdot^+: x \mapsto \max\set{0, x}$, the arguments involved in proving \cref{thm:rm-for-heterogeneous-ts} apply to any link function; only the final bound would change.}
\begin{corollary}
  \label{thm:rm-for-heterogeneous-ts}
  Given deviation set $\DevSet \subseteq \DevSet^{\SWAP}_{\PureStratSet_i}$ and finite time selection sets
$\TSelectionSet(\dev) = \set{\tSelectionFn_j \in [0, 1]^T}_{j = 1}^{M(\dev)}$ for each deviation $\dev \in \DevSet$,
$(\DevSet, \cdot^+)$-regret matching chooses a strategy on each round $1 \le t \le T$ as the fixed point of
$\rmOperator^t: \strat_i \mapsto
      \nicefrac{1}{z^t}
      \sum_{\dev \in \DevSet} \dev(\strat_i) y^t_{\dev}$
or an arbitrary strategy when $z^t = 0$, where
\emph{link outputs} are generated from exact regrets
$y^t_{\dev}
  = \sum_{\tSelectionFn \in \TSelectionSet(\dev)}
    \tSelectionFn^t (
      \regret^{1:t - 1}(\dev, \tSelectionFn)
    )^+$
and $z^t = \sum_{\dev \in \DevSet} y^t_{\dev}$.
This algorithm ensures that
$\regret^{1:T}(\dev, \tSelectionFn)
  \le
    2 \maxReward \sqrt{
      M^* \maxActivation(\DevSet) T
    }$
for any deviation $\dev$ and time selection function $\tSelectionFn$, where
$\maxActivation(\DevSet) = \max_{\pureStrat_i \in \PureStratSet_i} \sum_{\dev \in \DevSet} \ind{\dev(\pureStrat_i) \ne \pureStrat_i}$
is the maximal activation of $\DevSet$~\parencite{greenwald2006bounds}.
 \end{corollary}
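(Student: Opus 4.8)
The plan is to treat each deviation--time-selection pair $(\dev, \tSelectionFn)$ with $\dev \in \DevSet$ and $\tSelectionFn \in \TSelectionSet(\dev)$ as a single \emph{expert} whose instantaneous regret on round $t$ is $r^t(\dev, \tSelectionFn) \as \tSelectionFn^t \regret(\dev; \strat^t)$, so that $\regret^{1:T}(\dev, \tSelectionFn) = \sum_{t = 1}^T r^t(\dev, \tSelectionFn)$, and then to run the standard quadratic-potential argument for regret matching~\parencite{Hart00,greenwald2006bounds} over this expert set. First I would establish the Blackwell-style condition that the fixed point enforces. Since pushforward is linear in $\strat_i$ and $\utility_i(\cdot, \strat_{-i}^t)$ is linear in its first argument, applying $\utility_i(\cdot, \strat_{-i}^t)$ to the fixed-point identity $z^t \strat_i^t = \sum_{\dev} \dev(\strat_i^t) y^t_{\dev}$ and subtracting $z^t \utility_i(\strat^t)$ yields $\sum_{\dev} y^t_{\dev} \regret(\dev; \strat^t) = 0$; expanding $y^t_{\dev}$ gives
\[
  \sum_{\dev \in \DevSet} \sum_{\tSelectionFn \in \TSelectionSet(\dev)}
    \big(\regret^{1:t - 1}(\dev, \tSelectionFn)\big)^+ \, r^t(\dev, \tSelectionFn) = 0.
\]
In the degenerate case $z^t = 0$, every summand $\tSelectionFn^t \big(\regret^{1:t-1}(\dev, \tSelectionFn)\big)^+$ is already zero, so this identity holds for any arbitrarily chosen $\strat_i^t$; existence of the fixed point when $z^t \ne 0$ follows from Brouwer applied to the continuous self-map $\rmOperator^t$ of $\StrategySet_i$.

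Next I would introduce the potential $\Psi^t \as \sum_{\dev, \tSelectionFn} \big(\big(\regret^{1:t}(\dev, \tSelectionFn)\big)^+\big)^2$ with $\Psi^0 = 0$. Applying the elementary inequality $\big((x + y)^+\big)^2 \le (x^+)^2 + 2 x^+ y + y^2$ termwise with $x = \regret^{1:t-1}(\dev, \tSelectionFn)$ and $y = r^t(\dev, \tSelectionFn)$, then summing over all experts, the cross term is exactly twice the left side of the displayed Blackwell condition and hence vanishes, leaving the per-round inequality
\[
  \Psi^t \le \Psi^{t - 1} + \sum_{\dev, \tSelectionFn} \big(r^t(\dev, \tSelectionFn)\big)^2.
\]

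The crux is bounding the second-order term by $4 \maxReward^2 M^* \maxActivation(\DevSet)$. Because each $\tSelectionFn^t \in [0, 1]$ and there are at most $M^*$ selection functions per deviation, $\sum_{\dev, \tSelectionFn} \big(r^t(\dev, \tSelectionFn)\big)^2 \le M^* \sum_{\dev} \big(\regret(\dev; \strat^t)\big)^2$. To surface the activation, I would decompose the regret over pure strategies: by linearity $\regret(\dev; \strat^t) = \sum_{\pureStrat_i} \strat_i^t(\pureStrat_i)\big[\utility_i(\dev(\pureStrat_i), \strat_{-i}^t) - \utility_i(\pureStrat_i, \strat_{-i}^t)\big]$, where only terms with $\dev(\pureStrat_i) \ne \pureStrat_i$ survive and each bracket has magnitude at most $2 \maxReward$. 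Hence $\big(\regret(\dev; \strat^t)\big)^2 \le 4 \maxReward^2 \big(\sum_{\pureStrat_i : \dev(\pureStrat_i) \ne \pureStrat_i} \strat_i^t(\pureStrat_i)\big)^2 \le 4 \maxReward^2 \sum_{\pureStrat_i : \dev(\pureStrat_i) \ne \pureStrat_i} \strat_i^t(\pureStrat_i)$; summing over $\dev$, swapping the order of summation, and using $\sum_{\dev} \ind{\dev(\pureStrat_i) \ne \pureStrat_i} \le \maxActivation(\DevSet)$ with $\sum_{\pureStrat_i} \strat_i^t(\pureStrat_i) = 1$ gives $\sum_{\dev}\big(\regret(\dev; \strat^t)\big)^2 \le 4 \maxReward^2 \maxActivation(\DevSet)$.

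Finally, telescoping $\Psi^t - \Psi^{t-1} \le 4 \maxReward^2 M^* \maxActivation(\DevSet)$ over $t = 1, \dots, T$ gives $\Psi^T \le 4 \maxReward^2 M^* \maxActivation(\DevSet) T$. Since $\Psi^T$ dominates $\big(\big(\regret^{1:T}(\dev, \tSelectionFn)\big)^+\big)^2$ for any single expert, taking square roots and using $\regret^{1:T}(\dev, \tSelectionFn) \le \big(\regret^{1:T}(\dev, \tSelectionFn)\big)^+$ yields the stated bound $2 \maxReward \sqrt{M^* \maxActivation(\DevSet) T}$. I expect the activation step to be the main obstacle: obtaining the factor $\maxActivation(\DevSet)$ rather than a naive $\abs{\DevSet}$ hinges on the pure-strategy decomposition and the summation swap, with the $M^*$ factor arising more routinely from bounding $\big(\tSelectionFn^t\big)^2 \le 1$ across the at most $M^*$ selection functions attached to each deviation.
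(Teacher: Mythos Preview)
Your proof is correct and follows essentially the same quadratic-potential / Blackwell-condition argument as the paper. The paper first proves a more general optimistic version (with arbitrary predictions $m^t$) and then specializes to $m^t = 0$, whereas you do the prediction-free case directly; you also spell out the pure-strategy decomposition that yields the $\maxActivation(\DevSet)$ factor, which the paper simply cites from \textcite{greenwald2006bounds}, but the underlying mechanics are identical.
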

\def\rmExtraSection/{D}
This result is a consequence of two more general theorems presented in Appendix \rmExtraSection/, one that allows regret approximations \ala/ \textcite{dorazio2019frcfr} (motivating the use of function approximation) and another that allows predictions of future regret, \ie/, optimistic regret matching~\parencite{dorazioOptLH}.
Appendix \rmExtraSection/ also contains analogous results for the \emph{\rmPlus/}~\parencite{cfrPlus,Tammelin15CFRPlus} modification of regret matching.
 
\subsection{Memory Probabilities}
Just as we use the reach probability function to capture the frequency that a mixed strategy plays to reach a particular history, we define a \emph{memory probability function},
$\tSelectionFn_{\dev}$,
to capture the frequency that the deviation player, playing behavioral deviation $\dev$, reaches information set $\infoSet$ with memory state $g$ given mixed recommendations, $\strat_i$.
It is the product of the probabilities that $\strat_i$ plays each action in $g$, \ie/,
$\tSelectionFn_{\dev}(\infoSet, \emptyHistory; \strat_i) = 1$,
$\tSelectionFn_{\dev}(\infoSet', ga; \strat_i) = \tSelectionFn_{\dev}(\infoSet, g; \strat_i) \strat_i(a \given \infoSet)$,
and
$\tSelectionFn_{\dev}(\infoSet', g*; \strat_i) = \tSelectionFn_{\dev}(\infoSet, g; \strat_i)$, for all $\infoSet' \succ \infoSet$.
Under pure recommendations, the memory probability function expresses realizability.
We overload
\[G_i(\infoSet, \dev)
  = \set{g \in G_i \where \exists \pureStrat_i \in \PureStratSet_i, \tSelectionFn_{\dev}(\infoSet, g; \pureStrat_i) = 1}\]
as the set of memory states that $\dev$ can realize in $\infoSet$.

\subsection{EFR}
We define the immediate regret of behavioral deviation $\dev$ at information set $\infoSet$ and memory state $g$ as the immediate counterfactual regret for not applying action transformation $\dev_{\infoSet, g}$ weighted by the probability of $g$, \ie/,
$\tSelectionFn_{\dev}(\infoSet, g; \strat_i)
  \regret^{\CF}_{\infoSet}(\dev_{\infoSet, g}; \strat)$,
where we generalize counterfactual regret to action transformations as
$\regret^{\CF}_{\infoSet}(\dev_{\infoSet, g}; \strat)
  = \E_{a \sim \dev_{\infoSet, g}(\strat_i(\infoSet))}\subblock{ \regret^{\CF}_{\infoSet}(a; \strat) }$.
By treating
$t \mapsto \tSelectionFn_{\dev}(\infoSet, g; \strat^t_i)$
for each memory state $g$ in $\infoSet$ as a time selection function,
we reduce the problem of minimizing immediate regret to time selection regret minimization.

EFR is given a set of behavioral deviations, $\DevSet \subseteq \DevSet^{\IN}_{\InfoSets_i}$, and gathers all transformations at information set $\infoSet$ across realizable memory states into
$\DevSet_{\infoSet}
  = \set{\dev_{\infoSet, g}}_{
      \dev \in \DevSet, \,
      g \in G_i(\infoSet, \dev)}.$
Each action transformation, $\dev_{\infoSet} \in \DevSet_{\infoSet}$ is associated with the set of time selection functions
\[\TSelectionSet_{\infoSet}^{\DevSet}(\dev_{\infoSet}) = \set{
  t \mapsto
    \tSelectionFn_{\dev'}(\infoSet, g; \strat^t_i)
}_{\dev' \in \DevSet, \, g \in G_i(\infoSet, \dev'), \, \dev'_{\infoSet, g} = \dev_{\infoSet}}\]
derived from memory probabilities.
EFR then chooses its immediate strategy at $\infoSet$ according to a time selection regret minimizer.
Applying the same procedure at each information set, EFR minimizes immediate regret at all information sets and memory states simultaneously.

Hindsight rationality requires us to relate immediate regret to full regret.
The full regret of behavioral deviation $\dev$ at information set $\infoSet$ and memory state $g$, $\regret_{\infoSet, g}(\dev; \strat)$, is the expected value achieved by $\dev(\strat_i)$ from $\infoSet$ and $g$ minus that of $\strat_i$, weighted by the probability of $g$.
The full regret at the start of the game on any given round is then exactly the total performance difference between $\dev$ and the learner.
The full regret decomposes across successive information sets and memory states, \ie/,
\[\regret_{\infoSet, g}(\dev; \strat)
  =
    \underbrace{\regret_{\infoSet}(\dev_{\preceq \infoSet, \sqsubseteq g})}_{\text{Immediate regret.}}
    + \sum_{\substack{
        a' \in \Actions(\infoSet),\\
        \infoSet' \in \InfoSets_i(\infoSet, a'),\\
        b \in \set{*} \cup \Actions(\infoSet)}}
      \underbrace{\regret_{\infoSet', gb}(\dev; \strat),}_{\text{Full regret at successor.}}
\]
where
$\dev_{\preceq \infoSet, \sqsubseteq g}$
is the behavioral deviation that deploys $\dev$ at all $\bar{\infoSet} \preceq \infoSet$ and $\bar{g} \sqsubseteq g$ but the identity transformation otherwise.
Therefore, minimizing immediate regret at every information set and memory state also minimizes full regret at every information set and memory state, including those at the start of the game.
\correctionStart{Add OS qualification. The previous statement is still true but this statement is stronger.}
Finally, this implies that minimizing immediate regret with respect to any given set of behavioral deviations $\DevSet \subseteq \DevSet^{\IN}_{\InfoSets_i}$ ensures OS hindsight rationality with respect to $\DevSet$.
\correctionEnd
EFR's regret is bounded according to the following theorem:
\begin{theorem}
  \label{lem:efr}
  Instantiate EFR for player $i$ with exact regret matching and a set of behavioral deviations $\DevSet \subseteq \DevSet_{\InfoSets_i}^{\IN}$.
Let the maximum number of information sets along the same line of play where non-identity internal transformations are allowed before a non-identity transformation within any single deviation be $n_{\IN}$.
Let $D = \max_{\infoSet \in \InfoSets_i, \dev_{\infoSet} \in \DevSet_{\infoSet}} \abs{\TSelectionSet_{\infoSet}^{\DevSet}(\dev_{\infoSet})} \maxActivation(\DevSet_{\infoSet})$.
\correctionStart{Add a mention about how the full regret at each information set has this bound. The previous statement is still true but this statement is stronger.}
Then, EFR's cumulative full regret at each information set after $T$ rounds with respect to $\DevSet$ and the set of single-target deviations generated from $\DevSet$, $\DevSet_{\preceq \TARGET}$, is upper bounded by
$2^{n_{\IN} + 1} \maxReward \abs{\InfoSets_i} \sqrt{D T}$.
In addition, this implies that EFR is OS hindsight rational with respect to $\DevSet \cup \DevSet_{\preceq \TARGET}$.
\correctionEnd \end{theorem}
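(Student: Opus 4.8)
The plan is to propagate the per-expert time-selection regret bound of \cref{thm:rm-for-heterogeneous-ts} through the full-regret decomposition by induction on information sets, and then to read off OS hindsight rationality because the resulting cumulative full regret is $\bigO{\sqrt{T}}$ at every information set. First I would establish the local guarantee. At each $\infoSet$, EFR runs a single $(\DevSet_{\infoSet}, \cdot^+)$-regret-matching instance whose experts are the (transformation, memory-state) pairs, the time selection function for memory state $g$ being $t \mapsto \tSelectionFn_{\dev}(\infoSet, g; \strat^t_i)$. By \cref{thm:rm-for-heterogeneous-ts}, for every $\dev_{\infoSet} \in \DevSet_{\infoSet}$ and every associated $\tSelectionFn \in \TSelectionSet_{\infoSet}^{\DevSet}(\dev_{\infoSet})$ the time-selection regret is at most $2\maxReward\sqrt{\abs{\TSelectionSet_{\infoSet}^{\DevSet}(\dev_{\infoSet})}\,\maxActivation(\DevSet_{\infoSet})\,T}$. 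Since $\maxActivation(\DevSet_{\infoSet})$ is independent of $\dev_{\infoSet}$, this product is at most $D$ for every $\infoSet$, so the cumulative weighted immediate regret $\sum_{t=1}^T \tSelectionFn_{\dev}(\infoSet, g; \strat^t_i)\regret^{\CF}_{\infoSet}(\dev_{\infoSet, g}; \strat^t)$ at any single information set and memory state is at most $2\maxReward\sqrt{DT}$; this is exactly the cumulative form of the immediate-regret term $\regret_{\infoSet}(\dev_{\preceq\infoSet, \sqsubseteq g})$.

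Next I would unroll the full-regret decomposition
\[
  \regret_{\infoSet, g}(\dev; \strat)
    = \regret_{\infoSet}(\dev_{\preceq\infoSet, \sqsubseteq g})
    + \sum_{a', \infoSet', b} \regret_{\infoSet', gb}(\dev; \strat)
\]
by induction over information sets in deepest-first order (valid since perfect recall makes the transition graph a forest), summed across rounds. The base case is a terminal information set, whose cumulative full regret equals its immediate regret and is thus at most $2\maxReward\sqrt{DT}$. The inductive step adds the cumulative immediate regret at $(\infoSet, g)$ to the cumulative full regrets at the realizable successor memory states, so the total telescopes into a sum of cumulative immediate-regret terms over all reachable $(\infoSet', g')$ with $\infoSet' \succeq \infoSet$, and the bound reduces to counting how many of these can be nonzero.

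The crucial counting step produces the $2^{n_{\IN}}$ factor. The identity transformation reproduces $\strat_i$'s own action distribution and therefore incurs zero immediate regret, so only memory states at which $\dev$ applies an external or internal transformation contribute. Along any single line of play an external transformation appends the hidden character ``$*$'' and does not branch the memory, whereas each non-identity internal transformation preceding the deviation's first genuine change splits the realizable memory into only two regret-relevant classes: the triggered continuation (the single recommendation equal to the trigger action) and the untriggered continuation (every other recommendation, each inducing the identity transformation here and hence an equivalent, collapsible continuation). Consequently the number of regret-relevant memory states at any information set is at most $2^{n_{\IN}}$, and summing the per-point bound $2\maxReward\sqrt{DT}$ over at most $\abs{\InfoSets_i}$ descendant information sets and at most $2^{n_{\IN}}$ memory states each yields $2^{n_{\IN}+1}\maxReward\abs{\InfoSets_i}\sqrt{DT}$ for every $\dev \in \DevSet$. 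I expect this memory-state accounting to be the main obstacle: the delicate point is justifying the collapse of the untriggered recommendations so that the effective branching factor is $2$ rather than $n_{\Actions}$, which is precisely where the definition of $n_{\IN}$ must be invoked.

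Finally I would handle the single-target deviations and conclude. Each $\dev \in \DevSet_{\preceq\TARGET}$ is generated from some behavioral deviation in $\DevSet$ restricted to a single path to a target and the identity elsewhere, so its full regret from any $\infoSet$ decomposes into immediate-regret terms that EFR already controls along that single, non-branching path; bounding these gives a full regret of at most $d_{\infoSet}(\dev)\,2\maxReward\sqrt{DT}$, which matches the hypothesis of \cref{thm:singleTargetDevElevationWithOsr} with $f(T) = 2\maxReward\sqrt{DT}$ and certifies control of both $\DevSet_{\preceq\TARGET}$ and its generating set $\DevSet$ within the stated bound. Since the cumulative full regret at each information set is thus $\bigO{\sqrt{T}}$ for both $\DevSet$ and $\DevSet_{\preceq\TARGET}$, the average full regret at each information set vanishes as $T \to \infty$, which is exactly OS hindsight rationality in the sense of \cref{def:obsSeqHindsightRationality} with respect to $\DevSet \cup \DevSet_{\preceq\TARGET}$.
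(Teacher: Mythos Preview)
Your approach is the paper's: bound cumulative immediate regret at every $(\infoSet,g)$ by $2\maxReward\sqrt{DT}$ via time-selection regret matching (\cref{cor:exact-rm}), then push this through the full-regret decomposition from leaves to roots. Where you fully telescope and count memory states, the paper instead applies \cref{cor:uni-decomp-simple} recursively, which packages the memory-state accounting into a multiplicative factor $\bigl(1+\ind{\dev_{\infoSet,g}\in\DevSet^{\IN}_{\Actions(\infoSet)}\setminus\set{\dev^1}}\bigr)$ at each step; the $2^{n_{\IN}}$ factor then falls out by counting how many times this indicator equals $1$ along any line of play.

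Your ``collapsing'' justification for the branching factor is not the right mechanism. The untriggered recommendations after $\dev^{a^{\TRIGGER}\to a^{\TARGET}}$ are \emph{not} equivalent or collapsible---distinct untriggered recommendations $a\ne a^{\TRIGGER}$ are played as-is and therefore lead to distinct successor information sets. The correct observation (contained in \cref{cor:recursivCfvDiffCases}) is per-successor: at any single $\infoSet'\in\InfoSets_i(\infoSet,a')$, at most two memory states can arise, namely $ga^{\TRIGGER}$ (only when $a'=a^{\TARGET}$) and $ga'$ (when $a'\ne a^{\TRIGGER}$). Thus the memory-state count per information set at most doubles at each non-identity internal transformation and is unchanged by identity or external transformations, which yields your $2^{n_{\IN}}$ bound for the right reason.

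Your detour through \cref{thm:singleTargetDevElevationWithOsr} for $\DevSet_{\preceq\TARGET}$ is more work than the paper does, and your ``non-branching path'' premise fails whenever the generator uses internal transformations on the path to the target (memory still branches there, so you cannot get $d_{\infoSet}(\dev)$ terms without a further argument). The simpler route, implicit in the paper, is that every $\dev'\in\DevSet_{\preceq\TARGET}$ agrees with its generator $\dev\in\DevSet$ at all $\bar\infoSet\preceq\infoSet^{\TARGET}$ and is the identity elsewhere, so every nonzero immediate-regret term in $\dev'$'s decomposition is one already controlled by EFR's learners and the same recursive bound via \cref{cor:uni-decomp-simple} applies verbatim.
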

See Appendix \efrExtraSection/ for technical details.

\subsection{Discussion}
The variable $D$ in the EFR regret bound depends on the given behavioral deviations and is essentially the maximum number of realizable memory states times the number of action transformations across information sets.
See Table \efrExtraSection/.2 in Appendix \efrExtraSection/ for the $D$ value for each deviation type.

\cref{alg:efr-desc} provides an implementation of EFR with exact regret matching.
\begin{algorithm}[tb]
  \caption{EFR update for player $i$ with exact regret matching.}
  \label{alg:efr-desc}

  \begin{algorithmic}[1]
    \Input\ Strategy profile, $\strat^t \in \StrategySet$, $t \ge 1$, and \\
      \Indent behavioral deviations, $\DevSet \subseteq \DevSet^{\BHV}_{\InfoSets_i}$.
    \State \textbf{initialize} table $\regret^{1:0}_{\cdot, \cdot}(\cdot) = 0$.

    \LComment{Update cumulative immediate regrets:}
    \For{$\infoSet \in \InfoSets_i$,
        $\dev_{\infoSet} \in \DevSet_{\infoSet}$,
        $\tSelectionFn \in \TSelectionSet_{\infoSet}^{\DevSet}(\dev_{\infoSet})$}
      \State $\regret^{1:t}_{\infoSet, \tSelectionFn}(\dev_{\infoSet}) \gets
        \regret^{1:t-1}_{\infoSet, \tSelectionFn}(\dev_{\infoSet})
          + \tSelectionFn^t
            \regret^{\CF}_{\infoSet}(\dev_{\infoSet}; \strat^t)$
      \label{line:imm-phi-regret}
    \EndFor
    \LComment{Construct $\strat_i^{t + 1}$ with regret matching:}
    \For{$\infoSet \in \InfoSets_i$ from the start of the game to the end}
      \For{$\dev_{\infoSet} \in \DevSet_{\infoSet}$}
        \LComment{$\strat^{t + 1}_i$ need only be defined at $\bar{\infoSet} \prec \infoSet$.}
        \State $y^{t + 1}_{\dev_{\infoSet}} \gets \sum_{\tSelectionFn \in \TSelectionSet_{\infoSet}^{\DevSet}(\dev_{\infoSet})}
          \tSelectionFn^{t + 1}
            \big(\regret_{\infoSet, \tSelectionFn}^{1:t}(\dev_{\infoSet})\big)^+$
      \EndFor
      \State $z^{t + 1} \gets \sum_{\dev_{\infoSet} \in \DevSet_{\infoSet}} y^{t + 1}_{\dev_{\infoSet}}$
      \label{line:linkWeightSum}
      \If{$z^{t + 1} > 0$}
        \State $\strat^{t + 1}_i(\infoSet) \gets$ a fixed point of linear operator\\
          \Indent
            $\rmOperator^t :
              \simplex^{\abs{\Actions(\infoSet)}} \ni \immStrat
              \mapsto
              \frac{1}{z^{t + 1}}
                \sum\limits_{\dev_{\infoSet} \in \DevSet_{\infoSet}}
                  y^{t + 1}_{\dev_{\infoSet}} \dev_{\infoSet}(\immStrat)$
        \label{line:fixedPoint}
      \Else
        \State $\subblock{\strat^{t + 1}_i(a \given \infoSet) \gets \frac{1}{\Actions(\infoSet)}}_{a \in \Actions(\infoSet)}$ \hfill \CommentChar{} Arbitrary.
      \EndIf
    \EndFor
    \Return $\strat^{t + 1}_i$
  \end{algorithmic}
\end{algorithm}
 Notice that as a matter of practical implementation, EFR only requires $\DevSet_{\infoSet}$ and $\TSelectionSet_{\infoSet}^{\DevSet}$ for all information sets $\infoSet \in \InfoSets_i$, which are often easier to specify than $\DevSet$.
Table \efrExtraSection/.2 in Appendix \efrExtraSection/ shows the $\DevSet_{\infoSet}$ and $\TSelectionSet_{\infoSet}^{\DevSet}$ parameters corresponding to each deviation type, as well as the $D$ and $n_{\IN}$ values that determine each EFR instance's regret bound.
\correctionStart{Reuse single-target deviation terminology to improve clarity.}
Thanks to this feature, EFR always operates on the set of single-target deviations generated from $\DevSet$ that are additionally augmented with re-correlation.
\correctionEnd
This both potentially improves EFR's performance and ensures that learning is efficient even for some exponentially large deviation sets, like the external, blind causal, and informed causal deviations.

For example, it is equivalent to instantiate EFR with the blind causal deviations or the BPS deviations.
Likewise for the informed causal deviations and the CSPS deviations, where EFR reduces to a variation of ICFR~\parencite{celli2020noregret}.
To be precise, ICFR is pure EFR (analogous to pure CFR) instantiated with the CSPS deviations except that the external and internal action transformation learners at separate memory states within an information set are sampled and updated independently in ICFR.
EFR therefore improves on this algorithm (beyond its generality) because EFR's action transformation learners share all experience, potentially leading to faster learning, and EFR enjoys a deterministic finite time regret bound.

Crucially, EFR's generality does not come at a computational cost.
EFR reduces to the CFR algorithms previously described to handle counterfactual and action deviations~\parencite{cfr,hsr2020}.
Furthermore, EFR inherits CFR's flexibility as it can be used with Monte Carlo sampling~\parencite{mccfr,avgStratSampling,Gibson12probing,pcs}, function approximation~\parencite{waugh2015solving,morrill2016,dorazio2019frcfr,deepCFR,steinberger2020dream,dorazio2020}, variance reduction~\parencite{schmid2019variance,davis2020lowVarBaselines}, and predictions~\parencite{rakhlin2013optimization,farina2019stable,dorazioOptLH,farina2020faster}.
 
\section{Experiments}
\label{sec:experiments}

Our theoretical results show that EFR variants utilizing more powerful deviation types are pushed to accumulate higher payoffs during learning in worst-case environments.
Do these deviation types make a practical difference outside of the worst case?

We investigate the performance of EFR with different deviation types in nine benchmark game instances from \emph{OpenSpiel}~\parencite{LanctotEtAl2019OpenSpiel}.
We evaluate each EFR variant by the expected payoffs accumulated over the course of playing each game in each seat over 1000 rounds under two different regimes for selecting the other players.
In the ``fixed regime'', other players play their parts of the fixed sequence of strategy profiles generated with self-play before the start of the experiment using one of the EFR variants under evaluation.
In the ``simultaneous regime'', the other players are EFR instances themselves.
In games with more than two players, all other players share the same EFR variant and we only record the score for the solo EFR instance.
The fixed regime provides a test of how well each EFR variant adapts when the other players are gradually changing in an oblivious way where comparison is simple, while the simultaneous regime is a possibly more realistic test of dynamic adaptation where it is more difficult to draw definitive conclusions about relative effectiveness.

Since we evaluate expected payoff, use expected EFR updates, and use exact regret matching, all results are deterministic and hyperparameter-free.
To compute the regret matching fixed point when internal transformations are used, we solve a linear system with the Jacobi singular value algorithm implemented by the \texttt{jacobiSvd} method from the Eigen C++ library~\parencite{guennebaud2010eigen}.
Experimental data and code for generating both the data and final results are available on
\href{https://github.com/dmorrill10/hr_edl_experiments}{GitHub}.\footnote{\url{https://github.com/dmorrill10/hr_edl_experiments}}
Experiments took roughly 20 hours to complete on a 2.10GHz Intel\textregistered{} Xeon\textregistered{} CPU E5-2683 v4 processor with 10 GB of RAM.

\def\experimentsExtraSection/{G}
Appendix \experimentsExtraSection/
hosts the full set of results but a representative summary from two variants of imperfect information goofspiel~\parencite{Ross71Goofspiel,lanctot13phdthesis} (a two-player and a three-player version denoted as \gTwoFive/ and \gThreeFour/, respectively, both zero-sum) and Sheriff (two-player, non-zero-sum) is presented in \cref{tab:representativeResults}.
See Appendix \experimentsExtraSection/.1 for descriptions of all games.

\begin{table}[tb]
  \centering
  \begin{threeparttable}
  \caption{The payoff of each EFR instance averaged across both $1000$ rounds and each instance pairing (eight pairs in total) in two-player and three-player goofspiel (measured in win frequency between zero and one), and Sheriff (measured in points between $-6$ and $+6$).
    The top group of algorithms use weak deviation types (\textsc{act}\textsubscript{\INT} $\to$ informed action deviations, \textsc{cf} $\to$ blind counterfactual, and \textsc{cf}\textsubscript{\INT} $\to$ informed counterfactual) and the middle group use partial sequence deviation types.
    The \textsc{bhv} instance uses the full set of behavioral deviations.}
  \label{tab:representativeResults}
  \small
  \begin{tabularx}{\linewidth}{Xlcccccc}
    \toprule
    & \multicolumn{3}{c}{fixed} & \multicolumn{3}{c}{simultaneous}\\
    & \gTwoFive/ & \gThreeFour/ & Sheriff & \gTwoFive/ & \gThreeFour/\tnote{$\dagger$} & Sheriff\\
    \midrule
    \textsc{act}\textsubscript{\INT}              & 0.51                 & 0.48  & 0.28              & 0.45                 & 0.86  & 0.00\\
    \textsc{cf}                                   & 0.56                 & 0.51  & 0.48              & 0.50                 & 0.88  & 0.34\\
    \textsc{cf}\textsubscript{\INT}               & 0.57                 & 0.51  & 0.60              & 0.50                 & 0.92  & 0.37\\
    \midrule
    \textsc{bps}                                  & 0.58                 & 0.51  & 0.58              & 0.50                 & 0.85  & 0.34\\
    \textsc{cf}                                   & 0.58                 & 0.52  & 0.70              & 0.51                 & 0.84  & 0.37\\
    \textsc{csps}                                 & 0.59                 & 0.52  & 0.61              & 0.51                 & 0.91  & 0.37\\
    \textsc{tips}                                 & 0.60                 & 0.53  & 0.82              & 0.51                 & 0.87  & 0.38\\
    \midrule
    \textsc{bhv}                                  & 0.63                 & 0.53  & 0.91              & 0.51                 & 0.92  & 0.38\\
    \bottomrule
  \end{tabularx}
  \begin{tablenotes}
    \item[$\dagger$] In three-player goofspiel, players who tend to play the same actions perform worse.
    Since the game is symmetric across player seats, two players who use the same (deterministic) algorithm will always employ the same strategies and often play the same actions, giving the third player a substantial advantage.
    The win percentage for all variants in the simultaneous regime tends to be high because we only record the score for each variant when they are instantiated in a single seat.
    The relative comparison is still informative.
  \end{tablenotes}
  \end{threeparttable}
\end{table}

Stronger deviations consistently lead to better performance in both the fixed and the simultaneous regime.
The behavioral deviations (BHV) and the informed action deviations (ACT\textsubscript{\INT}) often lead to the best and worst performance, respectively, and this is true of each scenario in \cref{tab:representativeResults}.
In many cases however, TIPS or CSPS yield similar performance to BHV.
A notable outlier from the scenarios in \cref{tab:representativeResults} is three-player goofspiel with a descending point deck.
Here, blind counterfactual (CF) and BPS deviations lead to better performance in the first few rounds before all variants quickly converge to play that achieves essentially the same payoff (see Figures \experimentsExtraSection/.1-\experimentsExtraSection/.4 in Appendix \experimentsExtraSection/).

\section{Conclusions}

We introduced EFR, an algorithm that is hindsight rational for any given set of behavioral deviations. While the full set of behavioral deviations leads to generally intractable computational requirements, we identified four partial sequence deviation types that are both tractable and powerful in games with moderate lengths.

An important tradeoff within EFR is that using stronger deviation types generally leads to slower strategy updates, demonstrated by Figures \experimentsExtraSection/.5-\experimentsExtraSection/.6 in Appendix \experimentsExtraSection/ where learning curves are plotted according to runtime.
Often in a tournament setting, the number of rounds and computational budget may be fixed so that running faster cannot lead to more reward for the learner, but there may be reward to gain by running faster in other scenarios.
Quantifying the potential benefit of using a stronger deviation type in particular games could aid in navigating this tradeoff.

Alternatively, perhaps the learner can navigate this tradeoff on their own.
Algorithms like the fixed-share forecaster~\parencite{herbster1998fixedShare} or context tree weighting~\parencite{willems1993ctw} efficiently minimize regret across large structured sets of experts, effectively avoiding a similar tradeoff.
This approach could also address a second tradeoff, which is that stronger deviation types lead to EFR regret bounds with larger constant factors even if the best deviation is part of a ``simpler'' class, \eg/, the regret bound that TIPS EFR has with respect to counterfactual deviations is larger than that of CFR even though a TIPS EFR instance might often accumulate more reward in order to compete with the larger TIPS deviations.
Perhaps an EFR variant can be designed that would compete with large sets of behavioral deviations, but its regret bound would scale with the ``complexity'' (in a sense that has yet to be rigorously defined) of the best deviation rather than the size of the whole deviation set.
Ideally, its computational cost would be independent of the deviation set size or would at least scale with the complexity of the best deviation.

\section*{Acknowledgements}
Computation provided by WestGrid and Compute Canada.
Dustin Morrill, Michael Bowling, and James Wright are supported by the Alberta Machine Intelligence Institute (Amii) and NSERC.
Michael Bowling and James Wright hold Canada CIFAR AI Chairs at Amii.
Amy Greenwald is supported in part by NSF Award CMMI-1761546.
Thanks to Ian Gemp, Gabriele Farina, and anonymous reviewers for constructive comments and suggestions.
\correctionStart{Add acknowledgement.}
Thanks to Revan MacQueen for showing that the partial sequence deviations do not subsume the causal or external deviations without OSR.
\correctionEnd

\bibliography{references}
\bibliographystyle{icml2021}

\appendix
\renewcommand\thefigure{\thesection.\arabic{figure}}
\setcounter{figure}{0}
\renewcommand\thetable{\thesection.\arabic{table}}
\setcounter{table}{0}
\section{Notation and Symbols}

\nomenclature{$\simplex^d$}{The $d$-dimensional probability simplex.}
\nomenclature{$h \in \Histories$}{An action history.}
\nomenclature{$z \in \TerminalHistories \subseteq \Histories$}{A terminal history.}
\nomenclature{$\maxReward$}{The maximum magnitude of any payoff.}
\nomenclature{$\utility_i$}{Bounded utility function for player $i$.}
\nomenclature{$\pureStrat \in \PureStratSet$}{A pure strategy profile.}
\nomenclature{$\PureStratSet_{\chance}$}{The set of a game's random events or the set of pure strategies that could be assigned to chance.}
\nomenclature{$\reachProb$}{The reach probability function.}
\nomenclature{$\strat \in \StrategySet$}{A behavioral/mixed strategy profile.}
\nomenclature{$\dev \in \DevSet^{\cdot}_{\mathcal{X}}$}{A transformation from finite set $\mathcal{X}$ to itself. $\SWAP$ in the superscript of $\DevSet^{\cdot}_{\mathcal{X}}$ denotes the set of swap transformations, $\EXT$ denotes the external transformations, and $\INT$ denotes the internal transformations.}
\nomenclature{$\dev^1 : a \mapsto a$}{The identity transformation.}
\nomenclature{$\dev^{\to a}$}{The external transformation to $a$.}
\nomenclature{$\dev^{a \to a'}$}{The internal transformation from $a$ to $a'$.}
\nomenclature{$\regret$}{Regret.}
\nomenclature{$\regret^{\CF}$}{Counterfactual regret.}
\nomenclature{$\recDist \in \simplex^{\abs{\PureStratSet}}$}{A distribution over strategy profiles, often representing an empirical distribution of play and interpreted as a recommendation distribution.}
\nomenclature{$\playerChoice$}{The player choice function.}
\nomenclature{$\infoSet \in \InfoSets_i$}{One of player $i$'s information sets.}
\nomenclature{$\infoSetOf(h)$}{The information set containing history $h$.}
\nomenclature{$\arbHistory(\infoSet)$}{An arbitrary history in information set $\infoSet$.}
\nomenclature{$a \in \Actions(h) = \Actions(\infoSetOf(h))$}{An action from the set of legal actions at history $h$ in information set $\infoSetOf(h)$.}
\nomenclature{$n_{\Actions}$}{The maximum number of actions available at any history.}
\nomenclature{$\parent(\infoSet')$}{The unique parent (immediate predecessor) of information set $\infoSet'$.}
\nomenclature{$d_{\infoSet}$}{The depth of information set $\infoSet$.}
\nomenclature{$d_*$}{The depth of player $i$'s deepest information set.}
\nomenclature{$a_{h}^{\to \infoSet'}$ or $a_{\infoSetOf(h)}^{\to \infoSet'}$}{The unique action that would need to be taken in history $h$ or information set $\infoSet(h)$ to reach successor information set $\infoSet' \succ \infoSet(h)$.}
\nomenclature{$\Actions_*$}{The union of player $i$'s action sets.}
\nomenclature{$g \in G_i$}{A deviation player memory state string.}
\nomenclature{$\DevSet^{\INT}_{\InfoSets_i}$}{The set of player $i$'s behavioral deviations.}
\nomenclature{$\cfv_{\infoSet}$}{The counterfactual value function at information set $\infoSet$.}
\nomenclature{$\tSelectionFn \in \TSelectionSet(\dev)$}{A time selection function associated with transformation $\dev$.}
\nomenclature{$\TSelectionSet_{\infoSet}^{\DevSet}(\dev_{\infoSet})$}{The set of time selection functions associated with action transformation $\dev_{\infoSet}$ corresponding to the deviation player memory probabilities generated by the set of behavioral deviations $\DevSet$ in information set $\infoSet$.}
\nomenclature{$M(\dev)$}{The size of the time selection function set associated with transformation $\dev$.}
\nomenclature{$M^*$}{The size of the largest time selection function set.}
\printnomenclature

\correctionStart{Move extra EFG background from the EFR section up to here to share with the new OSR section.}
\section{Additional Extensive-Form Game Background}

\textbf{The parent action function.}
The action taken to reach a given information set from its parent is returned by $\parentAction : \infoSet' \mapsto a_{\parent(\infoSet')}^{\to \infoSet'}$ (``blackboard a'').

\textbf{Terminal successor histories.}
Let the histories that terminate without further input from player $i$ after taking action $a$ in $\infoSet$ be
\begin{align}
  \TerminalHistories_i(\infoSet, a) = \set*{
    z \in \TerminalHistories
    \, \Bigg| \,
    \begin{aligned}
      &\exists h \in \infoSet, \,
      z \sqsupseteq ha, \,\\
      &\nexists h' \in \Histories_i, ha \sqsubseteq h' \sqsubset z
    \end{aligned}
  }.
\end{align}

\textbf{Child information sets.}
Let the child information sets of information set $\infoSet$ after taking action $a$ be
\begin{align}
  \InfoSets_i(\infoSet, a) = \set*{
    \infoSet' \in \InfoSets_i
    \, \Bigg| \,
    \begin{aligned}
      &\forall h' \in \infoSet', \,
      \exists h \in \infoSet, \,
      h' \sqsupseteq ha, \,\\
      &\nexists h'' \in \Histories_i, ha \sqsubseteq h'' \sqsubset h'
    \end{aligned}
  }.
\end{align}

\textbf{Terminal payoffs.}
Let the counterfactual immediate payoff of action $a$ in information set $\infoSet$ to player $i$ be
$\termValue(\infoSet, a; \strat_{-i}) = \sum_{z \in \TerminalHistories_i(\infoSet, a)} \reachProb(z; \strat_{-i}) \utility_i(z)$
and overload its expected value as
$\termValue(\infoSet; \strat) = \E_{A \sim \strat_i(\infoSet)} \subblock*{ \termValue(\infoSet, A; \strat_{-i}) }$.

\correctionEnd

\correctionStart{Add proof of OSR strength elevation.}
\section{Observable Sequential Rationality}

The proof of \cref{thm:singleTargetDevElevationWithOsr} requires a general full regret decomposition.

\begin{lemma}
  The reach-probability-weighted counterfactual value of player $i$'s strategy, $\strat_i$, from information set $\infoSet$, under perfect recall, recursively decomposes as
  \begin{align*}
    \cfv_{\infoSet}(\strat_i; \strat_{-i})
      &=
        \reachProb\subex*{\arbHistory(\infoSet); \strat_i} \termValue(\infoSet; \strat)
        + \hspace{-2em} \sum_{
          \infoSet' \in \bigcup_{a \in \Actions(\infoSet)}
            \InfoSets_i(\infoSet, a)
        } \hspace{-2em}
          \cfv_{\infoSet'}(\strat_i; \strat_{-i}).
  \end{align*}
\end{lemma}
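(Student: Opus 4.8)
The plan is to expand the definition of $\cfv_{\infoSet}$ and partition the terminal histories reachable from $\infoSet$ according to whether player $i$ acts again before the game ends. Writing $\cfv_{\infoSet}(\strat_i; \strat_{-i}) = \reachProb(\arbHistory(\infoSet); \strat_i) \sum_{h \in \infoSet,\, z} \reachProb(h; \strat_{-i}) \reachProb(h, z; \strat) \utility_i(z)$, the first move is to factor the path out of each $h \in \infoSet$ through the action $a \in \Actions(\infoSet)$ that player $i$ plays there, using the chain rule $\reachProb(h, z; \strat) = \strat_i(a \given \infoSet) \reachProb(ha, z; \strat)$ for the unique $a$ with $ha \sqsubseteq z$. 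Every terminal $z \sqsupseteq ha$ then falls into exactly one of two classes: either no player-$i$ history lies strictly between $ha$ and $z$, so $z \in \TerminalHistories_i(\infoSet, a)$, or the path crosses a unique next player-$i$ information set $\infoSet' \in \InfoSets_i(\infoSet, a)$ before terminating.

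For the first class, only the other players and chance act after $a$, so $\reachProb(ha, z; \strat) = \reachProb(ha, z; \strat_{-i})$; since $a$ is player $i$'s own action it does not enter the opponents' reach, giving $\reachProb(h; \strat_{-i}) \reachProb(ha, z; \strat_{-i}) = \reachProb(z; \strat_{-i})$. Perfect recall guarantees that each such $z$ has a unique predecessor in $\infoSet$, so summing over $h$, $a$, and these $z$ collapses to $\reachProb(\arbHistory(\infoSet); \strat_i) \sum_{a} \strat_i(a \given \infoSet) \termValue(\infoSet, a; \strat_{-i}) = \reachProb(\arbHistory(\infoSet); \strat_i) \termValue(\infoSet; \strat)$, the first term of the claim.

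For the second class I would reorganize the sum over the next information set $\infoSet' \in \bigcup_{a} \InfoSets_i(\infoSet, a)$. Perfect recall makes the action $a = a_{\infoSet}^{\to \infoSet'}$ leading from $\infoSet$ to $\infoSet'$ unique, so these child sets are disjoint across $a$ and the player-$i$ reach factors as $\reachProb(\arbHistory(\infoSet'); \strat_i) = \reachProb(\arbHistory(\infoSet); \strat_i)\, \strat_i(a \given \infoSet)$. Splitting $\reachProb(ha, z; \strat) = \reachProb(ha, h'; \strat) \reachProb(h', z; \strat)$ at the crossing history $h' \in \infoSet'$ and noting that player $i$ does not act between $ha$ and $h'$ lets me replace $\reachProb(ha, h'; \strat)$ by $\reachProb(ha, h'; \strat_{-i})$ and then merge the opponents' reach into $\reachProb(h'; \strat_{-i})$, exactly as in the first class. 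What remains, after summing over $h' \in \infoSet'$ (each with a unique predecessor in $\infoSet$) and over $z$, is precisely $\cfv^{\CF}_{\infoSet'}(\strat_i; \strat_{-i})$, so this class contributes $\sum_{\infoSet'} \reachProb(\arbHistory(\infoSet'); \strat_i)\, \cfv^{\CF}_{\infoSet'}(\strat_i; \strat_{-i}) = \sum_{\infoSet'} \cfv_{\infoSet'}(\strat_i; \strat_{-i})$, the second term. Adding the two classes yields the stated decomposition.

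The main obstacle is the reach-probability bookkeeping together with the appeals to perfect recall: I must verify that (i) each terminal history and each crossing history $h'$ has a unique predecessor in $\infoSet$, (ii) the next player-$i$ information set and the action reaching it are unique, so that reorganizing the double sum neither double counts nor omits any $z$, and (iii) the factorization $\reachProb(\arbHistory(\infoSet'); \strat_i) = \reachProb(\arbHistory(\infoSet); \strat_i)\, \strat_i(a_{\infoSet}^{\to \infoSet'} \given \infoSet)$ holds. Each of these follows from the perfect-recall assumption that player $i$'s information-set transition graph is a forest, but stating them cleanly, and confirming that player $i$'s own actions cancel out of the opponent reach probabilities at every step, is where the care is needed.
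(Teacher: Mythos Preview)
Your proposal is correct and follows essentially the same approach as the paper: partition the terminal histories reachable from $\infoSet$ into those that terminate before player $i$ acts again (yielding the $\termValue$ term) and those that pass through a child information set $\infoSet'$ (yielding the recursive term), using perfect recall to justify the reorganization of sums and the factorization of reach probabilities. The only cosmetic difference is that the paper first carries out the decomposition for the unweighted counterfactual value $\cfv^{\CF}_{\infoSet}$ and then multiplies through by $\reachProb(\arbHistory(\infoSet); \strat_i)$, whereas you work with the weighted quantity from the start; the underlying manipulations are identical.
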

\begin{proof}
  Assuming reach-probability weight of $1$ for $\infoSet$
  ($\reachProb\subex*{\arbHistory(\infoSet); \strat_i}$),
  the counterfactual value decomposes as
  \begin{align}
    &\cfv^{\CF}_{\infoSet}(\strat_i; \strat_{-i})\nonumber\\
      &= \sum_{\substack{
        h \in \infoSet,\\
        z \in \TerminalHistories
      }} \reachProb(h; \strat_{-i}) \reachProb(h, z; \strat) \utility_i(z)\\
      &= \sum_{\substack{
        h \in \infoSet,\\
        z \in \TerminalHistories
      }} \reachProb(h, z; \strat_i) \underbrace{\reachProb(z; \strat_{-i}) \utility_i(z)}_{\text{Terminal counterfactual values.}}\\
      &= \underbrace{
        \E_{A \sim \strat_i(\infoSet)}
          \subblock*{
            \sum_{z \in \TerminalHistories_i(\infoSet, A)}
              \reachProb(z; \strat_{-i}) \utility_i(z)
          }
        }_{\text{Expected value from terminal histories.}}
        + \underbrace{
          \E_{A \sim \strat_i(\infoSet)}\subblock*{
            \sum_{\substack{
              h' \in \infoSet' \in \InfoSets_i(\infoSet, A)\\
              z \in \TerminalHistories
            }}
              \reachProb(h', z; \strat_i) \reachProb(z; \strat_{-i}) \utility_i(z)
          }
        }_{\text{Expected value from non-terminal histories.}}\\
      &= \termValue(\infoSet; \strat)
        + \E_{A \sim \strat_i(\infoSet)}\subblock*{
          \sum_{\infoSet' \in \InfoSets_i(\infoSet, A)}
            \underbrace{
              \sum_{a' \in \Actions(\infoSet')}
                \strat_i(a' \given \infoSet')
                \sum_{\substack{
                  h' \in \infoSet'\\
                  z \in \TerminalHistories
                }}
                  \reachProb(h'a', z; \strat_i) \reachProb(z; \strat_{-i}) \utility_i(z).
            }_{\cfv^{\CF}_{\infoSet'}(\strat_i; \strat_{-i})}
        }\\
      &= \underbrace{\termValue(\infoSet; \strat)}_{\text{Expected immediate value.}}
        + \underbrace{
          \E_{A \sim \strat_i(\infoSet)}\subblock*{
            \sum_{\infoSet' \in \InfoSets_i(\infoSet, A)}
              \cfv^{\CF}_{\infoSet'}(\strat_i; \strat_{-i})
          }
        }_{\text{Expected future value.}}.
  \end{align}

  Accounting for strategies $\strat_i$ with reach-probability weights that are not $1$, we simply multiply by the reach weight.
  \begin{align}
    &\cfv_{\infoSet}(\strat_i; \strat_{-i})\nonumber\\
      &= \reachProb\subex*{\arbHistory(\infoSet); \strat_i} \termValue(\infoSet; \strat)
        + \sum_{a \in \Actions(\infoSet)}
          \reachProb\subex*{\arbHistory(\infoSet); \strat_i} \strat_i(a \given \infoSet)
          \sum_{\infoSet' \in \InfoSets_i(\infoSet, a)}
            \cfv^{\CF}_{\infoSet'}(\strat_i; \strat_{-i})\\
      &= \reachProb\subex*{\arbHistory(\infoSet); \strat_i} \termValue(\infoSet; \strat)
        + \hspace{-2em} \sum_{
          \infoSet' \in \bigcup_{a \in \Actions(\infoSet)}
            \InfoSets_i(\infoSet, a)
        } \hspace{-2em}
          \cfv_{\infoSet'}(\strat_i; \strat_{-i}),
    \label{eq:rwerDecomposition}
  \end{align}
  which completes the proof.
\end{proof}
\begin{lemma}
  \label{lem:regretDecomposition}
  The full regret of $\dev \in \DevSet^{\SWAP}_{\PureStratSet_i}$, under strategy profile $\strat$ and perfect recall, from information set $\infoSet$, recursively decomposes as
  \begin{align*}
    \regret_{\infoSet}(\dev; \strat)
      &=
        \regret_{\infoSet}(\dev_{\preceq \infoSet}; \strat)
        + \hspace{-2.2em} \sum_{
          \infoSet' \in \bigcup_{a \in \Actions(\infoSet)}
            \InfoSets_i(\infoSet, a)
        } \hspace{-2.2em}
          \regret_{\infoSet'}(\dev; \strat).
  \end{align*}
\end{lemma}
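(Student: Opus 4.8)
The plan is to expand $\regret_{\infoSet}(\dev; \strat)$ by its definition, apply the just-proven reach-probability-weighted counterfactual value decomposition to each counterfactual value, and then match the resulting pieces against the right-hand side. Abbreviate $V_{\bar{\infoSet}}(\strat_i') \as \cfv_{\bar{\infoSet}}(\strat_i'; \strat_{-i})$ and let $I_{\infoSet}(\strat_i') \as \reachProb(\arbHistory(\infoSet); \strat_i') \termValue(\infoSet; (\strat_i', \strat_{-i}))$ denote the reach-weighted immediate value at $\infoSet$. The definition gives $\regret_{\infoSet}(\dev; \strat) = V_{\infoSet}(\dev(\strat_i)) - V_{\infoSet}(\dev_{\prec \infoSet}(\strat_i))$, and the decomposition lemma rewrites each $V_{\infoSet}$ as its immediate value plus a sum of child values over $\infoSet' \in \bigcup_{a} \InfoSets_i(\infoSet, a)$, so that $\regret_{\infoSet}(\dev; \strat) = I_{\infoSet}(\dev(\strat_i)) - I_{\infoSet}(\dev_{\prec \infoSet}(\strat_i)) + \sum_{\infoSet'} [V_{\infoSet'}(\dev(\strat_i)) - V_{\infoSet'}(\dev_{\prec \infoSet}(\strat_i))]$.

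The central observation is that every child $\infoSet' \in \InfoSets_i(\infoSet, a)$ has $\parent(\infoSet') = \infoSet$, so under perfect recall its strict predecessors are exactly the information sets $\preceq \infoSet$; hence the restricted deviations coincide, $\dev_{\prec \infoSet'} = \dev_{\preceq \infoSet}$ (as maps on pure strategies), and likewise $(\dev_{\preceq \infoSet})_{\prec \infoSet} = \dev_{\prec \infoSet}$. Using the first identity, each right-hand summand is $\regret_{\infoSet'}(\dev; \strat) = V_{\infoSet'}(\dev(\strat_i)) - V_{\infoSet'}(\dev_{\preceq \infoSet}(\strat_i))$. Using the second, the decomposition lemma applied to $\regret_{\infoSet}(\dev_{\preceq \infoSet}; \strat)$ gives $I_{\infoSet}(\dev_{\preceq \infoSet}(\strat_i)) - I_{\infoSet}(\dev_{\prec \infoSet}(\strat_i)) + \sum_{\infoSet'} [V_{\infoSet'}(\dev_{\preceq \infoSet}(\strat_i)) - V_{\infoSet'}(\dev_{\prec \infoSet}(\strat_i))]$. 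Adding the two contributions on the right-hand side, the $V_{\infoSet'}(\dev_{\preceq \infoSet}(\strat_i))$ terms cancel, leaving $I_{\infoSet}(\dev_{\preceq \infoSet}(\strat_i)) - I_{\infoSet}(\dev_{\prec \infoSet}(\strat_i)) + \sum_{\infoSet'} [V_{\infoSet'}(\dev(\strat_i)) - V_{\infoSet'}(\dev_{\prec \infoSet}(\strat_i))]$, whose child sum already matches the one in the expansion of the left-hand side.

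What remains is to equate the two immediate terms, namely $I_{\infoSet}(\dev(\strat_i)) = I_{\infoSet}(\dev_{\preceq \infoSet}(\strat_i))$; this is the step I expect to demand the most care, because $\dev(\strat_i)$ is a pushforward measure rather than a behavioral strategy. I would handle it at the pure-strategy level and then lift: for a pure strategy $\pureStrat_i$, the quantity $I_{\infoSet}(\pureStrat_i) = \reachProb(\arbHistory(\infoSet); \pureStrat_i) \termValue(\infoSet, \pureStrat_i(\infoSet); \strat_{-i})$ depends only on the actions $\pureStrat_i$ assigns at $\infoSet$ and at its predecessors $\prec \infoSet$, and by definition $\dev$ and $\dev_{\preceq \infoSet}$ assign identical actions at every $\bar{\infoSet} \preceq \infoSet$, so $I_{\infoSet}(\dev(\pureStrat_i)) = I_{\infoSet}(\dev_{\preceq \infoSet}(\pureStrat_i))$. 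Taking the expectation over $\pureStrat_i \sim \strat_i$ through the pushforward (equivalently, using realization equivalence together with the linearity of $I_{\infoSet}$ in the mixed-strategy weights) yields $I_{\infoSet}(\dev(\strat_i)) = I_{\infoSet}(\dev_{\preceq \infoSet}(\strat_i))$. Combined with the cancellation above, this shows the two sides agree, completing the proof.
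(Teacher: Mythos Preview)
Your proposal is correct and follows essentially the same approach as the paper: both arguments rest on the reach-probability-weighted counterfactual value decomposition, the identification $\dev_{\prec \infoSet'} = \dev_{\preceq \infoSet}$ for children $\infoSet'$ of $\infoSet$, and the equality of the immediate terms at $\infoSet$ under $\dev(\strat_i)$ and $\dev_{\preceq \infoSet}(\strat_i)$. The only organizational difference is that the paper adds and subtracts $\cfv_{\infoSet}(\dev_{\preceq \infoSet}(\strat_i);\strat_{-i})$ up front---immediately isolating $\regret_{\infoSet}(\dev_{\preceq \infoSet};\strat)$ and then decomposing the single remaining difference---whereas you expand both sides independently and match terms; your treatment of the immediate-term equality via the pushforward at the pure-strategy level is more explicit than the paper's, which simply asserts the cancellation.
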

\begin{proof}
  The result follows from simple algebra and the decomposition of the reach-probability-weighted counterfactual value.
  Let $\dev_{\preceq \infoSet}$ be the truncated deviation that only applies $\dev$ at predecessors $\bar{\infoSet} \preceq \infoSet$.
  \begin{align*}
    &\regret_{\infoSet}(\dev; \strat)\nonumber\\
      &=
        \cfv_{\infoSet}(\dev(\strat_i); \strat_{-i})
        \rlap{$\overbrace{\phantom{- \cfv_{\infoSet}(\dev_{\preceq \infoSet}(\strat_i); \strat_{-i})+ \cfv_{\infoSet}(\dev_{\preceq \infoSet}(\strat_i); \strat_{-i})\,}}^0$}
        - \cfv_{\infoSet}(\dev_{\preceq \infoSet}(\strat_i); \strat_{-i})
        + \underbrace{
          \cfv_{\infoSet}(\dev_{\preceq \infoSet}(\strat_i); \strat_{-i})
          - \cfv_{\infoSet}(\dev_{\prec \infoSet}(\strat_i); \strat_{-i})
        }_{\regret_{\infoSet}(\dev_{\preceq \infoSet}; \strat)}\\
      &=
        \regret_{\infoSet}(\dev_{\preceq \infoSet}; \strat)\\
        &\quad+ \underbrace{
          \reachProb\subex*{\arbHistory(\infoSet); \dev(\strat_i)} \termValue(\infoSet; \strat)
          - \reachProb\subex*{\arbHistory(\infoSet); \dev_{\preceq \infoSet}(\strat_i)}
            \termValue(\infoSet; \strat)
        }_{0}\\
        &\quad+ \hspace{-2em} \sum_{
          \infoSet' \in \bigcup_{a \in \Actions(\infoSet)}
            \InfoSets_i(\infoSet, a)
        } \hspace{-1em}\underbrace{
          \cfv_{\infoSet'}(\dev(\strat_i); \strat_{-i})
          - \cfv_{\infoSet'}(\dev_{\preceq \infoSet}(\strat_i); \strat_{-i})
        }_{\regret_{\infoSet'}(\dev; \strat)}\\
      &=
        \regret_{\infoSet}(\dev_{\preceq \infoSet}; \strat)
        + \hspace{-2.2em} \sum_{
          \infoSet' \in \bigcup_{a \in \Actions(\infoSet)}
            \InfoSets_i(\infoSet, a)
        } \hspace{-2.2em}
          \regret_{\infoSet'}(\dev; \strat).\qedhere
  \end{align*}
\end{proof}
\setcounter{theorem}{0}
\begin{theorem}
  If the full regret for player $i$ at each information set $\infoSet$ for the sequence of $T$ strategy profiles,
$\tuple{ \strat^t }_{t = 1}^T$,
with respect to each single-target deviation
$\dev \in \DevSet_{\preceq \TARGET}$,
is
$d_{\infoSet}(\dev)f(T) \ge 0$,
where
$d_{\infoSet}(\dev)$
is the number of non-identity action transformations
$\dev$
applies from $\infoSet$ to the end of the game, then $i$'s OSR gap with respect to
$\DevSet_{\preceq \TARGET}$
and
$\DevSet \subseteq \DevSet^{\SWAP}_{\PureStratSet_i}$
is no more than
$\abs{\InfoSets_i} f(T)$. \end{theorem}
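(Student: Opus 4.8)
The plan is to reduce the full regret of an arbitrary deviation $\dev \in \DevSet$ at an information set $\infoSet$ to a sum of full regrets of single-target deviations, each of which the hypothesis controls directly. The engine is the recursive full-regret decomposition of \cref{lem:regretDecomposition}, which splits $\regret_{\infoSet}(\dev; \strat)$ into the local contribution $\regret_{\infoSet}(\dev_{\preceq \infoSet}; \strat)$ plus the full regrets of $\dev$ at each child information set of $\infoSet$.

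First I would unroll this decomposition over the entire subtree of player $i$'s information-set forest rooted at $\infoSet$. Applying \cref{lem:regretDecomposition} inductively to each child, grandchild, and so on yields
\[
  \regret_{\infoSet}(\dev; \strat)
    = \sum_{\infoSet' \succeq \infoSet}
        \regret_{\infoSet'}(\dev_{\preceq \infoSet'}; \strat),
\]
where the sum ranges over $\infoSet$ and all of its descendants. Perfect recall is what keeps this clean: the predecessors of each $\infoSet'$ form a single chain, so each truncated deviation $\dev_{\preceq \infoSet'}$ deviates only along one path up to the target $\infoSet'$ and is therefore a member of $\DevSet_{\preceq \TARGET}$. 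Summing over rounds and invoking the hypothesis gives $\sum_{t} \regret_{\infoSet'}(\dev_{\preceq \infoSet'}; \strat^t) = d_{\infoSet'}(\dev_{\preceq \infoSet'}) f(T)$ for each term.

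The next step is a counting identity. Because $\dev_{\preceq \infoSet'}$ applies the identity transformation at every descendant of $\infoSet'$, its only possible non-identity action transformation from $\infoSet'$ onward is the one at $\infoSet'$ itself, so $d_{\infoSet'}(\dev_{\preceq \infoSet'}) = \ind{\dev \text{ acts non-trivially at } \infoSet'} \in \set{0, 1}$. Summing these indicators over the subtree recovers exactly the number of non-identity transformations that $\dev$ applies from $\infoSet$ to the end of the game, i.e., $\sum_{\infoSet' \succeq \infoSet} d_{\infoSet'}(\dev_{\preceq \infoSet'}) = d_{\infoSet}(\dev)$. Combining, the cumulative full regret of the generating deviation satisfies $\sum_{t} \regret_{\infoSet}(\dev; \strat^t) = d_{\infoSet}(\dev) f(T)$, the same form that the hypothesis assumed for single-target deviations.

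Finally, I would bound $d_{\infoSet}(\dev) \le \abs{\InfoSets_i}$, since the number of non-identity transformations cannot exceed the total number of information sets, and the identical bound holds termwise for $\dev \in \DevSet_{\preceq \TARGET}$ by hypothesis. Taking the positive part and maximizing over $\infoSet \in \InfoSets_i$ and over $\dev \in \DevSet \cup \DevSet_{\preceq \TARGET}$ then bounds the OSR gap by $\abs{\InfoSets_i} f(T)$. I expect the main obstacle to be the bookkeeping in the first two steps: confirming that the recursive decomposition telescopes correctly over the information-set forest, that each truncated deviation $\dev_{\preceq \infoSet'}$ genuinely lies in $\DevSet_{\preceq \TARGET}$ (which relies on perfect recall collapsing the predecessors into a chain), and that the per-target $d$-counts sum back to $d_{\infoSet}(\dev)$.
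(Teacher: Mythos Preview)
Your proposal is correct and uses the same two ingredients as the paper: the recursive full-regret decomposition (\cref{lem:regretDecomposition}) and the observation that each truncated deviation $\dev_{\preceq \infoSet'}$ lies in $\DevSet_{\preceq \TARGET}$ with $d_{\infoSet'}(\dev_{\preceq \infoSet'}) \le 1$, so its cumulative full regret is at most $f(T)$. The only organizational difference is that the paper packages this as an induction on the height of the information set, whereas you unroll the recursion completely into the sum $\sum_{\infoSet' \succeq \infoSet} \regret_{\infoSet'}(\dev_{\preceq \infoSet'}; \strat)$ and then bound termwise; your unrolled version is a bit cleaner and even yields the sharper intermediate statement that the cumulative full regret of any $\dev \in \DevSet$ at $\infoSet$ is at most $d_{\infoSet}(\dev) f(T)$, extending the hypothesis from single-target deviations to all of $\DevSet$.
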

\begin{proof}
  This proof has an identical inductive structure to the proof of Theorem 3 by \textcite{hsr2020arxiv}, the only difference is that the pair of deviation sets involved are generalized and the full regrets that we work with are weighted appropriately for any deviation, not just counterfactual or external deviations.

  First, establish a simple fact about single-target deviations.
  At each information set $\infoSet$, the full regret with respect to each single-target deviation $\dev_{\preceq \infoSet}$ that transforms the actions up to and at $\infoSet$, and then immediately re-correlates is at most $f(T)$ since $d_{\infoSet}(\dev_{\preceq \infoSet}) = 1$.
  Formally, denote this value as
  \begin{align}
    \regret_{\infoSet}^{1:T}(\dev_{\preceq \infoSet})
      = \sum_{t = 1}^T
        \regret_{\infoSet}(\dev_{\preceq \infoSet}; \strat^t)
      \le f(T).
    \label{eq:immediateSingleTargetRegretBound}
  \end{align}

  Next, consider the terminal or height $1$ information sets for player $i$, \ie/, those without successors.
  The maximum full regret with respect to $\DevSet_{\preceq \TARGET}$ is the positive part of the maximum full regret with respect to $\DevSet$ at each terminal information set $\infoSet$; either the single-target deviation can change the action at $\infoSet$ or it can re-correlate.
  Therefore, the theorem is proved if all information sets are terminal as $d < \abs{\InfoSets_i}$.
  This serves as the base case of a proof by induction.

  For the induction step, assume that the maximum full regret of a deviation in $\DevSet$ and a single-target deviation in $\DevSet_{\preceq \TARGET}$ at information set $\infoSet$ is upper bounded at each immediate successor $\infoSet' \in \InfoSets_i(\infoSet, a)$ by $(d - 1)f(T)$, where $d$ is the height of $\infoSet$ ($d - 1$ player $i$ actions leads to a terminal information set).
  The full regret of deviation $\dev \in \DevSet$ decomposes as
  \begin{align}
    \regret_{\infoSet}^{1:T}(\dev)
      &=
        \regret^{1:T}_{\infoSet}(\dev_{\preceq \infoSet})
        + \hspace{-2.2em} \sum_{
          \infoSet' \in \bigcup_{a \in \Actions(\infoSet)}
            \InfoSets_i(\infoSet, a)
        } \hspace{-2.2em}
          \regret^{1:T}_{\infoSet'}(\dev).
    \shortintertext{according to \cref{lem:regretDecomposition}.
      We can bound the full regret at each $\infoSet'$ by the induction assumption,}
    \regret_{\infoSet}^{1:T}(\dev)
      &\le
        \regret^{1:T}_{\infoSet}(\dev_{\preceq \infoSet})
        + \abs{\InfoSets_i(\infoSet, a)}
          (d - 1)f(T).
    \shortintertext{We can then bound the maximum immediate regret at $\infoSet$ by \cref{eq:immediateSingleTargetRegretBound},}
    \regret_{\infoSet}^{1:T}(\dev)
      &\le
        f(T)
        + \abs{\InfoSets_i(\infoSet, a)}
          (d - 1)f(T)\\
      &\le
        \abs{\InfoSets_i}f(T),
      \label{eq:singleTargetElevationCompletion}
  \end{align}
  where the last inequality follows from the fact that
  $(d - 1)\abs{\InfoSets_i(\infoSet, a)} \le \abs{\InfoSets_i} - 1$.
  \Cref{eq:singleTargetElevationCompletion} completes the proof.
\end{proof}
\correctionEnd

\section{Regret Matching for Time Selection}
In an online decision problem (also called a \emph{prediction with expert advice} problem), \emph{regret matching} is a learning algorithm that accumulates a vector of regrets, $\regret^{1:t-1}$---one for each deviation or ``expert'', $\dev \in \DevSet \subseteq \DevSet^{\SWAP}_{\PureStratSet_i}$---and chooses its mixed strategy, $\strat^t$, on each round as the fixed point of a linear operator.
We generalize this algorithm and three extensions---\rmPlus/, regret approximation, and predictions---to the time selection setting.

\subsection{Background}

\subsubsection{Regret Matching}

The regret matching operator is constructed from a vector of non-negative \emph{link outputs},
$y^t \in \reals_+^{\abs{\DevSet}}$,
generated by applying a link function,
$f : \reals^{\abs{\PureStratSet_i}} \to \reals^{\abs{\PureStratSet_i}}_+$,
to the cumulative regrets, \ie/,
$y^t = f(\regret^{1:t-1})$.
The operator is defined as
\begin{align}
  \rmOperator^t :
    \strat_i
    \mapsto
    \frac{1}{z^t}
      \sum_{\dev \in \DevSet} \dev(\strat_i) y^t_{\dev},
\end{align}
where
$z^t = \sum_{\dev \in \DevSet} y^t_{\dev}$
is the sum of the link outputs,
and $\strat^t_i$ is chosen arbitrarily if $z^t = 0$.

Regret bounds are generally derived for regret matching algorithms by choosing $f = \alpha g$ for some $\alpha > 0$, where $g$ is part of a Gordon triple~\parencite{gordon2005no}, $\tuple{G, g, \gamma}$.
A Gordon triple is a triple consisting of a potential function, $G : \reals^n \to \reals$, a scaled link function $g: \reals^n \to \reals^n_+$, and a size function, $\gamma : \reals^n \to \reals_+$, where they satisfy the generalized smoothness condition
$G(x + x') \leq G(x) + x' \cdot g(x) + \gamma(x')$
for any $x,x' \in \reals^n$.
By applying the potential function to the cumulative regret, we can unroll the recursive bound to get a simple bound on the cumulative regret itself.

While its bounds are not quite optimal, \textcite{Hart00}'s original regret matching algorithm, defined with the \emph{rectified linear unit} (\emph{ReLU}) link function, $\cdot^+ = \max\set{\cdot, 0}$, is often exceptionally effective in practice (see, \eg/, \textcite{waugh2015unified,burch2017time}).
We focus our analysis on this link function, but our arguments readily apply to other link functions.
Only the final regret bounds will change.
We follow the typical convention for analyzing \textcite{Hart00}'s regret matching with $\gamma(x) = \frac{1}{2} \norm{x}_2^2$, $G(x) = \gamma(x^+)$, and $g = f$.

\subsubsection{Regret Matching\textsuperscript{+}}

Instead of the cumulative regrets, \rmPlus/ updates a vector of pseudo regrets (sometimes called ``q-regrets''), $q^{1:t} = (q^{1:t-1} + \regret^t)^+ \ge \regret^{1:t}$~\parencite{cfrPlus,solvingHulhe}.
If we assume a \emph{positive invariant} potential function where $G((x + x')^+) \le G(x + x')$, then the same regret bounds follow from the same arguments used in the analysis of regret matching~\textcite{dorazio2020}.
Note that this condition is satisfied with equality for the quadratic potential $G(x) = \frac{1}{2} \norm{x^+}_2^2$.

\subsubsection{Regret Approximation}

Approximate regret matching is regret matching with approximated cumulative regrets, $\like{\regret^{1:t - 1}} \approx \regret^{1:t - 1}$~\parencite{waugh2015solving,dorazio2019frcfr}
or q-regrets, $\like{q^{1:t - 1}} \approx q^{1:t - 1}$~\parencite{morrill2016,dorazio2020}.
The regret of approximate regret matching depends on its approximation accuracy and motivates the use of function approximation when it is impractical to store and update the regret for each deviation individually.
While it requires an extra assumption, we derive simpler approximate regret matching bounds than those derived by \textcite{dorazio2019frcfr,dorazio2020} through an analysis of regret matching with predictions.

\subsubsection{Optimism via Predictions}

Optimistic regret matching augments its link inputs by adding a prediction of the instantaneous regret on the next round, \ie/, $m^t \sim \regret^t$.
If the predictions are accurate then the algorithm's cumulative regret will be very small.
This is a direct application of optimistic Lagrangian Hedging~\parencite{dorazioOptLH} to $\DevSet$-regret.
The general approach of adding predictions to improve the performance of regret minimizers originates with \textcite{rakhlin2013optimization,syrgkanis2015fast}.

\textcite{dorazioOptLH}'s analysis requires that $G$ and $g$ satisfy
$G(x') \ge G(x) + \ip{g(x)}{x' - x}$,
which is achieved, for example, if $G$ is convex and $g$ is a subgradient of $G$.
Note that this is achieved for \textcite{Hart00}'s regret matching because \textcite{greenwald2006bounds} shows that the ReLU function is the gradient of the convex quadratic potential $G(x) = \frac{1}{2} \norm{x^+}_2^2$.

\subsection{Time Selection}

To adapt regret matching to the time selection framework, we treat each deviation--time selection function pair as a separate expert and sum over the link outputs corresponding to a given deviation to construct the regret matching operator.
Our goal is then to ensure that each element of the cumulative regret matrix, $\regret^{1:T}$, grows sublinearly, where each index in the second dimension corresponds to a time selection function.
Each deviation $\dev \in \DevSet$ is assigned a finite set of time selection functions, $\tSelectionFn \in \TSelectionSet(\dev)$, so the regret matrix entries corresponding to $(\dev, \tSelectionFn)$-pairings where
$\tSelectionFn \notin \TSelectionSet(\dev)$,
are always zero.

To facilitate a unified analysis, we assume a general optimistic regret matching algorithm that, after $t - 1$ rounds, uses link outputs
$y^t_{\dev}
  = \sum_{\tSelectionFn \in \TSelectionSet(\dev)}
    \tSelectionFn^t (x^t_{\dev, \tSelectionFn} + m^t_{\dev, \tSelectionFn})^+$,
where either $x^t = \regret^{1:t - 1}$ or $x^t = q^{1:t-1}$ with $x^1 = \zeros$, and $m^t$ is a matrix of arbitrary predictions or approximation errors.
Notice that this means that $x^t + m^t$ can be generated from a function approximator instead of storing either term in a table.
Denoting the weighted sum of the link outputs as
$z^t
  = \sum_{\dev \in \DevSet}
    y^t_{\dev}$,
the regret matching operator has the same form as initially defined, \ie/,
\begin{align}
  \rmOperator^t :
    \strat_i
    \mapsto
    \frac{1}{z^t}
      \sum_{\dev \in \DevSet}
        \dev(\strat_i) y^t_{\dev}.
\end{align}

With this, we can bound the regret of optimistic regret matching thusly:
\setcounter{theorem}{2}
\begin{theorem}\label{thm:rm-opt-for-ts}
  Establish deviation set $\DevSet \subseteq \DevSet^{\SWAP}_{\PureStratSet_i}$ and finite time selection sets
  $\TSelectionSet(\dev) = \set{\tSelectionFn \in [0, 1]^T}_{j = 1}^{M(\dev)}$ for each deviation $\dev \in \DevSet$.
  On each round $1 \le t \le T$, $(\DevSet, \cdot^+)$-regret matching with respect to matrix $x^t$ (equal to either $\regret^{1:t - 1}$ or $q^{1:t-1}$) and predictions $m^t$ chooses its strategy,
  $\strat^t_i \in \StrategySet_i$,
  to be the fixed point of
  $\rmOperator^t: \strat_i \mapsto
    \nicefrac{1}{z^t}
    \sum_{\dev \in \DevSet} \dev(\strat_i) y^t_{\dev}$
  or an arbitrary strategy when $z^t = 0$, where
  link outputs are generated from
  $y^t_{\dev}
    = \sum_{\tSelectionFn \in \TSelectionSet(\dev)}
      \tSelectionFn^t (
        x^t_{\dev, \tSelectionFn} + m^t_{\dev, \tSelectionFn}
      )^+$
  and $z^t = \sum_{\dev \in \DevSet} y^t_{\dev}$.
  This algorithm ensures that
  \[\regret^{1:T}(\dev, \tSelectionFn) \le \sqrt{
    \sum_{t=1}^T
      \sum_{\substack{
        \dev' \in \DevSet,\\
        \bar{\tSelectionFn} \in \TSelectionSet(\dev')
      }}
        \subex*{\bar{\tSelectionFn}^t \regret(\dev'; \strat^t) - m^t_{\dev', \bar{\tSelectionFn}}}^2}\]
  for every deviation $\dev$ and time selection function $\tSelectionFn$.
\end{theorem}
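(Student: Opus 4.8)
The plan is to treat each deviation--time-selection pair $(\dev, \tSelectionFn)$ as a single expert, which collapses the claim to a standard potential-based optimistic regret-matching bound over this expanded expert set. Write $\regret^t_{\dev, \tSelectionFn} = \tSelectionFn^t \regret(\dev; \strat^t)$ for the instantaneous regret of expert $(\dev, \tSelectionFn)$ on round $t$ and collect these into a matrix $\regret^t$, so that the quantity under the square root in the statement is exactly $\norm{\regret^t - m^t}_2^2$ summed over $t$. For the quadratic potential $G(x) = \frac{1}{2}\norm{x^+}_2^2$ with ReLU link $g = \cdot^+$, a single entry of the cumulative regret obeys $\regret^{1:T}(\dev, \tSelectionFn) \le \norm{(\regret^{1:T})^+}_2 = \sqrt{2 G(\regret^{1:T})}$, so it suffices to prove $2 G(\regret^{1:T}) \le \sum_{t=1}^T \norm{\regret^t - m^t}_2^2$.

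The crux is verifying the Blackwell orthogonality condition in the time-selection setting. Because $\strat^t_i$ is the fixed point $\strat^t_i = \frac{1}{z^t}\sum_{\dev \in \DevSet} \dev(\strat^t_i) y^t_{\dev}$ and player $i$'s utility is linear in their own strategy, $\utility_i(\strat^t_i, \strat^t_{-i}) = \frac{1}{z^t}\sum_{\dev} y^t_{\dev}\, \utility_i(\dev(\strat^t_i), \strat^t_{-i})$, which rearranges to $\sum_{\dev} y^t_{\dev}\, \regret(\dev; \strat^t) = 0$. Since $\regret(\dev; \strat^t)$ does not depend on $\tSelectionFn$, expanding $y^t_{\dev} = \sum_{\tSelectionFn \in \TSelectionSet(\dev)} \tSelectionFn^t (x^t_{\dev, \tSelectionFn} + m^t_{\dev, \tSelectionFn})^+$ turns this into $\ip{g(x^t + m^t)}{\regret^t} = 0$, the orthogonality we need. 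I would also note in passing that $\rmOperator^t$ maps the simplex into itself, so a fixed point exists by Brouwer.

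With the Blackwell condition established, I would run the optimistic Lagrangian-hedging recursion of \textcite{dorazioOptLH} specialized to this Gordon triple. Applying generalized smoothness at the predicted cumulative regret $\hat{x}^t = \regret^{1:t-1} + m^t$ gives $G(\regret^{1:t}) \le G(\hat{x}^t) + \ip{g(\hat{x}^t)}{\regret^t - m^t} + \frac{1}{2}\norm{\regret^t - m^t}_2^2$, and the subgradient inequality $G(\regret^{1:t-1}) \ge G(\hat{x}^t) - \ip{g(\hat{x}^t)}{m^t}$ (valid since $\cdot^+$ is a subgradient of the convex $G$) lets me replace $G(\hat{x}^t)$ by $G(\regret^{1:t-1})$. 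The prediction terms then cancel, and invoking $\ip{g(\hat{x}^t)}{\regret^t} = 0$ leaves $G(\regret^{1:t}) \le G(\regret^{1:t-1}) + \frac{1}{2}\norm{\regret^t - m^t}_2^2$. Telescoping from $G(\regret^{1:0}) = 0$ and taking the square root yields the stated bound. For the $q$-regret variant ($x^t = q^{1:t-1}$) I would insert the positive-invariance step $G((q^{1:t-1} + \regret^t)^+) \le G(q^{1:t-1} + \regret^t)$ before smoothness and bound the target entry by the larger $q$-regret; both facts are recorded in the background.

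The main obstacle is precisely the first step: confirming that aggregating the time-selection weights into the per-deviation link outputs $y^t_{\dev}$ \emph{before} forming the fixed point still yields $\ip{g(x^t + m^t)}{\regret^t} = 0$ over the full expert matrix. Everything downstream is the standard optimistic recursion, so the only genuinely new content is this faithfulness of the aggregation, which hinges on $\regret(\dev; \strat^t)$ being shared across all $\tSelectionFn \in \TSelectionSet(\dev)$.
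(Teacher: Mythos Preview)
Your proposal is correct and follows essentially the same approach as the paper: both establish the Blackwell condition $\ip{g(x^t + m^t)}{\regret^t} = 0$ by factoring $\regret(\dev;\strat^t)$ out of the time-selection sum to recover $\sum_{\dev} y^t_{\dev}\,\regret(\dev;\strat^t) = 0$, then run the optimistic smoothness-plus-subgradient recursion for the quadratic potential and telescope. The only cosmetic difference is that the paper applies $G$ to each $\tSelectionFn$-slice and sums while you apply $G$ to the full matrix at once, which is equivalent since $G$ is coordinate-separable; your identification of the aggregation step as the only genuinely new content matches exactly what the paper isolates.
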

\begin{proof}
  Let us overload $\TSelectionSet = \bigcup_{\dev \in \DevSet} \TSelectionSet(\dev)$ and let
  $a_{\cdot, \tSelectionFn} = [a_{\dev, \tSelectionFn}]_{\dev \in \DevSet}$
  for any matrix
  $a \in \reals^{\abs{\DevSet} \times \abs{\TSelectionSet}}$.
  Then, for any time selection function,
  $\tSelectionFn \in \TSelectionSet$,
  the quadratic potential function,
  $G(x) = \frac{1}{2} \norm{x^+}_2^2$
  is convex, positive invariant (with equality), has the ReLU function as its gradient~\parencite{greenwald2006bounds}, and is smooth with respect to
  $\gamma(x) = \frac{1}{2}\norm{x}^2_2$.
  Altogether, these properties imply that
  \begin{align}
    G\subex*{\subex*{x^t_{\cdot, \tSelectionFn} + \tSelectionFn^t \regret^t}^+}
      &=
        G\subex*{
          \subex*{x^t_{\cdot, \tSelectionFn}
          + m^t_{\cdot, \tSelectionFn}
          + \tSelectionFn^t \regret^t
          - m^t_{\cdot, \tSelectionFn}}^+}\\
      &=
        G\subex*{
          x^t_{\cdot, \tSelectionFn}
          + m^t_{\cdot, \tSelectionFn}
          + \tSelectionFn^t \regret^t
          - m^t_{\cdot, \tSelectionFn}}\\
      &\le
        G\subex*{
          x^t_{\cdot, \tSelectionFn}
          + m^t_{\cdot, \tSelectionFn}}
        + \ip{
            \tSelectionFn^t \regret^t - m^t_{\cdot, \tSelectionFn}
          }{
            \subex*{x^t_{\cdot, \tSelectionFn} + m^t_{\cdot, \tSelectionFn}}^+
          }
        + \gamma\subex*{
          \tSelectionFn^t \regret^t - m^t_{\cdot, \tSelectionFn}
        },
  \end{align}
  where $\regret^t = [\regret(\dev; \strat^t)]_{\dev \in \DevSet}$ is the vector of instantaneous regrets on round $t$.

  By convexity,
  $G(a) - G(b) \le \ip{\grad G(a)}{a - b}$,
  for any vectors $a$ and $b$, so we substitute $a = x^t_{\cdot, \tSelectionFn} + m^t_{\cdot, \tSelectionFn}$ and $b = x^t_{\cdot, \tSelectionFn}$ to bound
  $G(x^t_{\cdot, \tSelectionFn}
      + m^t_{\cdot, \tSelectionFn})
      - \ip{m^t_{\cdot, \tSelectionFn}}{(x^t_{\cdot, \tSelectionFn}
      + m^t_{\cdot, \tSelectionFn})^+}
    \le
      G(x^t_{\cdot, \tSelectionFn})$.
  Therefore,
  \begin{align}
    G\subex*{
        \subex*{x^t_{\cdot, \tSelectionFn} + \tSelectionFn^t \regret^t}^+
    }
      &\le
        G(x^t_{\cdot, \tSelectionFn})
        + \tSelectionFn^t \ip{
            \regret^t
          }{
            \subex*{x^t_{\cdot, \tSelectionFn} + m^t_{\cdot, \tSelectionFn}}^+
          }
        + \gamma\subex*{\tSelectionFn^t \regret^t - m^t_{\cdot, \tSelectionFn}}\\
      &=
        G\subex*{
          \subex*{x^t_{\cdot, \tSelectionFn}}^+
        }
        + \tSelectionFn^t \ip{
            \regret^t
          }{
            \subex*{x^t_{\cdot, \tSelectionFn} + m^t_{\cdot, \tSelectionFn}}^+
          }
        + \gamma\subex*{\tSelectionFn^t \regret^t - m^t_{\cdot, \tSelectionFn}}.
  \end{align}

  Summing the potentials across time selection functions,
  \begin{align}
    \sum_{\tSelectionFn \in \TSelectionSet}
        G\subex*{\subex*{x^t_{\cdot, \tSelectionFn} + \tSelectionFn^t \regret^t}^+}
      &\le
        \sum_{\tSelectionFn \in \TSelectionSet}
          G\subex*{\subex*{x^t_{\cdot, \tSelectionFn}}^+}
          + \tSelectionFn^t \ip{
                \regret^t
              }{
                \subex*{x^t_{\cdot, \tSelectionFn} + m^t_{\cdot, \tSelectionFn}}^+
              }
          + \gamma\subex*{\tSelectionFn^t \regret^t - m^t_{\cdot, \tSelectionFn}}.
  \end{align}

  With some algebra, we can rewrite the sum of inner products:
  \begin{align}
    \sum_{\tSelectionFn \in \TSelectionSet}
        \tSelectionFn^t \ip{\regret^t}{(x^t_{\cdot, \tSelectionFn} + m^t_{\cdot, \tSelectionFn})^+}
      &=
        \sum_{\tSelectionFn \in \TSelectionSet}
          \sum_{\dev \in \DevSet}
            \tSelectionFn^t
            \regret(\dev; \strat^t)
            \subex*{x^t_{\dev, \tSelectionFn} + m^t_{\dev, \tSelectionFn}}^+\\
      &=
        \sum_{\dev \in \DevSet}
          \regret(\dev; \strat^t)
          \sum_{\tSelectionFn \in \TSelectionSet(\dev)}
            \tSelectionFn^t
            \subex*{x^t_{\dev, \tSelectionFn} + m^t_{\dev, \tSelectionFn}}^+\\
      &=
        \sum_{\dev \in \DevSet} \regret(\dev; \strat^t) y^t_{\dev}\\
      &=
        \ip{\regret^t}{y^t}.
  \end{align}

  Since the strategy $\strat_i^t$ is the fixed point of $\rmOperator^t$ generated from link outputs $y^t$, the Blackwell condition $\ip{\regret^t}{y^t} \le 0$ is satisfied with equality.
  For proof, see, for example, \textcite{greenwald2006bounds}.
  The sum of potential functions after $T$ rounds are then bounded as
  \begin{align}
    \sum_{\tSelectionFn \in \TSelectionSet}
        G\subex*{\subex*{x^T_{\cdot, \tSelectionFn} + \tSelectionFn^T \regret^T}^+}
      &\le
        \sum_{\tSelectionFn \in \TSelectionSet}
          G\subex*{\subex*{x^T_{\cdot, \tSelectionFn}}^+}
          + \gamma(\tSelectionFn^T \regret^T - m^T_{\cdot, \tSelectionFn}).
  \end{align}

  Expanding the definition of $\gamma$,
  \begin{align}
    \sum_{\tSelectionFn \in \TSelectionSet}
        G\subex*{\subex*{x^T_{\cdot, \tSelectionFn} + \tSelectionFn^T \regret^T}^+}
      &\le
        \sum_{\tSelectionFn \in \TSelectionSet}
          G\subex*{\subex*{x^T_{\cdot, \tSelectionFn}}^+}
        + \dfrac{1}{2} \sum_{\tSelectionFn \in \TSelectionSet}
          \sum_{\dev \in \DevSet}
            \subex*{\tSelectionFn^T \regret(\dev; \strat^T) - m^T_{\dev, \tSelectionFn}}^2\\
      &=
        \sum_{\tSelectionFn \in \TSelectionSet}
          G\subex*{\subex*{x^T_{\cdot, \tSelectionFn}}^+}
        + \dfrac{1}{2} \sum_{\substack{
          \dev \in \DevSet,\\
          \tSelectionFn \in \TSelectionSet(\dev)}}
            \subex*{\tSelectionFn^T \regret(\dev; \strat^T) - m^T_{\dev, \tSelectionFn}}^2.\\
      \shortintertext{Unrolling the recursion accross time,}
      &=
        \dfrac{1}{2}
        \sum_{t = 1}^T
          \sum_{\substack{
              \dev \in \DevSet,\\
              \tSelectionFn \in \TSelectionSet(\dev)}}
            \subex*{\tSelectionFn^t \regret(\dev; \strat^t) - m^t_{\dev, \tSelectionFn}}^2.
  \end{align}

  We lower bound
  \begin{align}
    \sum_{\tSelectionFn \in \TSelectionSet}
        G\subex*{\subex*{x^{T + 1}_{\cdot, \tSelectionFn}}^+}
      &=
        \frac{1}{2}
        \sum_{\tSelectionFn \in \TSelectionSet}
          \sum_{\dev \in \DevSet}
            \subex*{\subex*{ x^{T + 1}_{\dev, \tSelectionFn} }^+}^2\\
      &\geq
        \frac{1}{2}
        \max_{\substack{
          \dev \in \DevSet,\\
          \tSelectionFn \in \TSelectionSet(\dev)
        }}
        \subex*{
          \subex*{x^{T + 1}_{\dev, \tSelectionFn}}^+
        }^2\\
    \shortintertext{so that}
    \frac{1}{2}
    \max_{\substack{
      \dev \in \DevSet,\\
      \tSelectionFn \in \TSelectionSet(\dev)
    }}
    \subex*{
      \subex*{x^{T + 1}_{\dev, \tSelectionFn}}^+
    }^2
      &\leq
        \dfrac{1}{2}
        \sum_{t = 1}^T
          \sum_{\substack{
              \dev \in \DevSet,\\
              \tSelectionFn \in \TSelectionSet(\dev)}}
            \subex*{\tSelectionFn^t \regret(\dev; \strat^t) - m^t_{\dev, \tSelectionFn}}^2.
    \shortintertext{Multiplying both sides by two, taking the square root, and applying $\regret^{1:T}(\dev, \tSelectionFn) \le \subex*{x^{T + 1}_{\dev, \tSelectionFn}}^+$, we arrive at the final bound,}
    \max_{\substack{
      \dev \in \DevSet,\\
      \tSelectionFn \in \TSelectionSet(\dev)
    }} \regret^{1:T}(\dev, \tSelectionFn)
      &\leq
        \sqrt{
          \sum_{t = 1}^T
            \sum_{\substack{
                \dev' \in \DevSet,\\
                \bar{\tSelectionFn} \in \TSelectionSet(\dev')}}
              \subex*{\bar{\tSelectionFn}^t \regret(\dev'; \strat_i^t) - m^t_{\dev', \bar{\tSelectionFn}}}^2
        }.
  \end{align}
  Since the bound is true of the worst-case $\dev \in \DevSet$ and $\tSelectionFn \in \TSelectionSet$, it is true of each pair, thereby proving the claim.
\end{proof}

If all of the predictions $m^t$ are zero, then we arrive at a simple bound for exact regret matching.
We only prove the bound for ordinary regret matching for simplicity but the result and arguments are identical for exact \rmPlus/.
\setcounter{corollary}{0}
\begin{corollary}
  \label{cor:exact-rm}
  Given deviation set $\DevSet \subseteq \DevSet^{\SWAP}_{\PureStratSet_i}$ and finite time selection sets
$\TSelectionSet(\dev) = \set{\tSelectionFn_j \in [0, 1]^T}_{j = 1}^{M(\dev)}$ for each deviation $\dev \in \DevSet$,
$(\DevSet, \cdot^+)$-regret matching chooses a strategy on each round $1 \le t \le T$ as the fixed point of
$\rmOperator^t: \strat_i \mapsto
      \nicefrac{1}{z^t}
      \sum_{\dev \in \DevSet} \dev(\strat_i) y^t_{\dev}$
or an arbitrary strategy when $z^t = 0$, where
\emph{link outputs} are generated from exact regrets
$y^t_{\dev}
  = \sum_{\tSelectionFn \in \TSelectionSet(\dev)}
    \tSelectionFn^t (
      \regret^{1:t - 1}(\dev, \tSelectionFn)
    )^+$
and $z^t = \sum_{\dev \in \DevSet} y^t_{\dev}$.
This algorithm ensures that
$\regret^{1:T}(\dev, \tSelectionFn)
  \le
    2 \maxReward \sqrt{
      M^* \maxActivation(\DevSet) T
    }$
for any deviation $\dev$ and time selection function $\tSelectionFn$, where
$\maxActivation(\DevSet) = \max_{\pureStrat_i \in \PureStratSet_i} \sum_{\dev \in \DevSet} \ind{\dev(\pureStrat_i) \ne \pureStrat_i}$
is the maximal activation of $\DevSet$~\parencite{greenwald2006bounds}.
 \end{corollary}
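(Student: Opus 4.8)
The plan is to obtain this result as a direct specialization of \cref{thm:rm-opt-for-ts}, the general optimistic bound, by switching off the predictions. First I would set $m^t = \zeros$ and $x^t = \regret^{1:t-1}$ in that theorem. With these choices the theorem's link outputs $y^t_{\dev} = \sum_{\tSelectionFn \in \TSelectionSet(\dev)} \tSelectionFn^t (x^t_{\dev, \tSelectionFn} + m^t_{\dev, \tSelectionFn})^+$ coincide exactly with the corollary's $y^t_{\dev} = \sum_{\tSelectionFn \in \TSelectionSet(\dev)} \tSelectionFn^t (\regret^{1:t-1}(\dev, \tSelectionFn))^+$, so the algorithm in the statement is precisely the instance of optimistic regret matching analyzed there. \cref{thm:rm-opt-for-ts} then immediately yields, for every $\dev$ and $\tSelectionFn$, the bound $\regret^{1:T}(\dev, \tSelectionFn) \le \sqrt{\sum_{t=1}^T \sum_{\dev' \in \DevSet, \, \bar{\tSelectionFn} \in \TSelectionSet(\dev')} (\bar{\tSelectionFn}^t \regret(\dev'; \strat^t))^2}$. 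All that remains is to simplify the right-hand side into the claimed closed form.

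Next I would peel off the easy factors. Since each time selection weight satisfies $\bar{\tSelectionFn}^t \in [0,1]$, we have $(\bar{\tSelectionFn}^t \regret(\dev'; \strat^t))^2 \le \regret(\dev'; \strat^t)^2$. Because each deviation $\dev'$ is associated with at most $M^* = \max_{\dev} M(\dev)$ time selection functions, summing over $\bar{\tSelectionFn} \in \TSelectionSet(\dev')$ contributes a factor of at most $M^*$. This reduces the bound to $\regret^{1:T}(\dev, \tSelectionFn) \le \sqrt{M^* \sum_{t=1}^T \sum_{\dev' \in \DevSet} \regret(\dev'; \strat^t)^2}$, isolating the per-round sum of squared instantaneous regrets as the only nontrivial quantity to control.

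The main obstacle is bounding $\sum_{\dev' \in \DevSet} \regret(\dev'; \strat^t)^2$ by the maximal activation rather than by the (possibly huge) cardinality $\abs{\DevSet}$; this is exactly where the activation structure must be exploited. I would write the instantaneous regret as an expectation over pure strategies, $\regret(\dev'; \strat^t) = \sum_{\pureStrat_i \in \PureStratSet_i} \strat^t_i(\pureStrat_i) \big(\utility_i(\dev'(\pureStrat_i), \strat^t_{-i}) - \utility_i(\pureStrat_i, \strat^t_{-i})\big)$, noting that the bracketed per-strategy term vanishes whenever $\dev'(\pureStrat_i) = \pureStrat_i$ and has magnitude at most $2\maxReward$ otherwise. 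Applying Jensen's inequality (convexity of the square, with the probability weights $\strat^t_i(\pureStrat_i)$) pushes the square inside the expectation, so $\regret(\dev'; \strat^t)^2 \le (2\maxReward)^2 \sum_{\pureStrat_i} \strat^t_i(\pureStrat_i) \ind{\dev'(\pureStrat_i) \ne \pureStrat_i}$. Summing over $\dev'$ and exchanging the order of summation, the inner count $\sum_{\dev' \in \DevSet} \ind{\dev'(\pureStrat_i) \ne \pureStrat_i}$ is at most $\maxActivation(\DevSet)$ by definition, and since $\strat^t_i$ is a probability distribution the whole sum collapses to $(2\maxReward)^2 \maxActivation(\DevSet)$.

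Finally I would substitute this per-round bound back in: the double sum over $t$ and $\dev'$ becomes $\sum_{t=1}^T (2\maxReward)^2 \maxActivation(\DevSet) = (2\maxReward)^2 \maxActivation(\DevSet) T$, so $\regret^{1:T}(\dev, \tSelectionFn) \le \sqrt{M^* (2\maxReward)^2 \maxActivation(\DevSet) T} = 2\maxReward \sqrt{M^* \maxActivation(\DevSet) T}$, which holds for every $\dev \in \DevSet$ and $\tSelectionFn \in \TSelectionSet(\dev)$ and completes the proof. The \rmPlus/ version would follow verbatim from the same argument, since \cref{thm:rm-opt-for-ts} is proved for both $x^t = \regret^{1:t-1}$ and $x^t = q^{1:t-1}$.
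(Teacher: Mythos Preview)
Your proof is correct and follows essentially the same approach as the paper: specialize \cref{thm:rm-opt-for-ts} with $m^t = \zeros$, bound the time-selection factor by $M^*$ using $\bar{\tSelectionFn}^t \in [0,1]$, and bound $\sum_{\dev'} \regret(\dev'; \strat^t)^2 \le (2\maxReward)^2 \maxActivation(\DevSet)$. The paper simply cites \textcite{greenwald2006bounds} for the last inequality, whereas you supply the Jensen/activation argument explicitly, but the structure and content of the proof are the same.
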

\begin{proof}
  Since $m^t = \zeros$ on every round $t$, we know from \cref{thm:rm-opt-for-ts} that
  \begin{align}
    \regret^{1:T}(\dev, \tSelectionFn)
      &\leq
        \sqrt{
          \sum_{t = 1}^T
            \sum_{\substack{
                \dev' \in \DevSet,\\
                \bar{\tSelectionFn} \in \TSelectionSet(\dev')}}
              \subex*{\bar{\tSelectionFn}^t \regret(\dev'; \strat_i^t)}^2
        }\\
      &=
        \sqrt{
          \sum_{t = 1}^T
            \sum_{\dev' \in \DevSet}
              \subex*{\regret(\dev'; \strat_i^t)}^2
              \sum_{\bar{\tSelectionFn} \in \TSelectionSet(\dev')}
                \subex*{\bar{\tSelectionFn}^t}^2}.\\
      \shortintertext{Since $0 \le \bar{\tSelectionFn}^t \le 1$,}
      &\le
        \sqrt{
          M^* \sum_{t = 1}^T
            \sum_{\dev' \in \DevSet}
              \subex*{\regret(\dev'; \strat_i^t)}^2}.\\
      \shortintertext{Since
        $\sum_{\dev' \in \DevSet} \subex*{\regret(\dev'; \strat_i^t)}^2
          \le
            (2\maxReward)^2 \maxActivation(\DevSet)$
        (see \textcite{greenwald2006bounds}),}
      &\le
        \sqrt{M^* (2\maxReward)^2 \maxActivation(\DevSet) T}\\
      &=
        2\maxReward \sqrt{M^* \maxActivation(\DevSet) T}.\\
  \end{align}
  This completes the argument.
\end{proof}

If $x^t + m^t$ is generated from a function attempting to approximate $x^t + [\tSelectionFn^t \regret(\dev; \strat^t)]_{\dev \in \DevSet, \tSelectionFn \in \TSelectionSet}$, then we can rewrite \cref{thm:rm-opt-for-ts} in terms of its approximation error.
\begin{corollary}
  Establish deviation set $\DevSet \subseteq \DevSet^{\SWAP}_{\PureStratSet_i}$ and finite time selection sets
  $\TSelectionSet(\dev) = \set{\tSelectionFn \in [0, 1]^T}_{j = 1}^{M(\dev)}$ for each deviation $\dev \in \DevSet$.
  On each round $1 \le t \le T$, approximate $(\DevSet, \cdot^+)$-regret matching with respect to matrix $x^t$ (equal to either $\regret^{1:t - 1}$ or $q^{1:t-1}$) chooses its strategy,
  $\strat^t_i \in \StrategySet_i$,
  to be the fixed point of
  $\rmOperator^t: \strat_i \mapsto
    \nicefrac{1}{z^t}
    \sum_{\dev \in \DevSet} \dev(\strat_i) y^t_{\dev}$
  or an arbitrary strategy when $z^t = 0$, where
  link outputs are generated from approximation matrix $\like{y}^t \in \reals^{\abs{\DevSet} \times \abs{\TSelectionSet}}$ as
  $y^t_{\dev}
    = \sum_{\tSelectionFn \in \TSelectionSet(\dev)}
      \tSelectionFn^t (
        \like{y}^t_{\dev, \tSelectionFn}
      )^+$
  and $z^t = \sum_{\dev \in \DevSet} y^t_{\dev}$.
  This algorithm ensures that
  \[\regret^{1:T}(\dev, \tSelectionFn) \le \sqrt{
    \sum_{t=1}^T
      \sum_{\substack{
        \dev' \in \DevSet,\\
        \bar{\tSelectionFn} \in \TSelectionSet(\dev')
      }}
        \subex*{x^t_{\dev', \bar{\tSelectionFn}} + \bar{\tSelectionFn}^t \regret(\dev'; \strat^t) - \like{y}^t_{\dev', \bar{\tSelectionFn}}}^2}\]
  for every deviation $\dev$ and time selection function $\tSelectionFn$.
\end{corollary}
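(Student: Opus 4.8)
The plan is to recognize this corollary as an immediate specialization of \cref{thm:rm-opt-for-ts}, the optimistic regret matching bound for the time selection setting. The key observation is that approximate regret matching, which builds its link outputs from an approximation matrix $\like{y}^t$, is structurally identical to optimistic regret matching once the prediction matrix is identified with the approximation residual.

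First I would set $m^t = \like{y}^t - x^t$, so that the optimistic algorithm's link inputs $x^t + m^t$ equal $\like{y}^t$ exactly. With this choice, the link outputs in \cref{thm:rm-opt-for-ts}, namely $y^t_{\dev} = \sum_{\tSelectionFn \in \TSelectionSet(\dev)} \tSelectionFn^t \subex*{x^t_{\dev, \tSelectionFn} + m^t_{\dev, \tSelectionFn}}^+$, coincide term by term with the approximate algorithm's link outputs, $y^t_{\dev} = \sum_{\tSelectionFn \in \TSelectionSet(\dev)} \tSelectionFn^t \subex*{\like{y}^t_{\dev, \tSelectionFn}}^+$. Since the two algorithms then generate the same operator $\rmOperator^t$, they produce the same strategy $\strat^t_i$ on every round, and the guarantee of \cref{thm:rm-opt-for-ts} transfers directly.

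Next I would substitute this identification into the optimistic bound. Each summand there has the form $\subex*{\bar{\tSelectionFn}^t \regret(\dev'; \strat^t) - m^t_{\dev', \bar{\tSelectionFn}}}^2$; replacing $m^t_{\dev', \bar{\tSelectionFn}}$ with $\like{y}^t_{\dev', \bar{\tSelectionFn}} - x^t_{\dev', \bar{\tSelectionFn}}$ rewrites it as $\subex*{x^t_{\dev', \bar{\tSelectionFn}} + \bar{\tSelectionFn}^t \regret(\dev'; \strat^t) - \like{y}^t_{\dev', \bar{\tSelectionFn}}}^2$, which is precisely the summand appearing in the stated corollary. Summing over rounds and over expert--time-selection pairs and taking the square root then yields the claimed bound verbatim.

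I expect no genuine obstacle here, since the argument is purely a reparameterization. The only point requiring a moment's care is to confirm that \cref{thm:rm-opt-for-ts} places no constraint on $m^t$ beyond being an arbitrary matrix---which it explicitly does not, as the predictions there ``can be generated from a function approximator''---so that the residual $\like{y}^t - x^t$ is a legitimate instantiation. Granting this, the corollary is simply \cref{thm:rm-opt-for-ts} re-expressed in terms of the approximation $\like{y}^t$ in place of an explicit prediction $m^t$.
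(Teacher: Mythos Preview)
Your proposal is correct and follows exactly the same route as the paper: set $m^t = \like{y}^t - x^t$ so that the approximate algorithm is an instance of the optimistic one from \cref{thm:rm-opt-for-ts}, then substitute into that theorem's bound to obtain the stated inequality. The paper's proof is a two-sentence version of what you wrote.
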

\begin{proof}
  Since the predictions $m^t$ are arbitrary, we can set $\like{y}^t = x^t + m^t$, which implies that $m^t = \like{y}^t - x^t$.
  Substituting this into the bound of \cref{thm:rm-opt-for-ts}, we arrive at the desired result.
\end{proof}

\section{EFR}
EFR's regret decomposition is a straightforward generalization of CFR's by \textcite{CFR-TR} but it requires us to build up some mathematical machinery.

\subsection{Preliminaries}

\textbf{Deviation player observations.}
Let
\[
  o: \Actions(\infoSet) \ni a; \DevSet^{\SWAP}_{\Actions(\infoSet)} \ni \dev = \begin{cases}
    * &\mbox{if } \dev \in \DevSet^{\EXT}_{\Actions(\infoSet)}\\
    a &\mbox{o.w.}
  \end{cases}
\]
return the observation that the deviation player makes when they apply action transformation $\dev$ and action $a$ is recommended.
Now, we can characterize how memory probabilities evolve in general as the deviation player chooses actions.
For any $\infoSet' \in \InfoSets_i(\infoSet, a')$ child of $\infoSet$ following action $a'$ and observation $b \in \set{*} \cup \Actions(\infoSet)$, the probability of memory $gb$ under behavioral deviation $\dev$ is
\begin{align}
  \tSelectionFn_{\dev}(\infoSet', gb; \strat_i)
    = \tSelectionFn_{\dev}(\infoSet, g; \strat_i) \sum_{a \in \Actions(\infoSet)} \ind{\dev_{\infoSet, g}(a) = a' \land o(a; \dev_{\infoSet, g}) = b} \strat_i(a \given \infoSet).
\end{align}

\subsection{Counterfactual Value}

\textbf{The counterfactual value of an action.}
The counterfactual value function is
\begin{align*}
  \cfv_{\infoSet} : a; \strat \mapsto
    \sum_{\substack{
      h \in \infoSet,\\
      z \in \TerminalHistories}}
        \reachProb(h; \strat_{-i})
        \reachProb(ha, z; \strat_i, \strat_{-i}) \utility_i(z).
\end{align*}
We overload
$\cfv_{\infoSet}(\strat'_i(\infoSet); \strat) = \E_{a' \sim \strat'_i(\infoSet)} \cfv_{\infoSet}(a'; \strat)$.
If the same strategy is used in $\infoSet$ as well as the following information sets, we overload
$\cfv_{\infoSet}(\strat) = \cfv_{\infoSet}(\strat_i(\infoSet); \strat)$.
By splitting up the histories that lead out of $\infoSet$ into those that terminate without further input from $i$ and those that lead to to child information sets, we can decompose counterfactual values recursively:
\begin{align}
    \cfv_{\infoSet}(a; \strat)
      &= \sum_{\substack{
        h \in \infoSet,\\
        z \in \TerminalHistories
      }} \reachProb(h; \strat_{-i}) \reachProb(ha, z; \strat) \utility_i(z)\\
      &= \sum_{\substack{
        h \in \infoSet,\\
        z \in \TerminalHistories
      }} \reachProb(ha, z; \strat_i) \underbrace{\reachProb(z; \strat_{-i}) \utility_i(z)}_{\text{Terminal counterfactual values.}}\\
      &= \underbrace{
        \sum_{z \in \TerminalHistories_i(\infoSet, a)}
          \reachProb(z; \strat_{-i}) \utility_i(z)
        }_{\text{Expected value from terminal histories.}}
        + \underbrace{
          \sum_{\substack{
            h' \in \infoSet' \in \InfoSets_i(\infoSet, a)\\
            z \in \TerminalHistories
          }}
            \reachProb(h', z; \strat_i) \reachProb(z; \strat_{-i}) \utility_i(z)
        }_{\text{Expected value from non-terminal histories.}}\\
      &= \termValue(\infoSet, a; \strat_{-i})
        + \sum_{\infoSet' \in \InfoSets_i(\infoSet, a)}
          \underbrace{
            \sum_{a' \in \Actions(\infoSet')}
              \strat_i(a' \given \infoSet')
              \sum_{\substack{
                h' \in \infoSet'\\
                z \in \TerminalHistories
              }}
                \reachProb(h'a', z; \strat_i) \reachProb(z; \strat_{-i}) \utility_i(z).
          }_{\cfv_{\infoSet'}(\strat)}\\
      &= \underbrace{\termValue(\infoSet, a; \strat_{-i})}_{\text{Expected immediate value.}}
        + \underbrace{
          \sum_{\infoSet' \in \InfoSets_i(\infoSet, a)}
            \cfv_{\infoSet'}(\strat)
        }_{\text{Expected future value.}}.
  \label{eq:bellman}
\end{align}

\textbf{The values for a behavioral deviation.}
To account for the deviation player's memory when evaluating behavioral deviations, we must define a new variation of counterfactual value.
\begin{definition}
  The counterfactual value of behavioral deviation $\dev \in \DevSet^{\IN}_{\InfoSets_i}$ from information set $\infoSet$ and memory state $g \in G_i$, given strategy profile $\strat \in \StrategySet$, is
  \begin{align*}
    \hat{\cfv}_{\infoSet, g}(\dev; \strat)
      &= \sum_{a \in \Actions(\infoSet)}
        \strat_i(a \given \infoSet)
          \subex*{
            \termValue(\infoSet, \dev_{\infoSet, g}(a); \strat_{-i})
            + \sum_{\infoSet' \in \InfoSets_i(\infoSet, \dev_{\infoSet, g}(a))}
              \hat{\cfv}_{\infoSet', go(a; \dev_{\infoSet, g})}(\dev; \strat)
          }.
  \end{align*}
\end{definition}
The memory probability--weighted counterfactual value, $\tSelectionFn_{\dev}(\infoSet, g; \strat_i) \hat{\cfv}_{\infoSet, g}(\dev; \strat)$, represents the value that deviation $\dev$ achieves by playing through $\infoSet$ and $g$ given $\strat$, has a similar recursive form as the counterfactual value:
\begin{align}
  &\tSelectionFn_{\dev}(\infoSet, g; \strat_i) \hat{\cfv}_{\infoSet, g}(\dev; \strat)\\
    &=
      \tSelectionFn_{\dev}(\infoSet, g; \strat_i)
      \sum_{a \in \Actions(\infoSet)}
        \strat_i(a \given \infoSet)
          \subex*{
            \termValue(\infoSet, \dev_{\infoSet, g}(a); \strat_{-i})
            + \sum_{\infoSet' \in \InfoSets_i(\infoSet, \dev_{\infoSet, g}(a))}
              \hat{\cfv}_{\infoSet', go(a; \dev_{\infoSet, g})}(\dev; \strat)
          }\\
    &=
      \tSelectionFn_{\dev}(\infoSet, g; \strat_i)
      \sum_{a \in \Actions(\infoSet)}
        \strat_i(a \given \infoSet) \termValue(\infoSet, \dev_{\infoSet, g}(a); \strat_{-i})\\
      &\quad+ \tSelectionFn_{\dev}(\infoSet, g; \strat_i)
      \sum_{a \in \Actions(\infoSet)}
        \strat_i(a \given \infoSet)
        \sum_{\infoSet' \in \InfoSets_i(\infoSet, \dev_{\infoSet, g}(a))}
            \hat{\cfv}_{\infoSet', go(a; \dev_{\infoSet, g})}(\dev; \strat)\\
    &=
      \sum_{a \in \Actions(\infoSet)}
        \tSelectionFn_{\dev}(\infoSet, g; \strat_i) \strat_i(a \given \infoSet) \termValue(\infoSet, \dev_{\infoSet, g}(a); \strat_{-i})\\
      &\quad+ \sum_{\substack{
          a' \in \Actions(\infoSet),\\
          b \in \set{*} \cup \Actions(\infoSet)}}
        \ind{\dev_{\infoSet, g}(a) = a' \land o(a; \dev_{\infoSet, g}) = b}
        \tSelectionFn_{\dev}(\infoSet, g; \strat_i)
        \sum_{a \in \Actions(\infoSet)}
          \strat_i(a \given \infoSet)
          \sum_{\infoSet' \in \InfoSets_i(\infoSet, a')}
            \hat{\cfv}_{\infoSet', gb}(\dev; \strat)\\
    &=
      \sum_{a \in \Actions(\infoSet)}
        \tSelectionFn_{\dev}(\infoSet, g; \strat_i) \strat_i(a \given \infoSet) \termValue(\infoSet, \dev_{\infoSet, g}(a); \strat_{-i})\\
      &\quad+ \sum_{\substack{
          a' \in \Actions(\infoSet),\\
          b \in \set{*} \cup \Actions(\infoSet)}}
        \sum_{\infoSet' \in \InfoSets_i(\infoSet, a')}
          \hat{\cfv}_{\infoSet', gb}(\dev; \strat)
        \underbrace{
          \tSelectionFn_{\dev}(\infoSet, g; \strat_i)
          \sum_{a \in \Actions(\infoSet)}
            \ind{\dev_{\infoSet, g}(a) = a' \land o(a; \dev_{\infoSet, g}) = b}
            \strat_i(a \given \infoSet)
        }_{\tSelectionFn_{\dev}(\infoSet', gb; \strat_i)}\\
    &=
      \sum_{a \in \Actions(\infoSet)}
        \tSelectionFn_{\dev}(\infoSet, g; \strat_i) \strat_i(a \given \infoSet) \termValue(\infoSet, \dev_{\infoSet, g}(a); \strat_{-i})
      + \sum_{\substack{
          a \in \Actions(\infoSet),\\
          \infoSet' \in \InfoSets_i(\infoSet, a),\\
          b \in \set{*} \cup \Actions(\infoSet)
        }}
          \tSelectionFn_{\dev}(\infoSet', gb; \strat_i) \hat{\cfv}_{\infoSet', gb}(\dev; \strat).
\end{align}

At any information set $\infoSet_0$ at the start of the game, both the counterfactual value and weighted counterfactual value of behavioral deviation $\dev$ matches the expected value of $\dev(\strat_i)$, \ie/,
\[\hat{\cfv}_{\infoSet_0, \emptyHistory}(\dev; \strat)
  = \tSelectionFn_{\dev}(\infoSet_0, \emptyHistory; \strat_i) \hat{\cfv}_{\infoSet_0, \emptyHistory}(\dev; \strat)
  = \E_{\pureStrat_i \sim \dev(\strat_i)} [\utility_i(\pureStrat_i, \strat_{-i})].\]
If there are no non-identity internal transformations in $\dev$, then $\hat{\cfv}$ reduces to the usual counterfactual value function under strategy profile $(\dev(\strat_i), \strat_{-i})$.
If all of the transformations following information set $\infoSet$ are identity transformations, then the counterfactual value of $\dev$ is just the counterfactual value of applying the transformation at $\infoSet$ and memory state $g$, \ie/,
\begin{align}
  \hat{\cfv}_{\infoSet, g}(\dev_{\preceq \infoSet, \sqsubseteq g}; \strat)
    &=
      \sum_{a \in \Actions(\infoSet)}
        \strat_i(a \given \infoSet)
          \subex*{
            \termValue(\infoSet, \dev_{\infoSet, g}(a); \strat_{-i})
            + \sum_{\infoSet' \in \InfoSets_i(\infoSet, \dev_{\infoSet, g}(a))}
              \hat{\cfv}_{\infoSet', go(a; \dev_{\infoSet, g})}(\dev^1; \strat)
          }\\
    &=
      \sum_{a \in \Actions(\infoSet)}
        \strat_i(a \given \infoSet)
          \subex*{
            \termValue(\infoSet, \dev_{\infoSet, g}(a); \strat_{-i})
            + \sum_{\infoSet' \in \InfoSets_i(\infoSet, \dev_{\infoSet, g}(a))}
              \cfv_{\infoSet'}(\strat)
          }\\
    &=
      \cfv_{\infoSet}(\dev_{\infoSet, g}(\strat_i(\infoSet)); \strat).
\end{align}

\subsection{Regret}

Next, we generalize the idea of immediate and full regret to behavioral deviations.

\textbf{Immediate regret.}
The immediate regret of behavioral deviation $\dev$ from information set $\infoSet$ and memory state $g$ is the immediate counterfactual regret weighted by the probability of $g$, \ie/,
\begin{align}
  \regret_{\infoSet}(\dev_{\preceq \infoSet, \sqsubseteq g}; \strat)
    &= \tSelectionFn_{\dev}(\infoSet, g; \strat_i)
      \subex*{
        \cfv_{\infoSet}(\dev_{\infoSet, g}(\strat_i(\infoSet)); \strat)
        - \cfv_{\infoSet}(\strat)
      }\\
    &= \tSelectionFn_{\dev}(\infoSet, g; \strat_i)
      \regret_{\infoSet}^{\CF}(\dev_{\infoSet, g}; \strat).
\end{align}

\textbf{Full regret.}
The full regret of behavioral deviation $\dev$ at information set $\infoSet$ and memory state $g$ is
\begin{align}
  \regret_{\infoSet, g}(\dev; \strat)
    &=
      \tSelectionFn_{\dev}(\infoSet, g; \strat_i)
      \subex*{
        \hat{\cfv}_{\infoSet, g}(\dev; \strat)
        - \cfv_{\infoSet}(\strat)
      }.
    \label{eq:full-phi-regret}
\end{align}

Some intuition can be gained about generalized immediate and full regret by making a formal connection between memory probabilities and reach probabilities.
If zero non-identity internal transformations are used in behavioral deviation $\dev$, then there is a unique memory state $g$ that $\dev$ realizes in information set $\infoSet$ and its probability coincides with the reach probability of the deviation strategy, $\dev(\strat_i)$, \ie/,
$\tSelectionFn_{\dev}(\infoSet, g; \strat_i) = \reachProb(\arbHistory(\infoSet); \dev(\strat_i))$.
After internal transformation $\dev^{a^{\TRIGGER} \to a^{\TARGET}}$ that outputs $a^{\TARGET} \ne a^{\TRIGGER}$ leading to a set of successors $\set{\infoSet' \succ \infoSet}$, there are two possible memory states, $ga^{\TARGET}$ and $ga^{\TRIGGER}$.
The reach probability at any $\infoSet'$ is then the sum of memory probabilities across these states, \ie/,
$\reachProb(\arbHistory(\infoSet'); \dev(\strat_i))
  = \tSelectionFn_{\dev}(\infoSet', ga^{\TARGET}; \strat_i) + \tSelectionFn_{\dev}(\infoSet', ga^{\TRIGGER}; \strat_i)$.
Thus, the full counterfactual regret at $\infoSet'$ weighted by a memory probability is nearly the difference in expected payoff from $\infoSet'$ between $\dev(\strat_i)$ and $\strat_i$ assuming that both play to $\infoSet'$ according to $\dev(\strat_i)$.
Minimizing regret with respect to each memory state is a stronger property than minimizing regret with respect to each deviation strategy reach probability because memory states distinguish between the different ways an information set could be reached.

At the start of the game, there is only one memory state, $\emptyHistory$, and
$\tSelectionFn_{\dev}(\infoSet, \emptyHistory; \strat_i) = 1$,
which means that $\regret_{\infoSet, \emptyHistory}(\dev; \strat)$ reduces to the difference in expected value achieved by $\dev(\strat_i)$ and $\strat_i$, \ie/,
\[\regret_{\infoSet, \emptyHistory}(\dev; \strat)
  = \E_{\pureStrat_i \sim \dev(\strat_i)} [\utility_i(\pureStrat_i, \strat_{-i})] - \utility_i(\strat).\]
Thus, bounding full regret at the start of the game ensures hindsight rationality.
We achieve this by showing a recursive decomposition between full and immediate regret so that an algorithm must only minimize immediate regret at each information set to be hindsight rational.
\cref{lem:recursivCfvDiff} is the key observation required for this decomposition.
\begin{lemma}
  \label{lem:recursivCfvDiff}
  Given strategy profile $\strat$ and behavioral deviation $\dev$, consider $\dev_{\infoSet, g}(\strat_i)$, the strategy for player $i$ that applies $\dev$ at information set $\infoSet$ assuming memory state $g$ and thereafter follows $\strat_i$.
  The regret for re-correlating after $\infoSet$ and $g$---that is, the difference between the counterfactual value of $\dev(\strat_i)$ and $\dev_{\infoSet, g}(\strat_i)$ from $\infoSet$ and $g$, weighted by the probability of $g$---is equal to the sum of full regrets at $\infoSet$'s and $g$'s children, \ie/,
  \[
    \tSelectionFn_{\dev}(\infoSet, g; \strat_i) \subex*{
      \hat{\cfv}_{\infoSet, g}(\dev; \strat) - \hat{\cfv}_{\infoSet, g}(\dev_{\preceq \infoSet, \sqsubseteq g}; \strat)
    }
      =
        \sum_{\substack{
            a' \in \Actions(\infoSet),\\
            \infoSet' \in \InfoSets_i(\infoSet, a'),\\
            b \in \set{*} \cup \Actions(\infoSet)}}
          \regret_{\infoSet', gb}(\dev; \strat).
  \]
\begin{proof}
  \begin{align}
    &\tSelectionFn_{\dev}(\infoSet, g; \strat_i) \subex*{
      \hat{\cfv}_{\infoSet, g}(\dev; \strat) - \hat{\cfv}_{\infoSet, g}(\dev_{\preceq \infoSet, \sqsubseteq g}; \strat)
    }\\
      &= \tSelectionFn_{\dev}(\infoSet, g; \strat_i)\hat{\cfv}_{\infoSet, g}(\dev; \strat)
        - \tSelectionFn_{\dev}(\infoSet, g; \strat_i)\hat{\cfv}_{\infoSet, g}(\dev_{\preceq \infoSet, \sqsubseteq g}; \strat)\\
      &=
        \sum_{a \in \Actions(\infoSet)}
          \tSelectionFn_{\dev}(\infoSet, g; \strat_i) \strat_i(a \given \infoSet) \termValue(\infoSet, \dev_{\infoSet, g}(a); \strat_{-i})
        + \sum_{\substack{
            a \in \Actions(\infoSet),\\
            \infoSet' \in \InfoSets_i(\infoSet, a),\\
            b \in \set{*} \cup \Actions(\infoSet)
          }}
            \tSelectionFn_{\dev}(\infoSet', gb; \strat_i) \hat{\cfv}_{\infoSet', gb}(\dev; \strat)\\
        &\quad- \sum_{a \in \Actions(\infoSet)}
          \tSelectionFn_{\dev}(\infoSet, g; \strat_i) \strat_i(a \given \infoSet) \termValue(\infoSet, \dev_{\infoSet, g}(a); \strat_{-i})
        - \sum_{\substack{
            a \in \Actions(\infoSet),\\
            \infoSet' \in \InfoSets_i(\infoSet, a),\\
            b \in \set{*} \cup \Actions(\infoSet)
          }}
            \tSelectionFn_{\dev}(\infoSet', gb; \strat_i) \cfv_{\infoSet', gb}(\strat)\\
      &=
        \sum_{\substack{
          a \in \Actions(\infoSet),\\
          \infoSet' \in \InfoSets_i(\infoSet, a),\\
          b \in \set{*} \cup \Actions(\infoSet)
        }}
          \tSelectionFn_{\dev}(\infoSet', gb; \strat_i) \hat{\cfv}_{\infoSet', gb}(\dev; \strat)
        - \sum_{\substack{
          a \in \Actions(\infoSet),\\
          \infoSet' \in \InfoSets_i(\infoSet, a),\\
          b \in \set{*} \cup \Actions(\infoSet)
        }}
          \tSelectionFn_{\dev}(\infoSet', gb; \strat_i) \cfv_{\infoSet', gb}(\strat)\\
      &=
        \sum_{\substack{
          a \in \Actions(\infoSet),\\
          \infoSet' \in \InfoSets_i(\infoSet, a),\\
          b \in \set{*} \cup \Actions(\infoSet)
        }}
          \tSelectionFn_{\dev}(\infoSet', gb; \strat_i) \subex*{
            \hat{\cfv}_{\infoSet', gb}(\dev; \strat) - \cfv_{\infoSet', gb}(\strat)}\\
      &=
        \sum_{\substack{
          a \in \Actions(\infoSet),\\
          \infoSet' \in \InfoSets_i(\infoSet, a),\\
          b \in \set{*} \cup \Actions(\infoSet)
        }}
          \regret_{\infoSet', gb}(\dev; \strat),
  \end{align}
  as required.
\end{proof}
\end{lemma}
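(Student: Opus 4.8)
The plan is to expand both weighted counterfactual values appearing on the left-hand side using the recursive decomposition of $\tSelectionFn_{\dev}(\infoSet, g; \strat_i)\hat{\cfv}_{\infoSet, g}(\dev; \strat)$ derived immediately above, and then to exploit the fact that $\dev$ and its truncation $\dev_{\preceq \infoSet, \sqsubseteq g}$ agree \emph{exactly} at $\infoSet$ with memory $g$. First I would write $\tSelectionFn_{\dev}(\infoSet, g; \strat_i)\hat{\cfv}_{\infoSet, g}(\dev; \strat)$ as the sum of an immediate term, $\sum_{a} \tSelectionFn_{\dev}(\infoSet, g; \strat_i)\strat_i(a \given \infoSet)\termValue(\infoSet, \dev_{\infoSet, g}(a); \strat_{-i})$, and a future term, $\sum_{a', \infoSet', b} \tSelectionFn_{\dev}(\infoSet', gb; \strat_i)\hat{\cfv}_{\infoSet', gb}(\dev; \strat)$, indexed over children $\infoSet' \in \InfoSets_i(\infoSet, a')$ and observations $b \in \set{*} \cup \Actions(\infoSet)$. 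I would apply the same decomposition to $\tSelectionFn_{\dev}(\infoSet, g; \strat_i)\hat{\cfv}_{\infoSet, g}(\dev_{\preceq \infoSet, \sqsubseteq g}; \strat)$.

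The crux is that $\dev_{\preceq \infoSet, \sqsubseteq g}$ deploys the very same action transformation $\dev_{\infoSet, g}$ at $(\infoSet, g)$ as $\dev$ and differs only at the successors, where it re-correlates with the identity transformation. This has two consequences I would use. First, the immediate $\termValue$ terms of the two expansions are \emph{literally identical} and so cancel upon subtraction. Second, the child memory weights $\tSelectionFn_{\dev}(\infoSet', gb; \strat_i)$ coincide for both deviations, because the memory transition from $(\infoSet, g)$ to $(\infoSet', gb)$ is governed solely by $\dev_{\infoSet, g}$ and $\strat_i$, which the two deviations share; this lets me factor the common weight out of every child summand.

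After cancellation, the left-hand side collapses to $\sum_{a', \infoSet', b} \tSelectionFn_{\dev}(\infoSet', gb; \strat_i)\subex*{\hat{\cfv}_{\infoSet', gb}(\dev; \strat) - \hat{\cfv}_{\infoSet', gb}(\dev_{\preceq \infoSet, \sqsubseteq g}; \strat)}$. Since $\dev_{\preceq \infoSet, \sqsubseteq g}$ applies only identity transformations from each $\infoSet' \succ \infoSet$ onward, I would invoke the identity-tail reduction established above to replace $\hat{\cfv}_{\infoSet', gb}(\dev_{\preceq \infoSet, \sqsubseteq g}; \strat)$ with the ordinary counterfactual value $\cfv_{\infoSet', gb}(\strat)$. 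Each summand is then precisely $\tSelectionFn_{\dev}(\infoSet', gb; \strat_i)\subex*{\hat{\cfv}_{\infoSet', gb}(\dev; \strat) - \cfv_{\infoSet', gb}(\strat)} = \regret_{\infoSet', gb}(\dev; \strat)$ by the definition of full regret in \eqref{eq:full-phi-regret}, which yields the claimed identity.

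The step I expect to be the main obstacle is the bookkeeping in the second consequence above: carefully verifying that the child memory weights are genuinely shared between $\dev$ and $\dev_{\preceq \infoSet, \sqsubseteq g}$. This hinges entirely on both applying the identical transformation at $(\infoSet, g)$, so that the memory-probability recursion produces the same weight at every successor $(\infoSet', gb)$; it is what makes the subtraction factor cleanly rather than leaving cross terms. Once that alignment is secured, the cancellation of the immediate values and the identity-tail collapse of the truncated deviation's future value are routine, and the full-regret identity drops out directly from the definition.
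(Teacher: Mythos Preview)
Your proposal is correct and follows essentially the same approach as the paper: expand both weighted counterfactual values via the recursive decomposition, cancel the identical immediate $\termValue$ terms, and identify each surviving child summand as $\regret_{\infoSet', gb}(\dev; \strat)$. You are in fact more explicit than the paper about two points it leaves implicit—the equality of child memory weights (because $\dev$ and $\dev_{\preceq \infoSet, \sqsubseteq g}$ apply the same transformation at $(\infoSet, g)$) and the identity-tail reduction $\hat{\cfv}_{\infoSet', gb}(\dev_{\preceq \infoSet, \sqsubseteq g}; \strat) = \cfv_{\infoSet', gb}(\strat)$—both of which the paper simply writes down without comment.
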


The following two corollaries will help us to present a simple regret bound for EFR.
\begin{corollary}
  \label{cor:recursivCfvDiffCases}
  \cref{lem:recursivCfvDiff} has three cases:
  \begin{align*}
    &\tSelectionFn_{\dev}(\infoSet, g; \strat_i) \subex*{
      \hat{\cfv}_{\infoSet, g}(\dev; \strat) - \hat{\cfv}_{\infoSet, g}(\dev_{\preceq \infoSet, \sqsubseteq g}; \strat)
    }\\
      &= \begin{cases}
        \sum_{\substack{
          a \in \Actions(\infoSet),\\
          \infoSet' \in \InfoSets_i(\infoSet, a)
        }}
          \regret_{\infoSet', ga}(\dev; \strat)
            &\mbox{if } \dev_{\infoSet, g} = \dev^1\\
        \sum_{\infoSet' \in \InfoSets_i(\infoSet, a^{\TARGET})}
          \regret_{\infoSet', g*}(\dev; \strat)
            &\mbox{if } \exists a^{\TARGET} \in \Actions(\infoSet), \dev_{\infoSet, g} = \dev^{\to a^{\TARGET}}\\
          \sum_{\infoSet' \in \InfoSets_i(\infoSet, a^{\TARGET})}
            \regret_{\infoSet', ga^{\TRIGGER}}(\dev; \strat)\\
          + \sum_{\substack{
            a \in \Actions(\infoSet) \setminus \set{a^{\TRIGGER}},\\
            \infoSet' \in \InfoSets_i(\infoSet, a)
          }}
            \regret_{\infoSet', ga}(\dev; \strat)
          &\mbox{if }
            \exists a^{\TRIGGER} \ne a^{\TARGET} \in \Actions(\infoSet), \dev_{\infoSet, g} = \dev^{a^{\TRIGGER} \to a^{\TARGET}}.
      \end{cases}
  \end{align*}
  \begin{proof}
  Case 1: assume that $\dev_{\infoSet, g} = \dev^1$ (the identity transformation), then $\dev_{\infoSet, g}(a) = a$ and $o(a; \dev_{\infoSet, g}) = a$ for all $a \in \Actions(\infoSet)$.
  Thus,
  \begin{align}
    \tSelectionFn_{\dev}(\infoSet, g; \strat_i) \subex*{
      \hat{\cfv}_{\infoSet, g}(\dev; \strat) - \hat{\cfv}_{\infoSet, g}(\dev_{\preceq \infoSet, \sqsubseteq g}; \strat)
    }
      &=
        \sum_{\substack{
          a \in \Actions(\infoSet),\\
          \infoSet' \in \InfoSets_i(\infoSet, a)
        }}
          \regret_{\infoSet', ga}(\dev; \strat),
  \end{align}
  as required.

  Case 2: assume that $\dev_{\infoSet, g} = \dev^{\to a^{\TARGET}}$ (an external transformation), then $\dev_{\infoSet, g}(a) = a^{\TARGET}$ and $o(a; \dev_{\infoSet, g}) = *$ for all $a \in \Actions(\infoSet)$.
  Thus,
  \begin{align}
    &\tSelectionFn_{\dev}(\infoSet, g; \strat_i) \subex*{
      \hat{\cfv}_{\infoSet, g}(\dev; \strat) - \hat{\cfv}_{\infoSet, g}(\dev_{\preceq \infoSet, \sqsubseteq g}; \strat)
    }
      =
        \sum_{\infoSet' \in \InfoSets_i(\infoSet, a^{\TARGET})}
          \regret_{\infoSet', g*}(\dev; \strat),
  \end{align}
  as required.

  Case 3: assume that $\dev_{\infoSet, g} = \dev^{a^{\TRIGGER} \to a^{\TARGET}}$, $a^{\TRIGGER} \ne a^{\TARGET}$ (a non-identity internal transformation), then
  \begin{align}
    \tSelectionFn_{\dev}(\infoSet, g; \strat_i) \subex*{
      \hat{\cfv}_{\infoSet, g}(\dev; \strat) - \hat{\cfv}_{\infoSet, g}(\dev_{\preceq \infoSet, \sqsubseteq g}; \strat)
    }
      &=
        \sum_{\infoSet' \in \InfoSets_i(\infoSet, a^{\TARGET})}
          \regret_{\infoSet, ga^{\TRIGGER}}(\dev; \strat)
        + \sum_{\substack{
          a \in \Actions(\infoSet) \setminus \set{a^{\TRIGGER}},\\
          \infoSet' \in \InfoSets_i(\infoSet, a)
        }}
          \regret_{\infoSet, ga}(\dev; \strat),
  \end{align}
  as required.
\end{proof}
\end{corollary}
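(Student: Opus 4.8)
The plan is to derive all three cases directly from \cref{lem:recursivCfvDiff} by specializing its general sum over successor actions $a'$ and observations $b$ to the concrete structure of each action-transformation type. The key tool is the memory-probability transition $\tSelectionFn_{\dev}(\infoSet', gb; \strat_i) = \tSelectionFn_{\dev}(\infoSet, g; \strat_i) \sum_{a} \ind{\dev_{\infoSet, g}(a) = a' \land o(a; \dev_{\infoSet, g}) = b} \strat_i(a \given \infoSet)$ together with the observation function $o$, which reveals the recommended action for identity and internal transformations but returns $*$ for external transformations. Since each summand $\regret_{\infoSet', gb}(\dev; \strat)$ is itself weighted by $\tSelectionFn_{\dev}(\infoSet', gb; \strat_i)$, any $(a', b)$ pair that $\dev_{\infoSet, g}$ cannot realize contributes zero, so I would simply enumerate the realizable pairs and collapse the sum accordingly.

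For Case~1, where $\dev_{\infoSet, g} = \dev^1$, I would observe that $\dev_{\infoSet, g}(a) = a$ and $o(a; \dev_{\infoSet, g}) = a$ for every action, so the only nonzero terms have $a' = b = a$; the successors are exactly $\InfoSets_i(\infoSet, a)$ with memory $ga$, reducing the general sum to $\sum_{a \in \Actions(\infoSet), \, \infoSet' \in \InfoSets_i(\infoSet, a)} \regret_{\infoSet', ga}(\dev; \strat)$. For Case~2, where $\dev_{\infoSet, g} = \dev^{\to a^{\TARGET}}$ is external, every action maps to $a' = a^{\TARGET}$ and every observation is $b = *$, so the sum collapses onto the single child family $\InfoSets_i(\infoSet, a^{\TARGET})$ with memory $g*$, yielding $\sum_{\infoSet' \in \InfoSets_i(\infoSet, a^{\TARGET})} \regret_{\infoSet', g*}(\dev; \strat)$.

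Case~3, the non-identity internal transformation $\dev^{a^{\TRIGGER} \to a^{\TARGET}}$ with $a^{\TRIGGER} \ne a^{\TARGET}$, is where I expect the only genuine subtlety. Here the action map sends $a^{\TRIGGER} \mapsto a^{\TARGET}$ and fixes every other action, while the observation is the \emph{recommended} action rather than the \emph{played} action, because internal transformations are not external. The trigger branch therefore contributes $a' = a^{\TARGET}$ paired with observation $b = a^{\TRIGGER}$, giving $\sum_{\infoSet' \in \InfoSets_i(\infoSet, a^{\TARGET})} \regret_{\infoSet', ga^{\TRIGGER}}(\dev; \strat)$, and each non-trigger action $a \ne a^{\TRIGGER}$ contributes $a' = b = a$, giving $\sum_{a \ne a^{\TRIGGER}, \, \infoSet' \in \InfoSets_i(\infoSet, a)} \regret_{\infoSet', ga}(\dev; \strat)$; their sum reproduces the stated expression.

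The main obstacle is conceptual rather than computational: correctly distinguishing the action the deviation player \emph{plays} ($a'$) from the recommendation it \emph{observes} ($b$). The external case erases the recommendation ($b = *$), whereas the internal case records the trigger $a^{\TRIGGER}$ in memory even though it is the target $a^{\TARGET}$ that advances the game; conflating these would route the successor regrets to the wrong memory states. Once the observation function is tracked carefully in each case, every case reduces to an immediate specialization of \cref{lem:recursivCfvDiff}, so no further estimates or inductive machinery are needed.
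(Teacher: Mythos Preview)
Your proposal is correct and follows essentially the same approach as the paper: in each case you identify $\dev_{\infoSet,g}(a)$ and $o(a;\dev_{\infoSet,g})$ exactly as the paper does, then collapse the general sum of \cref{lem:recursivCfvDiff} onto the realizable $(a',b)$ pairs. Your added remark that unrealizable $(a',b)$ pairs contribute zero because $\regret_{\infoSet',gb}$ carries the factor $\tSelectionFn_{\dev}(\infoSet',gb;\strat_i)$ makes explicit a step the paper leaves implicit, but otherwise the arguments coincide.
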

\begin{corollary}
  \label{cor:recursivCfvDiffSimple}
  If the full regret following information set $\infoSet$ and memory state $g$ is always bounded by $C \ge 0$, then the regret for re-correlating after $\infoSet$ and $g$ is bounded as
  \begin{align*}
    \tSelectionFn_{\dev}(\infoSet, g; \strat_i) \subex*{
      \hat{\cfv}_{\infoSet, g}(\dev; \strat) - \hat{\cfv}_{\infoSet, g}(\dev_{\preceq \infoSet, \sqsubseteq g}; \strat)
    }
      &\le
        \subex*{1 + \ind{\dev_{\infoSet, g} \in \DevSet_{\Actions(\infoSet)}^{\IN} \setminus \set{\dev^1}}}
        \abs*{\bigcup_{a \in \Actions(\infoSet)} \InfoSets_i(\infoSet, a)} C
  \end{align*}
\begin{proof}
  Case 1: assume that $\dev_{\infoSet, g} = \dev^1$ (the identity transformation), then
  \begin{align}
    \tSelectionFn_{\dev}(\infoSet, g; \strat_i) \subex*{
      \hat{\cfv}_{\infoSet, g}(\dev; \strat) - \hat{\cfv}_{\infoSet, g}(\dev_{\preceq \infoSet, \sqsubseteq g}; \strat)
    }
      &=
        \sum_{\substack{
          a \in \Actions(\infoSet),\\
          \infoSet' \in \InfoSets_i(\infoSet, a)
        }}
          \regret_{\infoSet', ga}(\dev; \strat)\\
      &\le
        \abs*{\bigcup_{a \in \Actions(\infoSet)} \InfoSets_i(\infoSet, a)} C,
  \end{align}
  as required.

  Case 2: assume that $\dev_{\infoSet, g} = \dev^{\to a^{\TARGET}}$ (an external transformation), then
  \begin{align}
    \tSelectionFn_{\dev}(\infoSet, g; \strat_i) \subex*{
      \hat{\cfv}_{\infoSet, g}(\dev; \strat) - \hat{\cfv}_{\infoSet, g}(\dev_{\preceq \infoSet, \sqsubseteq g}; \strat)
    }
      &=
        \sum_{\infoSet' \in \InfoSets_i(\infoSet, a^{\TARGET})}
          \regret_{\infoSet', g*}(\dev; \strat)\\
      &\le
        \abs*{\bigcup_{a \in \Actions(\infoSet)} \InfoSets_i(\infoSet, a)} C,
  \end{align}
  as required.

  Case 3: assume that $\dev_{\infoSet, g} = \dev^{a^{\TRIGGER} \to a^{\TARGET}}$, $a^{\TRIGGER} \ne a^{\TARGET}$ (an internal transformation), then
  \begin{align}
    \tSelectionFn_{\dev}(\infoSet, g; \strat_i) \subex*{
      \hat{\cfv}_{\infoSet, g}(\dev; \strat) - \hat{\cfv}_{\infoSet, g}(\dev_{\preceq \infoSet, \sqsubseteq g}; \strat)
    }
      &=
        \sum_{\infoSet' \in \InfoSets_i(\infoSet, a^{\TARGET})}
          \regret_{\infoSet, ga^{\TRIGGER}}(\dev; \strat)
        + \sum_{\substack{
          a \in \Actions(\infoSet) \setminus \set{a^{\TRIGGER}},\\
          \infoSet' \in \InfoSets_i(\infoSet, a)
        }}
          \regret_{\infoSet, ga}(\dev; \strat)\\
      &\le
        \abs*{\InfoSets_i(\infoSet, a^{\TARGET})} C
        + \abs*{\bigcup_{a \ne a^{\TRIGGER}} \InfoSets_i(\infoSet, a)} C\\
      &\le
        2 \abs*{\bigcup_{a \in \Actions(\infoSet)} \InfoSets_i(\infoSet, a)} C.
  \end{align}
  as required.
\end{proof}
\end{corollary}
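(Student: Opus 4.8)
The plan is to reduce directly to \cref{cor:recursivCfvDiffCases}, which already rewrites the left-hand side---the re-correlation regret $\tSelectionFn_{\dev}(\infoSet, g; \strat_i)(\hat{\cfv}_{\infoSet, g}(\dev; \strat) - \hat{\cfv}_{\infoSet, g}(\dev_{\preceq \infoSet, \sqsubseteq g}; \strat))$---as an \emph{exact} sum of child full regrets, split into three cases according to whether $\dev_{\infoSet, g}$ is the identity, an external, or a non-identity internal transformation. With that exact expression in hand, the stated bound follows by replacing every child full regret $\regret_{\infoSet', \cdot}(\dev; \strat)$ by its assumed upper bound $C$ and counting the number of summands; so the whole argument is really a term count layered on top of an identity that is already proved.

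First I would handle the two cases in which the indicator vanishes and the prefactor is $1$. For the identity transformation the sum ranges over $\infoSet' \in \InfoSets_i(\infoSet, a)$ for every $a \in \Actions(\infoSet)$, i.e.\ over exactly $\bigcup_{a \in \Actions(\infoSet)} \InfoSets_i(\infoSet, a)$, so there are at most $\abs{\bigcup_{a \in \Actions(\infoSet)} \InfoSets_i(\infoSet, a)}$ terms, each at most $C$. For the external transformation $\dev^{\to a^{\TARGET}}$ the sum is only over $\InfoSets_i(\infoSet, a^{\TARGET})$, a subset of that same union, so the identical count bounds it. Both match the $(1 + 0)$ prefactor.

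The remaining case, a non-identity internal transformation $\dev^{a^{\TRIGGER} \to a^{\TARGET}}$ with $a^{\TRIGGER} \ne a^{\TARGET}$, is where the only real subtlety lies, and I expect this to be the main (if minor) obstacle. Here \cref{cor:recursivCfvDiffCases} produces two sums: one over the children following the target action $a^{\TARGET}$, and one over the children following every action other than the trigger $a^{\TRIGGER}$. Since $a^{\TARGET} \ne a^{\TRIGGER}$, the target action is itself a non-trigger action, so the children after $a^{\TARGET}$ appear in \emph{both} index sets; a single union bound would double-count them and is awkward to state cleanly. The clean route is to bound each of the two sums separately: each index set is contained in $\bigcup_{a \in \Actions(\infoSet)} \InfoSets_i(\infoSet, a)$, so each sum is at most $\abs{\bigcup_{a \in \Actions(\infoSet)} \InfoSets_i(\infoSet, a)} C$, and adding the two yields the factor-of-two bound matching the $(1 + 1)$ prefactor.

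Beyond noticing that overlap, I anticipate nothing genuinely hard: once \cref{cor:recursivCfvDiffCases} is invoked, every case is settled by replacing full regrets with $C$ and counting children, so the proof is essentially bookkeeping that unifies the three cases under the single indicator-dependent prefactor.
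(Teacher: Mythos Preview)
Your proposal is correct and follows essentially the same route as the paper: invoke \cref{cor:recursivCfvDiffCases} to obtain the exact three-case expression, then bound each child full regret by $C$ and count summands, handling the non-identity internal case by bounding its two overlapping sums separately to produce the factor of $2$. The paper's write-up is identical in structure, differing only in that its Case~3 records the slightly tighter intermediate counts $\abs{\InfoSets_i(\infoSet, a^{\TARGET})}$ and $\abs{\bigcup_{a \ne a^{\TRIGGER}} \InfoSets_i(\infoSet, a)}$ before relaxing both to $\abs{\bigcup_{a} \InfoSets_i(\infoSet, a)}$.
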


We can now state our decomposition result:
\begin{lemma}
  \label{lem:uni-decomp}
  The full regret of behavioral deviation $\dev$ from information set $\infoSet$ and memory state $g$ is bounded by the immediate regret at $\infoSet$ plus the full regret at each of $\infoSet$'s and $g$'s children.
 \begin{proof}
  \begin{align}
  &\regret_{\infoSet, g}(\dev; \strat)\\
    &=
      \tSelectionFn_{\dev}(\infoSet, g; \strat_i)\subex*{
        \hat{\cfv}_{\infoSet, g}(\dev; \strat)
        - \cfv_{\infoSet}(\strat)}.\\
  \shortintertext{Adding and subtracting $\tSelectionFn_{\dev}(\infoSet, g; \strat_i) \hat{\cfv}_{\infoSet, g}(\dev_{\preceq \infoSet, \sqsubseteq g}; \strat)$,}
    &=
      \tSelectionFn_{\dev}(\infoSet, g; \strat_i) \subex*{
        \hat{\cfv}_{\infoSet, g}(\dev_{\preceq \infoSet, \sqsubseteq g}; \strat)
          - \cfv_{\infoSet}(\strat)
        }
      + \tSelectionFn_{\dev}(\infoSet, g; \strat_i) \subex*{
          \hat{\cfv}_{\infoSet, g}(\dev; \strat)
          - \hat{\cfv}_{\infoSet, g}(\dev_{\preceq \infoSet, \sqsubseteq g}; \strat)
        }\\
    &=
      \underbrace{
          \tSelectionFn_{\dev}(\infoSet, g; \strat_i) \subex*{
            \cfv_{\infoSet}\subex*{\dev_{\infoSet, g}\subex*{\strat_i(\infoSet)}; \strat}
            - \cfv_{\infoSet}(\strat)
          }
        }_{\text{Immediate regret, $\regret_{\infoSet}(\dev_{\preceq \infoSet, \sqsubseteq g}; \strat)$.}}
      + \underbrace{
          \tSelectionFn_{\dev}(\infoSet, g; \strat_i) \subex*{
            \hat{\cfv}_{\infoSet, g}(\dev; \strat)
            - \hat{\cfv}_{\infoSet, g}(\dev_{\preceq \infoSet, \sqsubseteq g}; \strat)
          }
        }_{
          \text{Regret for re-correlating after $\infoSet$ and $g$.}
        }\\
  \shortintertext{Applying \cref{lem:recursivCfvDiff},}
    &=
      \regret_{\infoSet}(\dev_{\preceq \infoSet, \sqsubseteq g}; \strat)
      + \sum_{\substack{
          a \in \Actions(\infoSet),\\
          \infoSet' \in \InfoSets_i(\infoSet, a),\\
          b \in \set{*} \cup \Actions(\infoSet)}}
        \regret_{\infoSet', gb}(\dev; \strat),
\end{align}
as required.
 \end{proof}
\end{lemma}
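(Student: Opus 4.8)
The plan is to derive the stated decomposition as an exact equality (from which the ``bounded by'' claim is immediate) by inserting a single well-chosen intermediate quantity into the definition of full regret. Starting from \cref{eq:full-phi-regret},
\[
  \regret_{\infoSet, g}(\dev; \strat)
    = \tSelectionFn_{\dev}(\infoSet, g; \strat_i)\subex*{
        \hat{\cfv}_{\infoSet, g}(\dev; \strat) - \cfv_{\infoSet}(\strat)},
\]
I would add and subtract $\tSelectionFn_{\dev}(\infoSet, g; \strat_i)\,\hat{\cfv}_{\infoSet, g}(\dev_{\preceq \infoSet, \sqsubseteq g}; \strat)$, the memory-weighted counterfactual value of the \emph{truncated} deviation that deploys $\dev$'s transformations only up to and including $\infoSet$ and $g$ and then reverts to the identity. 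This cleanly partitions the full regret into an ``immediate'' term and a ``re-correlation'' term.

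For the immediate term, $\tSelectionFn_{\dev}(\infoSet, g; \strat_i)\subex*{\hat{\cfv}_{\infoSet, g}(\dev_{\preceq \infoSet, \sqsubseteq g}; \strat) - \cfv_{\infoSet}(\strat)}$, I would invoke the earlier observation that when every transformation following $\infoSet$ is the identity the deviation's counterfactual value collapses to that of merely applying the action transformation at $\infoSet$, \ie/, $\hat{\cfv}_{\infoSet, g}(\dev_{\preceq \infoSet, \sqsubseteq g}; \strat) = \cfv_{\infoSet}(\dev_{\infoSet, g}(\strat_i(\infoSet)); \strat)$. Substituting this and comparing with the definition of immediate regret identifies this term exactly as $\regret_{\infoSet}(\dev_{\preceq \infoSet, \sqsubseteq g}; \strat)$.

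For the re-correlation term, $\tSelectionFn_{\dev}(\infoSet, g; \strat_i)\subex*{\hat{\cfv}_{\infoSet, g}(\dev; \strat) - \hat{\cfv}_{\infoSet, g}(\dev_{\preceq \infoSet, \sqsubseteq g}; \strat)}$, I would recognize that this is precisely the left-hand side of \cref{lem:recursivCfvDiff}, which already equates it to $\sum_{a, \infoSet', b} \regret_{\infoSet', gb}(\dev; \strat)$ summed over children $\infoSet' \in \InfoSets_i(\infoSet, a)$ and observations $b \in \set{*} \cup \Actions(\infoSet)$. Combining the two terms yields the claimed decomposition.

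The main obstacle is conceptual rather than computational: identifying the correct intermediate value (the truncated-deviation value) so that the subtraction splits the regret into exactly the immediate counterfactual regret and the downstream re-correlation regret. Once \cref{lem:recursivCfvDiff} is available to handle the downstream piece---which is where the bulk of the work actually lives, via the recursive form of $\hat{\cfv}$ and the memory-probability evolution---the remaining algebra is routine.
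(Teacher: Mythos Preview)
Your proposal is correct and follows essentially the same approach as the paper: start from the definition of full regret, add and subtract the memory-weighted counterfactual value of the truncated deviation $\dev_{\preceq \infoSet, \sqsubseteq g}$, identify the first difference as the immediate regret via the collapse $\hat{\cfv}_{\infoSet, g}(\dev_{\preceq \infoSet, \sqsubseteq g}; \strat) = \cfv_{\infoSet}(\dev_{\infoSet, g}(\strat_i(\infoSet)); \strat)$, and apply \cref{lem:recursivCfvDiff} to the second difference. Your observation that the heavy lifting resides in \cref{lem:recursivCfvDiff} is also accurate.
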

\begin{corollary}
  \label{cor:uni-decomp-simple}
  If the full regret of each child of information set $\infoSet$ is bounded by $C \ge 0$, then
  \begin{align*}
    \regret_{\infoSet, g}(\dev; \strat)
      \le
        \regret_{\infoSet}(\dev_{\preceq \infoSet, \sqsubseteq g})
        + \subex*{
          1
          + \ind{\dev_{\infoSet, g} \in \DevSet_{\Actions(\infoSet)}^{\IN} \setminus \set{\dev^1}}
        } \abs*{\bigcup_{a \in \Actions(\infoSet)} \InfoSets_i(\infoSet, a)} C.
  \end{align*}
\begin{proof}
  \cref{lem:uni-decomp,cor:recursivCfvDiffSimple}.
\end{proof}
\end{corollary}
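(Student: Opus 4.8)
The plan is to read the result off directly from the exact full-regret decomposition and the established bound on its re-correlation contribution, both of which are already available. First I would invoke \cref{lem:uni-decomp}, which writes the full regret as an \emph{equality}: $\regret_{\infoSet, g}(\dev; \strat)$ equals the immediate regret $\regret_{\infoSet}(\dev_{\preceq \infoSet, \sqsubseteq g}; \strat)$ plus the re-correlation contribution $\sum_{a, \infoSet', b} \regret_{\infoSet', gb}(\dev; \strat)$, where the sum runs over $a \in \Actions(\infoSet)$, $\infoSet' \in \InfoSets_i(\infoSet, a)$, and $b \in \set{*} \cup \Actions(\infoSet)$. Because this decomposition is exact, the immediate-regret term requires no further manipulation and is simply carried through unchanged to the final bound.

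Next I would bound the re-correlation sum using \cref{cor:recursivCfvDiffSimple}. The key observation is that, by the intermediate step in the proof of \cref{lem:uni-decomp} that applied \cref{lem:recursivCfvDiff}, this sum is exactly the quantity $\tSelectionFn_{\dev}(\infoSet, g; \strat_i)\subex*{\hat{\cfv}_{\infoSet, g}(\dev; \strat) - \hat{\cfv}_{\infoSet, g}(\dev_{\preceq \infoSet, \sqsubseteq g}; \strat)}$ appearing on the left-hand side of \cref{cor:recursivCfvDiffSimple}. The present hypothesis, that every child's full regret is at most $C$, matches the hypothesis of that corollary that the full regret following $\infoSet$ and $g$ is bounded by $C$, so I can apply it verbatim to obtain the upper bound $\subex*{1 + \ind{\dev_{\infoSet, g} \in \DevSet_{\Actions(\infoSet)}^{\IN} \setminus \set{\dev^1}}} \abs*{\bigcup_{a \in \Actions(\infoSet)} \InfoSets_i(\infoSet, a)} C$, where the indicator captures the extra factor of two that arises only in the non-identity internal case (Case 3 of \cref{cor:recursivCfvDiffSimple}).

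Adding these two pieces — the exact immediate-regret term and the bounded re-correlation term — yields the claimed inequality immediately. The only point to verify carefully is that the two cited results compose cleanly: that the re-correlation contribution named in \cref{lem:uni-decomp} is syntactically the left-hand side of \cref{cor:recursivCfvDiffSimple}, and that the two phrasings of the hypothesis (``full regret of each child'' versus ``full regret following $\infoSet$ and $g$'') denote the same condition. Since the statement is a direct composition of an exact decomposition with an established case bound, I do not expect any genuine obstacle; the argument is essentially bookkeeping, as all of the analytic content was already discharged in \cref{lem:recursivCfvDiff} and its corollaries.
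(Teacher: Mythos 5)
Your proposal is correct and matches the paper's own proof, which is literally just the citation of \cref{lem:uni-decomp,cor:recursivCfvDiffSimple}; you have merely spelled out the bookkeeping that the paper leaves implicit. Your identification of the re-correlation sum in \cref{lem:uni-decomp} with the left-hand side of \cref{cor:recursivCfvDiffSimple} (via \cref{lem:recursivCfvDiff}) is exactly the intended composition.
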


Using \cref{cor:uni-decomp-simple} and instantiating EFR with exact regret matching, we can derive a simple regret bound that depends on the number of immediate regrets associated with a given subset of behavioral deviations.
\setcounter{resetCounter}{\arabic{theorem}}
\setcounter{theorem}{1}
\begin{theorem}
  \label{lem:efr}
  Instantiate EFR for player $i$ with exact regret matching and a set of behavioral deviations $\DevSet \subseteq \DevSet_{\InfoSets_i}^{\IN}$.
Let the maximum number of information sets along the same line of play where non-identity internal transformations are allowed before a non-identity transformation within any single deviation be $n_{\IN}$.
Let $D = \max_{\infoSet \in \InfoSets_i, \dev_{\infoSet} \in \DevSet_{\infoSet}} \abs{\TSelectionSet_{\infoSet}^{\DevSet}(\dev_{\infoSet})} \maxActivation(\DevSet_{\infoSet})$.
\correctionStart{Add a mention about how the full regret at each information set has this bound. The previous statement is still true but this statement is stronger.}
Then, EFR's cumulative full regret at each information set after $T$ rounds with respect to $\DevSet$ and the set of single-target deviations generated from $\DevSet$, $\DevSet_{\preceq \TARGET}$, is upper bounded by
$2^{n_{\IN} + 1} \maxReward \abs{\InfoSets_i} \sqrt{D T}$.
In addition, this implies that EFR is OS hindsight rational with respect to $\DevSet \cup \DevSet_{\preceq \TARGET}$.
\correctionEnd \begin{proof}
  EFR keeps track of each immediate regret for each transformation associated with each realizable memory state $g$ in each information set $\infoSet$.
  EFR's immediate strategies at each $\infoSet$ on each round are chosen according to time selection regret matching on the cumulative immediate regrets and memory state probabilities there.
  Therefore, the cumulative immediate regret at each information set and memory state is bounded as
  $\sum_{t = 1}^T \regret_{\infoSet}(\dev_{\preceq \infoSet, \sqsubseteq g}) \le 2 \maxReward \sqrt{D T}$
  according to \cref{cor:exact-rm}.
  Working from the leaves of the information set tree to the roots, we recursively bound the full regret according to \cref{cor:uni-decomp-simple}.
  The full regret at each information set is then bounded as
  $\sum_{t = 1}^T \regret_{\infoSet, g}(\dev; \strat^t) \le 2^{n_{\IN} + 1} \maxReward \abs{\InfoSets_i} \sqrt{D T}$.
  EFR's cumulative regret with respect to $\DevSet$ is equal to its cumulative full regret at the start of the game, so the former is bounded by $2^{n_{\IN} + 1} \maxReward \abs{\InfoSets_i} \sqrt{D T}$ as well, which concludes the argument.
\end{proof}
\end{theorem}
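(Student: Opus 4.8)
The plan is to bound the cumulative full regret at every information set by reducing it to the local immediate-regret guarantees of time selection regret matching and then propagating these guarantees up the information-set forest with the full-regret decomposition. First I would observe that EFR runs $(\DevSet_\infoSet, \cdot^+)$-regret matching at each information set $\infoSet$, treating each memory probability $t \mapsto \tSelectionFn_{\dev}(\infoSet, g; \strat^t_i)$ as a time selection function associated with the action transformation $\dev_{\infoSet, g} \in \DevSet_\infoSet$. With this identification, the cumulative immediate regret $\sum_{t=1}^T \regret_\infoSet(\dev_{\preceq\infoSet, \sqsubseteq g}; \strat^t)$ is precisely the time-selection regret $\regret^{1:T}(\dev_{\infoSet, g}, \tSelectionFn)$, so \cref{cor:exact-rm} bounds it uniformly by $2\maxReward\sqrt{DT}$; here $D$ already absorbs both the largest time selection set size and the maximal activation $\maxActivation(\DevSet_\infoSet)$.

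Next I would induct from the leaves of the forest to its roots using the decomposition in \cref{cor:uni-decomp-simple}, which expresses the full regret at $(\infoSet, g)$ as the immediate regret there plus the full regrets at the children $(\infoSet', gb)$. The inductive hypothesis I would carry is that the cumulative full regret at $\infoSet$ is at most $2^{k}\,(2\maxReward\sqrt{DT})$ times the number of player-$i$ information sets in the subtree rooted at $\infoSet$, where $k$ is the largest number of non-identity internal transformations along any continuation of play from $\infoSet$. The delicate accounting is that distinct child information sets have disjoint subtrees, so the subtree sizes telescope \emph{additively} to at most the subtree size at $\infoSet$ minus one, rather than multiplying; the only growth occurs at a non-identity internal transformation, where \cref{cor:recursivCfvDiffCases} shows that the children reached through the target action $a^{\TARGET}$ are visited under two memory states, $ga^{\TRIGGER}$ and $ga^{\TARGET}$. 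Charging that factor of two to $2^{k}$ (and using $2^{k} \ge 1$ for the lone immediate-regret term) closes the induction. At a root the subtree has at most $\abs{\InfoSets_i}$ information sets and $k \le n_{\IN}$, giving the claimed bound $2^{n_{\IN}+1}\maxReward\abs{\InfoSets_i}\sqrt{DT}$.

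I would then recall that EFR's cumulative regret with respect to $\DevSet$ equals its full regret at the start of the game, where the only memory state is $\emptyHistory$ and $\tSelectionFn_\dev(\infoSet_0, \emptyHistory; \cdot) = 1$, so the root bound is exactly the $\DevSet$-regret bound. For the single-target deviations $\DevSet_{\preceq\TARGET}$, I would use the fact that each action-transformation set $\DevSet_\infoSet$ already augments the single-target deviations with re-correlation, so EFR's immediate-regret minimizers compete with $\DevSet_{\preceq\TARGET}$ as well and the same decomposition bounds their full regret at every information set. Since both bounds grow only as $\sqrt{T}$, dividing by $T$ makes the average full regret at every information set vanish for every deviation in $\DevSet \cup \DevSet_{\preceq\TARGET}$, which is OS hindsight rationality for that union by \cref{def:obsSeqHindsightRationality}; \cref{thm:singleTargetDevElevationWithOsr} then records the stronger consequence that OSR for $\DevSet_{\preceq\TARGET}$ already rules out any beneficial deviation in $\DevSet$.

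The step I expect to be the main obstacle is the inductive bookkeeping in the second paragraph. Applying \cref{cor:uni-decomp-simple} with a single uniform per-child bound $C$ and multiplying by the number of children at each level would produce a bound exponential in the depth of the game rather than the linear $\abs{\InfoSets_i}$ factor. Avoiding this requires carrying the subtree sizes inside the inductive hypothesis so that the children's contributions telescope additively, and it requires verifying that the doubling of memory states at an internal transformation is charged correctly to $2^{n_{\IN}}$ — which in turn relies on the structural facts that children following the trigger action are never reached and that the target action differs from the trigger.
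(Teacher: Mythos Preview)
Your approach is essentially the paper's: bound each cumulative immediate regret via \cref{thm:rm-for-heterogeneous-ts} (the paper also invokes \cref{cor:exact-rm}), then propagate from leaves to roots with \cref{cor:uni-decomp-simple}. You are in fact more explicit than the paper about the telescoping of subtree sizes, which the paper leaves implicit in the one-line ``working from the leaves \ldots'' remark.

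There is, however, one small slip in your bookkeeping. With your definition of $k$ as the total number of non-identity internal transformations along a continuation, the claim ``$k \le n_{\IN}$'' at the root is not generally true: $n_{\IN}$ in the theorem counts only those internal transformations that are \emph{followed by another non-identity transformation}. For instance, in TIPS your $k$ is $2$ (trigger and target are both internal) while the paper's $n_{\IN}$ is $1$; for informed action or informed counterfactual deviations your $k$ is $1$ while $n_{\IN}=0$. The missing observation is that if $\dev_{\infoSet,g}$ is a non-identity internal transformation but every transformation strictly below $\infoSet$ is the identity, then each child full regret $\regret_{\infoSet',gb}(\dev;\strat)$ is exactly zero, so the factor $2$ from \cref{cor:uni-decomp-simple} multiplies zero and never appears. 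Starting your induction at the deepest information set where $\dev$ is non-identity (rather than at a leaf of $\InfoSets_i$) and redefining $k$ to exclude that final internal transformation recovers the stated exponent $n_{\IN}$; without this refinement your argument yields the bound with $2^{n_{\IN}+2}$ rather than $2^{n_{\IN}+1}$.
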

\setcounter{theorem}{\arabic{resetCounter}}

See \cref{tab:efrParams} for EFR parameters each deviation type.
\def\behavioralDevDef{
  \stretchBox{
    $\forall \infoSet', \, \exists a^{\TRIGGER}_{\infoSet'}, a'_{\infoSet'}$,\\
    $\begin{cases}
      a'_{\infoSet} &\mbox{\scriptsize if }
        {
          \forall \bar{\infoSet} \preceq \infoSet,
          \pureStrat_i(\bar{\infoSet}) = a^{\TRIGGER}_{\bar{\infoSet}}}\\
      \pureStrat_i(\infoSet) &\mbox{\scriptsize o.w.}
    \end{cases}$}
}
\def\tipsDef{
  \stretchBox{
    $\exists \infoSet^{\TRIGGER}, a^{\TRIGGER}, \infoSet^{\TARGET}, a^{\TARGET}, a^{\TARGET \TRIGGER}$,\\
    $\begin{cases}
      a^{\TARGET} &\mbox{\scriptsize if }
        { \infoSet = \infoSet^{\TARGET},}
        { \pureStrat_i(\infoSet^{\TARGET}) = a^{\TARGET \TRIGGER},}\\
        &{ \pureStrat_i(\infoSet^{\TRIGGER}) = a^{\TRIGGER}}\\
      a_{\infoSet}^{\to \infoSet^{\TARGET}} &\mbox{\scriptsize if }
        { \infoSet \succeq \infoSet^{\TRIGGER}, \pureStrat_i(\infoSet^{\TRIGGER}) = a^{\TRIGGER}}\\
      \pureStrat_i(\infoSet) &\mbox{\scriptsize o.w.}
    \end{cases}$}
}
\def\cspsDef{
  \stretchBox{
    $\exists \infoSet^{\TRIGGER}, a^{\TRIGGER}, \infoSet^{\TARGET}, a^{\TARGET}$,\\
    $\begin{cases}
      a^{\TARGET} &\mbox{\scriptsize if }
        { \infoSet = \infoSet^{\TARGET},}
        { \pureStrat_i(\infoSet^{\TRIGGER}) = a^{\TRIGGER}}\\
      a_{\infoSet}^{\to \infoSet^{\TARGET}} &\mbox{\scriptsize if }
        { \infoSet \succeq \infoSet^{\TRIGGER}, \pureStrat_i(\infoSet^{\TRIGGER}) = a^{\TRIGGER}}\\
      \pureStrat_i(\infoSet) &\mbox{\scriptsize o.w.}
    \end{cases}$}
}
\def\cfpsDef{
  \stretchBox{
    $\exists \infoSet^{\TRIGGER}, \infoSet^{\TARGET}, a^{\TARGET}, a^{\TARGET \TRIGGER}$,\\
    $\begin{cases}
      a^{\TARGET} &\mbox{\scriptsize if }
        { \infoSet = \infoSet^{\TARGET},}
        { \pureStrat_i(\infoSet^{\TARGET}) = a^{\TARGET \TRIGGER}}\\
      a_{\infoSet}^{\to \infoSet^{\TARGET}} &\mbox{\scriptsize if }
        { \infoSet \succeq \infoSet^{\TRIGGER}}\\
      \pureStrat_i(\infoSet) &\mbox{\scriptsize o.w.}
    \end{cases}$}
}
\def\bpsDef{
  \stretchBox{
    $\exists \infoSet^{\TRIGGER}, \infoSet^{\TARGET}, a^{\TARGET}$,\\
    $\begin{cases}
      a^{\TARGET} &\mbox{\scriptsize if }
        { \infoSet = \infoSet^{\TARGET}}\\
      a_{\infoSet}^{\to \infoSet^{\TARGET}} &\mbox{\scriptsize if }
        { \infoSet \succeq \infoSet^{\TRIGGER}}\\
      \pureStrat_i(\infoSet) &\mbox{\scriptsize o.w.}
    \end{cases}$}
}
\def\informedCfDef{
  \stretchBox{
    $\exists \infoSet^{\TARGET}, a^{\TARGET}, a^{\TARGET \TRIGGER}$,\\
    $\begin{cases}
      a^{\TARGET} &\mbox{\scriptsize if }
        {\scriptstyle \infoSet = \infoSet^{\TARGET},}
        {\scriptstyle \pureStrat_i(\infoSet^{\TARGET}) = a^{\TARGET \TRIGGER}}\\
      a_{\infoSet}^{\to \infoSet^{\TARGET}} &\mbox{\scriptsize if }
        {\scriptstyle \infoSet \preceq \infoSet^{\TARGET}}\\
      \pureStrat_i(\infoSet) &\mbox{\scriptsize o.w.}
    \end{cases}$}
}
\def\blindCfDef{
  \stretchBox{
    $\exists \infoSet^{\TARGET}, a^{\TARGET}$,\\
    $\begin{cases}
      a^{\TARGET} &\mbox{\scriptsize if }
        {\scriptstyle \infoSet = \infoSet^{\TARGET}}\\
      a_{\infoSet}^{\to \infoSet^{\TARGET}} &\mbox{\scriptsize if }
        {\scriptstyle \infoSet \preceq \infoSet^{\TARGET}}\\
      \pureStrat_i(\infoSet) &\mbox{\scriptsize o.w.}
    \end{cases}$}
}
\def\informedActionDef{
  \stretchBox{
    $\exists \infoSet^{\TRIGGER}, a^{\TARGET}, a^{\TRIGGER}$,\\
    $\begin{cases}
      a^{\TARGET} &\mbox{\scriptsize if }
        {\scriptstyle \infoSet = \infoSet^{\TRIGGER},}
        {\scriptstyle \pureStrat_i(\infoSet^{\TRIGGER}) = a^{\TRIGGER}}\\
      \pureStrat_i(\infoSet) &\mbox{\scriptsize o.w.}
    \end{cases}$}
}
\def\blindActionDef{
  \stretchBox{
    $\exists \infoSet^{\TRIGGER}, a^{\TARGET}$,\\
    $\begin{cases}
      a^{\TARGET} &\mbox{\scriptsize if }
        {\scriptstyle \infoSet = \infoSet^{\TRIGGER}}\\
      \pureStrat_i(\infoSet) &\mbox{\scriptsize o.w.}
    \end{cases}$}
}
\def\informedCausalDef{
  \stretchBox{
    $\exists \infoSet^{\TRIGGER}, a^{\TRIGGER}, \pureStrat'$,\\
    $\begin{cases}
      \pureStrat'(\infoSet) &\mbox{\scriptsize if }
        {\scriptstyle \infoSet \succeq \infoSet^{\TRIGGER},}
        {\scriptstyle \pureStrat_i(\infoSet^{\TRIGGER}) = a^{\TRIGGER}}\\
      \pureStrat_i(\infoSet) &\mbox{\scriptsize o.w.}
    \end{cases}$}
}
\def\blindCausalDef{
  \stretchBox{
    $\exists \infoSet^{\TRIGGER}, \pureStrat'$,\\
    $\begin{cases}
      \pureStrat'(\infoSet) &\mbox{\scriptsize if }
        {\scriptstyle \infoSet \succeq \infoSet^{\TRIGGER}}\\
      \pureStrat_i(\infoSet) &\mbox{\scriptsize o.w.}
    \end{cases}$}
}

\def\lightrule{\specialrule{0.1pt}{2mm}{2mm}}

{\renewcommand{\arraystretch}{5.2}
  \begin{table}[tb]
  \centering
  \caption{Formal definition of the strategy generated by a deviation of each type given pure strategy ${\pureStrat_i \in \PureStratSet_i}$ at each information set $\infoSet \in \InfoSets_i$.
  }
  \label{tab:devDefs}
  \begin{threeparttable}
    \begin{tabularx}{\linewidth}{lXlX}
      \stretchBox{
        single-target\\
        behavioral
      } & \behavioralDevDef & in.\ causal & \informedCausalDef\\
      TIPS & \tipsDef & in.\ action & \informedActionDef\\
      CSPS & \cspsDef & in.\ CF & \informedCfDef\\
      CFPS & \cfpsDef & blind causal & \blindCausalDef\\
      BPS & \bpsDef & blind action & \blindActionDef\\
      && blind CF & \blindCfDef
    \end{tabularx}
\end{threeparttable}
\end{table}}

\def\numActions{n_{\Actions}}
\def\DevSetInMinusI{\DevSet^{\INT}_{\Actions(\infoSet)} \setminus \set{\dev^1}}
\def\DevSetI{\DevSet^{\INT}_{\Actions(\infoSet)}}
\def\DevSetE{\DevSet^{\EXT}_{\Actions(\infoSet)}}
\def\numDevInMinusI{\numActions^2 - \numActions}

\def\behavioralWeights{
  $\underset{
    \hfill \bar{\infoSet}' \prec \infoSet, \,
    \forall \bar{\infoSet} \preceq \bar{\infoSet}',
      a_{\bar{\infoSet}} \in \Actions(\bar{\infoSet})
  }{
    \set{t \mapsto 1} \cup \set*{
      t \mapsto
        \prod_{\bar{\infoSet} \preceq \bar{\infoSet}'}
          \strat^t_i(a_{\bar{\infoSet}} \given \bar{\infoSet})
    }
  }$
}
\def\tipsWeights{
  $\underset{
    \hfill \infoSet^{\TRIGGER} \prec \infoSet, a^{\TRIGGER} \in \Actions(\infoSet^{\TRIGGER})
  }{
    \set{t \mapsto 1} \cup \set*{
      t \mapsto
        \reachProb(\arbHistory(\infoSet^{\TRIGGER}); \strat^t_i)
        \strat^t_i(a^{\TRIGGER} \given \infoSet^{\TRIGGER})
    }
  }$
}
\def\cspsWeights{
  $\begin{cases}
    \underset{
      \hfill \infoSet^{\TRIGGER} \prec \infoSet, a^{\TRIGGER} \in \Actions(\infoSet^{\TRIGGER})
    }{
      \begin{aligned}[b]
        &\set{t \mapsto 1} \cup\\
        &\set*{
          t \mapsto
            \reachProb(\arbHistory(\infoSet^{\TRIGGER}); \strat^t_i)
            \strat^t_i(a^{\TRIGGER} \given \infoSet^{\TRIGGER})
        }
      \end{aligned}
    }
      &\mbox{if } \dev_{\infoSet} \in \DevSet_{\Actions(\infoSet)}^{\EXT}\\
    \set*{
      t \mapsto \reachProb(\arbHistory(\infoSet); \strat^t_i)
    }
      &\mbox{o.w.}
  \end{cases}$
}
\def\cfpsWeights{
  $\underset{\hfill \infoSet^{\TRIGGER} \preceq \infoSet}{
    \set{t \mapsto 1} \cup \set*{
      t \mapsto \reachProb(\arbHistory(\infoSet^{\TRIGGER}); \strat^t_i)
    }
  }$
}
\def\bpsWeights{\cfpsWeights}
\def\actionDevWeights{
  $\set*{t \mapsto \reachProb(\arbHistory(\infoSet); \strat^t_i)}$
}
\def\cfWeights{$\set*{t \mapsto 1}$}

{\renewcommand{\arraystretch}{3}
\begin{table}[tb]
  \centering
  \centerline{\begin{threeparttable}
  \caption{EFR parameters and regret bound constants for different deviation types.}
  \label{tab:efrParams}
  \small
  \begin{tabular}{@{}l|@{\hskip 1em}c@{\hskip 1em}c@{\hskip -2em}c@{}c@{\hskip 1em}c@{}}
    \toprule
    type &
    \stretchBox{
      $\DevSet_{\infoSet}$
for all ${\scriptstyle \infoSet \in \InfoSets_i}$} &
    \stretchBox{
      $\TSelectionSet^{\DevSet}_{\infoSet}$ as function of $\dev_{\infoSet}$
for all ${\scriptstyle \infoSet \in \InfoSets_i}$} &
    \stretchBox{
      $\max\limits_{\infoSet \in \InfoSets_i, \dev_{\infoSet} \in \DevSet_{\infoSet}} \abs*{\TSelectionSet_{\infoSet}^{\DevSet}(\dev_{\infoSet})}$} &
    $D$ &
    $n_{\IN}$\\
    \midrule
      behavioral &
        $\DevSetInMinusI$ &
        \behavioralWeights &
        $\numActions^{d_*}$ &
        $\numActions^{d_*}
          \subex{ \numActions^2 - \numActions }$ &
        $d_*$\\
      TIPS &
        $\DevSetInMinusI$ &
        \tipsWeights &
        $d_* \numActions + 1$ &
        $\subex*{ d_* \numActions + 1 }
          \subex{ \numActions^2 - \numActions }$ &
        1\\
      \stretchBox{CSPS /\\in.\ causal} &
        $\DevSet^{\EXT}_{\Actions(\infoSet)} \cup \DevSetInMinusI$ &
        \cspsWeights &
        $d_* \numActions$ &
        $d_*\numActions (\numActions^2 - 2)$ &
        1\\
      CFPS &
        $\DevSetInMinusI$ &
        \cfpsWeights &
        $d_* + 1$ &
        $\subex*{ d_* + 1 }
          \subex{ \numActions^2 - \numActions }$ &
        0\\
      \stretchBox{BPS /\\blind causal} &
        $\DevSet^{\EXT}_{\Actions(\infoSet)}$ &
        \bpsWeights &
        $d_* + 1$ &
        $\subex*{ d_* + 1 }
          \subex{\numActions - 1}$ &
        0\\
      in.\ action &
        $\DevSetInMinusI$ &
        \actionDevWeights &
        $1$ &
        $\numActions^2 - \numActions$ &
        0\\
      in.\ CF &
        $\DevSetInMinusI$ &
        \cfWeights &
        $1$ &
        $\numActions^2 - \numActions$ &
        0\\
      blind action &
        $\DevSet^{\EXT}_{\Actions(\infoSet)}$ &
        \actionDevWeights &
        $1$ &
        $\numActions - 1$ &
        0\\
      \stretchBox{
        blind CF /\\
        external
      } &
        $\DevSet^{\EXT}_{\Actions(\infoSet)}$ &
        \cfWeights &
        $1$ &
        $\numActions - 1$ &
        0\\
    \bottomrule
  \end{tabular}
  \end{threeparttable}}
\end{table}}
 
The variable $D$ in the EFR regret bound that depends on the particular behavioral deviation subset with which it is instantiated is often the number of immediate regrets generated by that subset divided by the number of information sets.
$D$ is slightly larger for CSPS because it uses the union of internal and external transformations for its action transformation set, $\DevSet_{\infoSet}$, at all information sets except those at the beginning of the game.
Since our bound depends on the maximum number of memory states associated with any $\dev \in \DevSet_{\infoSet}$ and $\maxActivation(\DevSet_{\infoSet})$ counts the maximum number of non-trivial ways an action can be transformed according to the transformations in $\DevSet_{\infoSet}$, their product ends up being larger for CSPS than the number of valid combinations between memory states and action transformations.
See \cref{tab:efrParams} for $D$ values corresponding to each partial sequence deviation type.

\section{Regret Matching++}
\textcite{kash2019combining} presents the regret matching++ algorithm and claims that it is no-external-regret.
This algorithm's proposed regret bound implies a sublinear bound on cumulative
positive regret, which would further imply that it has the same bound with respect to \emph{all possible} time selection functions.
The surprising aspect of this result is that the algorithm does not require any information about any of the possible time selection functions and requires no more computation or storage than basic regret matching.
The following result, \cref{thm:lin-lower-bound-sum-pos-regrets}, shows that there is actually no algorithm that can achieve a sublinear bound on cumulative positive regret.
This result proves that regret matching++ cannot be no-external-regret as claimed.
\cref{sec:regret-matching-regret-bound-bug} identifies the mistake in the regret matching++ bound proof.

\subsection{Linear Lower Bound on the Sum of Positive
Regrets}\label{sec:linear-lower-bound-on-the-sum-of-positive-regrets}

\begin{theorem}
  \label{thm:lin-lower-bound-sum-pos-regrets}
  The worst-case maximum cumulative positive regret,
  $Q^T = \max_{a \in \Actions} \sum_{t = 1}^T \subex*{\reward^t(a) - \ip{\policy^t}{\reward^t}}^+$, under a sequence of reward functions chosen from the class of bounded reward functions, $(\reward^t \in \set{\reward : \reward \in \reals^{\abs{\Actions}}, \, \norm{\reward}_{\infty} \le 1})_{t = 1}^T$, of any algorithm that chooses policies $\policy^t \in \simplex^{\abs{\Actions}}$ over a finite set of actions, $\Actions$, in an online fashion over $T$ rounds, is at least $T/4$.
\begin{proof}
Without loss of generality, consider a two action environment, $\Actions = \tuple{a, a'}$, and any learning algorithm that deterministically chooses a distribution, $\policy^t \in \simplex^2$, over them on each round $t$. The environment gets to see the learner's policy before presenting a reward function. If the learner weights one action more than the other, the environment gives a reward of zero for the action with the larger weight and one to the action with the smaller weight.
Formally, if $\policy^t(a) \ge 0.5$, then $\reward^t(a) = 0$, $\reward^t(a') = 1$, and vice-versa otherwise.

Let $a_{\text{low}} = a'$ if $\policy^t(a) \ge 0.5$ and $a_{\text{low}} = a$ otherwise. The positive regrets on any round $t$ are
$(1 - \policy^t(a_{\text{low}}))^+ \ge 0.5$ and
$\subex*{0 - (1 - \policy^t(a_{\text{low}}))}^+ = 0$.
So the learner is forced to suffer at least 0.5 positive regret on each round for one of the actions. Since there are only two actions, then over $T$ rounds one of the actions must have accumulated a regret of 0.5 on at least $T/2$ rounds. The cumulative positive regret for this action must then be $T/4$.
Therefore, the maximum cumulative positive regret of any deterministic algorithm in this environment must be at least $T/4$.

To extend this result to include algorithms that stochastically choose $\policy^t$, we simply need to consider the expected cumulative positive regret and notice that the rectified linear unit function ($\cdot^+$) is convex.
By Jensen's inequality and the fact that the max of an expectation is no larger than the expectation of the max, the expected cumulative positive regret is lower bounded by the cumulative positive regret under the learner's expected distributions,
$\E[\policy^t]$,
\ie/,
\[
  \E[Q^T] \ge \max_{a \in \Actions} \sum_{t = 1}^T \subex*{\reward^t(a) - \ip{\E[\policy^t]}{\reward^t}}^+ \ge T/4.
\]
Since $\E[\policy^t]$ is a single distribution, we have reduced the stochastic case to the deterministic case, thereby showing they have the same regret lower bound.
\end{proof}
\end{theorem}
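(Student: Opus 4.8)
The plan is to prove this minimax lower bound by exhibiting, for every learner, an adaptive adversary that forces cumulative positive regret at least $T/4$, treating deterministic learners first and then lifting to randomized ones. The adversary fixes two actions $a, a' \in \Actions$, and on each round, after seeing $\policy^t$, rewards every other action $0$ and rewards whichever of $a, a'$ carries the larger mass with $0$ and the lighter one with $1$. Writing $a_{\text{low}}$ for the lighter of the two, the total mass on $\set{a, a'}$ is at most one, so $\policy^t(a_{\text{low}}) \le \tfrac{1}{2}$. Since reward $1$ sits only on $a_{\text{low}}$, we get $\ip{\policy^t}{\reward^t} = \policy^t(a_{\text{low}})$, whence $a_{\text{low}}$'s instantaneous positive regret is $(1 - \policy^t(a_{\text{low}}))^+ \ge \tfrac{1}{2}$, while every other action's instantaneous regret is nonpositive.

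The second step is a pigeonhole count. Each round deposits at least $\tfrac{1}{2}$ into exactly one of the two per-action cumulative sums $\sum_t (\reward^t(a) - \ip{\policy^t}{\reward^t})^+$ and nothing into the other. Splitting $T$ such deposits between the two actions, one of them must receive a deposit on at least $T/2$ rounds and therefore accumulate at least $\tfrac{1}{2}\cdot\tfrac{T}{2} = \tfrac{T}{4}$. Because $Q^T$ is the maximum of the two per-action sums, this gives $Q^T \ge T/4$ for any deterministic learner, which settles that case.

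To extend the bound to randomized learners, I would key the adversary instead to the learner's expected play $\bar\policy^t = \E[\policy^t] \in \simplex^{\abs{\Actions}}$, which is deterministic given the (deterministic) reward history, so that $\reward^t$ is again deterministic. Then $\E[Q^T] \ge \max_a \sum_t \E[(\reward^t(a) - \ip{\policy^t}{\reward^t})^+]$ by $\E[\max] \ge \max\E$ and linearity of expectation across rounds, and since $\reward^t$ is fixed while $\ip{\policy^t}{\reward^t}$ is linear in $\policy^t$, Jensen's inequality applied to the convex map $x \mapsto x^+$ yields $\E[Q^T] \ge \max_a \sum_t (\reward^t(a) - \ip{\bar\policy^t}{\reward^t})^+$. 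This final quantity is exactly the deterministic regret of the play $\bar\policy^t$ against the same adversary, which is at least $T/4$ by the construction above; hence no algorithm, even a randomized one, drives the worst-case cumulative positive regret below $T/4$, ruling out any sublinear bound and refuting the claimed regret matching++ guarantee.

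The per-round arithmetic and the pigeonhole count are routine. I expect the delicate point to be the randomized reduction: the argument only goes through cleanly if the adversary commits to $\reward^t$ as a function of the deterministic expected play $\bar\policy^t$ rather than the realized random $\policy^t$, since otherwise reward and draw are correlated and the Jensen step (which needs $\reward^t$ held fixed while averaging over $\policy^t$) would not apply as stated. Getting this ordering right — first exchanging expectation with the outer maximum, then pushing Jensen through $x \mapsto x^+$ inside the time sum — is the main thing to verify.
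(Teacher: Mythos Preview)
Your proposal is correct and follows essentially the same approach as the paper: the same adversarial reward construction (reward $1$ to the lighter of two fixed actions, $0$ elsewhere), the same per-round $\tfrac{1}{2}$ positive-regret computation, the same pigeonhole to $T/4$, and the same Jensen-plus-$\max$/$\E$ reduction for randomized learners. If anything, you are slightly more explicit than the paper on two points the paper leaves implicit---handling $\abs{\Actions} > 2$ by embedding the two distinguished actions rather than invoking ``without loss of generality,'' and, in the randomized extension, keying the adversary to the deterministic expected play $\bar\policy^t$ so that $\reward^t$ is fixed when Jensen is applied---but these are clarifications of the same argument, not a different route.
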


\subsection{Regret Matching++ Regret Bound Bug}
\label{sec:regret-matching-regret-bound-bug}

Define the cumulative positive regret of action \(a \in \Actions\) as
\(Q^T_a = \sum_{t = 1}^T (\regret^t_a)^+ = \sum_{t = 1}^T \subex{\reward^t_a - \ip{\policy^t}{\reward^t}}^+\),
where \(\reward^t\) is the reward function on round \(t\).
\textcite{kash2019combining} bounds
\((\max_a Q^T_a)^2 \leq \sum_a (Q^T_a)^2 = \sum_a \subex*{Q^{T - 1}_a + (\regret^T_a)^+}^2\).
They then state that
\(\subex*{Q^{T - 1}_a + (\regret^T_a)^+}^2 \le \subex*{Q^{T - 1}_a + \regret^T_a}^2 + \Delta^2\),
where \(\Delta\) is the diameter of the reward range. This is false in
general: \begin{align}
  \subex*{Q^{T - 1}_a + (\regret^T_a)^+}^2
    &= \subex*{Q^{T - 1}_a}^2 + \subex*{\regret^T_a}^2 + 2 Q^{T - 1}_a (\regret^T_a)^+ \\
    &\leq \subex*{Q^{T - 1}_a}^2 + \subex*{\regret^T_a}^2 + 2 Q^{T - 1}_a \regret^T_a + 2 Q^{T - 1}_a \Delta \\
    \label{eq:rmpp-inequality}
    &= \subex*{Q^{T - 1}_a + \regret^T_a}^2 + 2 Q^{T - 1}_a \Delta,
\end{align} where \(2 Q^{T - 1}_a \Delta > \Delta^2\) if
\(Q^{T - 1}_a > \Delta / 2\).
There are scenarios where \cref{eq:rmpp-inequality} is tight so it is unclear how this bound could be improved.
Attempting the rest of the proof, we get
\begin{align*}
  \sum_a \subex*{Q^{T - 1}_a + (\regret^T_a)^+}^2
    &\le \abs{\Actions} \Delta^2 + \sum_a \subex*{Q^{T - 1}_a}^2 + 2 \Delta \sum_a Q^{T - 1}_a.
\end{align*} Unrolling the recursion exactly is messy, but the extra
\(2 \Delta \sum_a Q^{T - 1}_a\) term ensures that the bound will be no
smaller than
\(\sum_a Q^{T - 1}_a + (\regret^T_a)^+ \le 2^{\frac{T}{2}} \abs{\Actions}^{\frac{T - 1}{2}} \Delta^T\).

\section{Experiments}
\subsection{Games}

\subsubsection{Leduc Hold'em Poker}
Leduc hold'em poker~\citep{southey05} is a two-player poker game with a deck of six cards (two suits and three ranks).
At the start of the game, both players ante one chip and receive one private card.
There are two betting rounds and there is a maximum of two raises on each round.
Bet sizes are limited to two chips in the first round and four in the second.
If one player folds, the other wins.
At the start of the second round, a public card is revealed.
A showdown occurs at the end of the second round if no player folds.
The strongest hand in a showdown is a pair (using the public card), and if no player pairs, players compare the ranks of their private cards.
The player with the stronger hand takes all chips in the pot or players split the pot if their hands have the same strength.
Payoffs are reported in milli-big blinds (mbb) (where the ante is considered a big blind) for consistency with the way performance is reported in other poker games.

\subsubsection{Imperfect Information Goofspiel}
Imperfect information goofspiel~\cite{Ross71Goofspiel,lanctot13phdthesis} is a bidding game for $N$ players.
Each player is given a hand of $n$ ranks that they play to bid on $n$ point cards.
On each round, one point card is revealed and each player simultaneously bids on the point card.
The point cards might be sorted in ascending order ($\uparrow$), descending order ($\downarrow$), or they might be shuffled ($R$).
If there is one bid that is greater than all the others, the player who made that bid wins the point card.
If there is a draw, the bid card is instead discarded.
The player with the most points wins so payoffs are reported in win percentage.
We use five goofspiel variants:
\begin{itemize}
  \item two-player, 5-ranks, ascending (goofspiel$(5, \uparrow, N = 2)$, denoted as \gTwoFive/ in the main paper),
  \item two-player, 5-ranks, descending (goofspiel$(5, \downarrow, N = 2)$),
  \item two-player, 4-ranks, random (goofspiel$(4, R, N = 2)$),
  \item three-player, 4-ranks, ascending (goofspiel$(4, \uparrow, N = 3)$, denoted as \gThreeFour/ in the main paper), and
  \item three-player, 4-ranks, descending (goofspiel$(4, \downarrow, N = 3)$).
\end{itemize}

\subsubsection{Sheriff}
Sheriff is a two-player, non-zero-sum negotiation game resembling the Sheriff of Nottingham board game and it was introduced by \textcite{Farina19:Correlation}.
At the beginning of the game, the ``smuggler'' player chooses zero or more illegal items (maximum of three) to add to their cargo.
The rest of the game proceeds over four rounds.

At the beginning of each round, the smuggler signals how much they would be willing to pay the ``sheriff'' player to bribe them into not inspecting the smuggler's cargo, between zero and three.
The sheriff responds by signalling whether or not they would inspect the cargo.
On the last round, the bribe amount chosen by the smuggler and the sheriff's decision about whether or not to inspect the cargo are binding.

If the cargo is not inspected, then the smuggler receives a payoff equal to the number of illegal items included within, minus their bribe amount, and the sheriff receives the bribe amount.
Otherwise, the sheriff inspects the cargo.
If the sheriff finds an illegal item, then the sheriff forces the smuggler to pay them two times the number of illegal items.
Otherwise, the sheriff compensates the smuggler by paying them three.

\subsubsection{Tiny Bridge}
A miniature version of bridge created by Edward Lockhart, inspired by a research project at University of Alberta by Michael Bowling, Kate Davison, and Nathan Sturtevant.
We use the smaller two-player rather than the full four-player version.
See the implementation from \textcite{LanctotEtAl2019OpenSpiel} for more details.

\subsubsection{Tiny Hanabi}
A miniature two-player version of Hanabi described by \textcite{foerster2019bayesian}.
The game is fully cooperative and the optimal score is ten.
Both players take only one action so all EFR instances collapse except when they differ in their choice of $\DevSet_{\infoSet}$.

\subsection{Alternative $\DevSet_{\infoSet}$ Choices}
When implementing EFR for deviations that set the action transformations at each information set to the internal transformations, we have the option of implementing these variants by using the union of the internal and external transformations without substantially changing the variant's theoretical properties.
We test how this impacts practical performance within EFR variants for informed counterfactual deviations, CFPS deviations, and TIPS deviations.
These variants have an ``\EXT + \INT'' subscript.

\subsection{Results}
We present four sets of figures to summarize the performance of each EFR variant in the fixed and simultaneous regimes described in Section 7.

The first three sets of figures illustrate how each variant performs on average in each round individually.
\cref{fig:lcAvgFixed,fig:lcAvgSim} show the running average expected payoff of each variant over rounds, averaged over play with all EFR variants (including itself).
These figures summarize the progress that each variant makes over rounds to adapt to and correlate with its companion variant, on average.
\cref{fig:lcInstFixed,fig:lcInstSim} show the instantaneous expected payoff of each variant over rounds, averaged over play with all EFR variants.
\cref{fig:runtimeLcAvgFixed,fig:runtimeLcAvgSim} show the same data as in \cref{fig:lcAvgFixed,fig:lcAvgSim} except according to runtime rather than rounds.
Tiny Hanabi is omitted because it is too small to make meaningful runtime comparisons between EFR variants.

\cref{fig:heatmaps} show the average expected payoff of each variant paired with each other variant (including itself) after 1000 rounds.
These figures summarize how well each variant works with each other variant.

\begin{figure}[htbp]
\centering
  \includegraphics[width=0.9\linewidth]{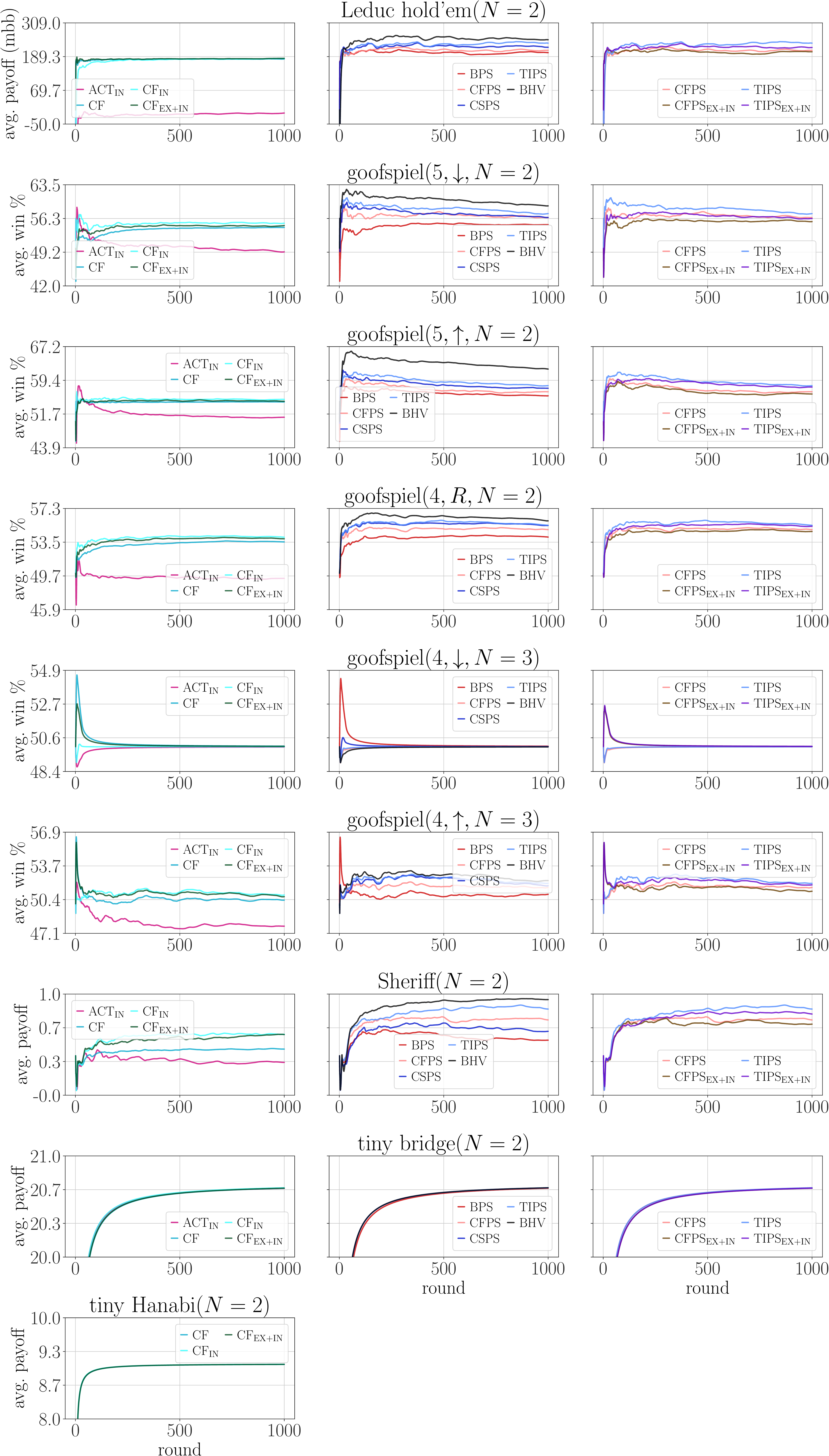}
  \caption{The expected payoff accumulated by each EFR variant over rounds averaged over play with all EFR variants in each game in the fixed regime.}
  \label{fig:lcAvgFixed}
\end{figure}

\begin{figure}[htbp]
\centering
  \includegraphics[width=0.9\linewidth]{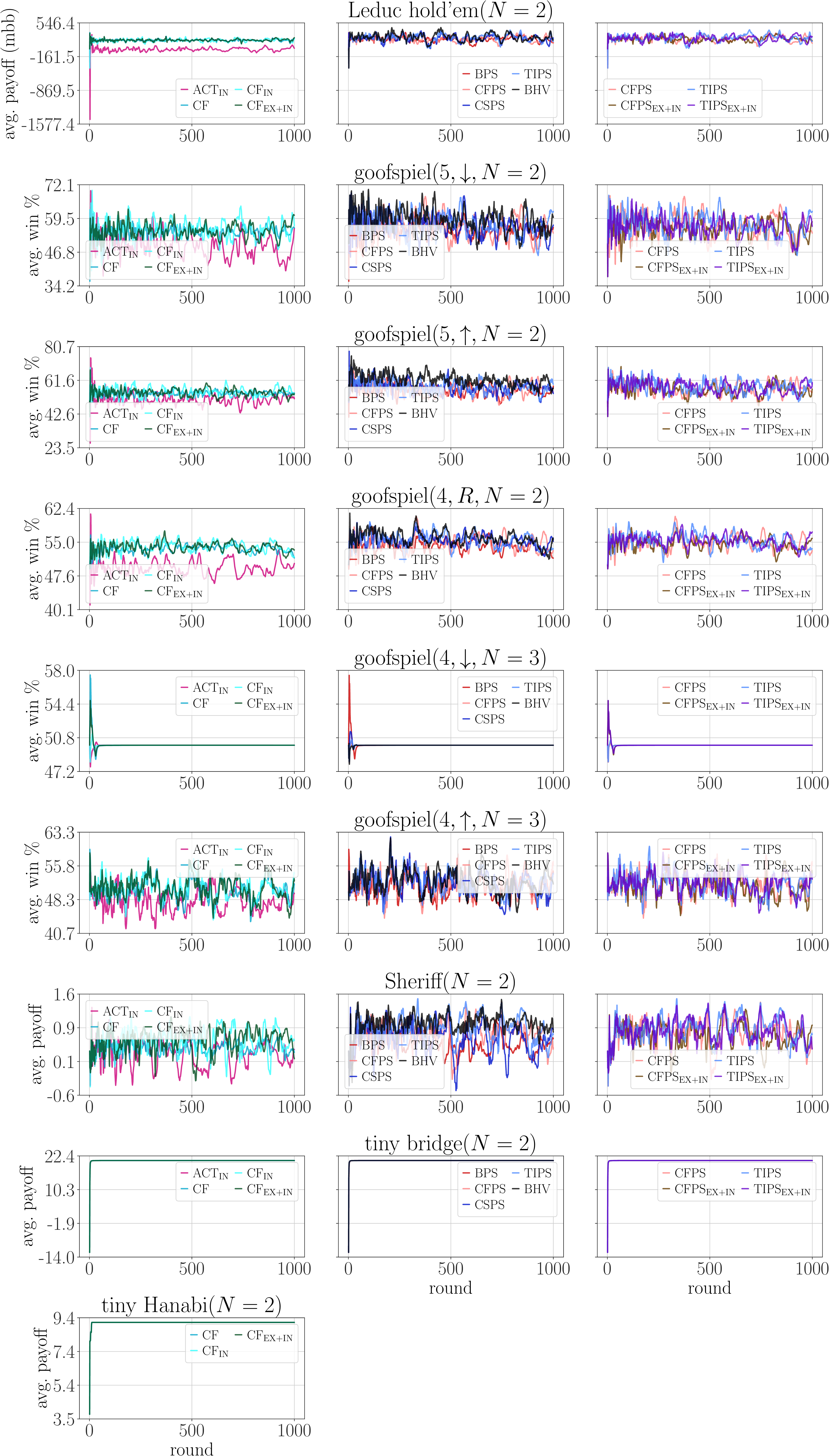}
  \caption{The instantaneous payoff achieved by each EFR variant on each round averaged over play with all EFR variants in each game in the fixed regime.}
  \label{fig:lcInstFixed}
\end{figure}

\begin{figure}[htbp]
\centering
  \includegraphics[width=0.9\linewidth]{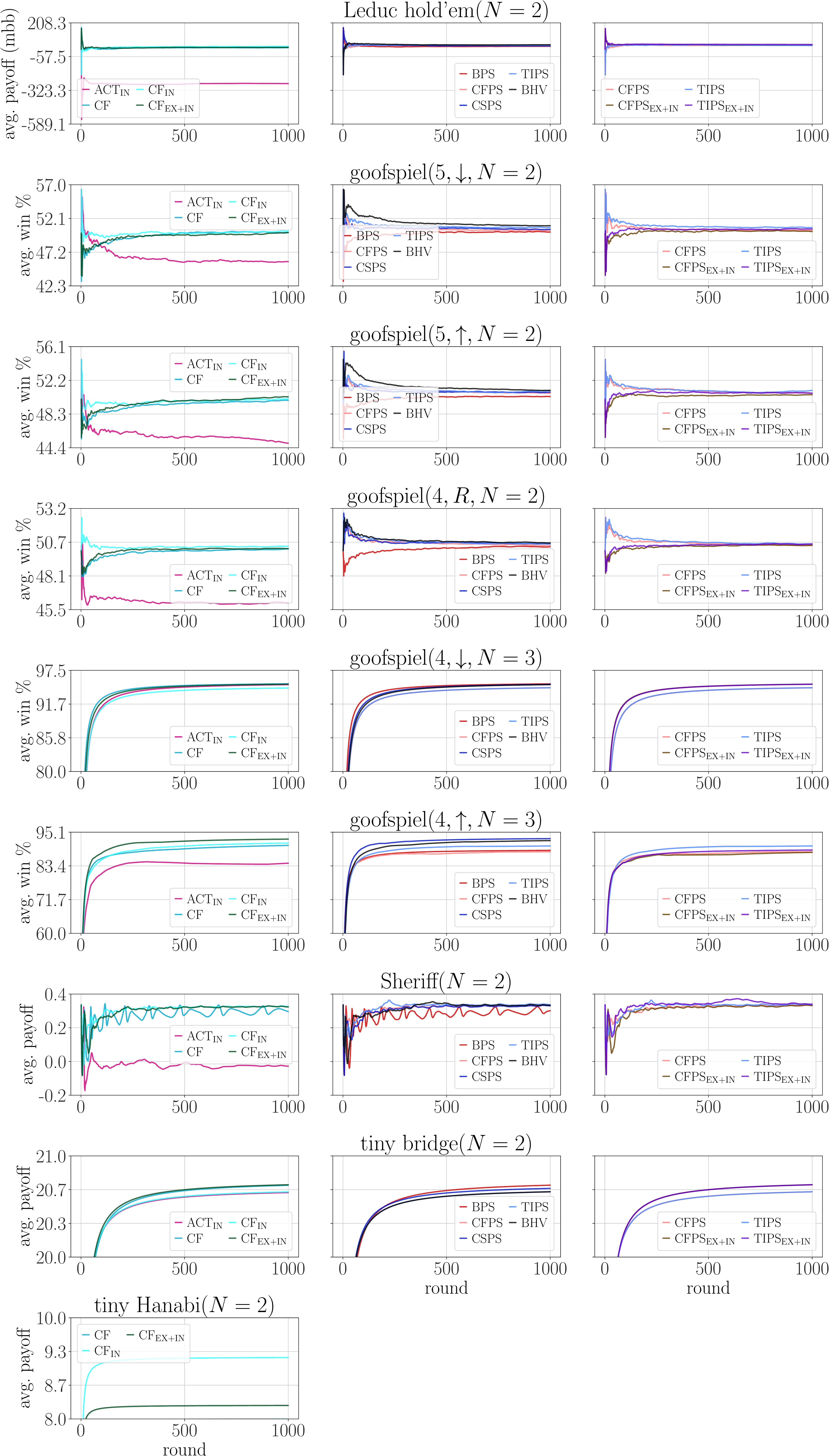}
  \caption{The expected payoff accumulated by each EFR variant over rounds averaged over play with all EFR variants in each game in the simultaneous regime.}
  \label{fig:lcAvgSim}
\end{figure}

\begin{figure}[htbp]
\centering
  \includegraphics[width=0.9\linewidth]{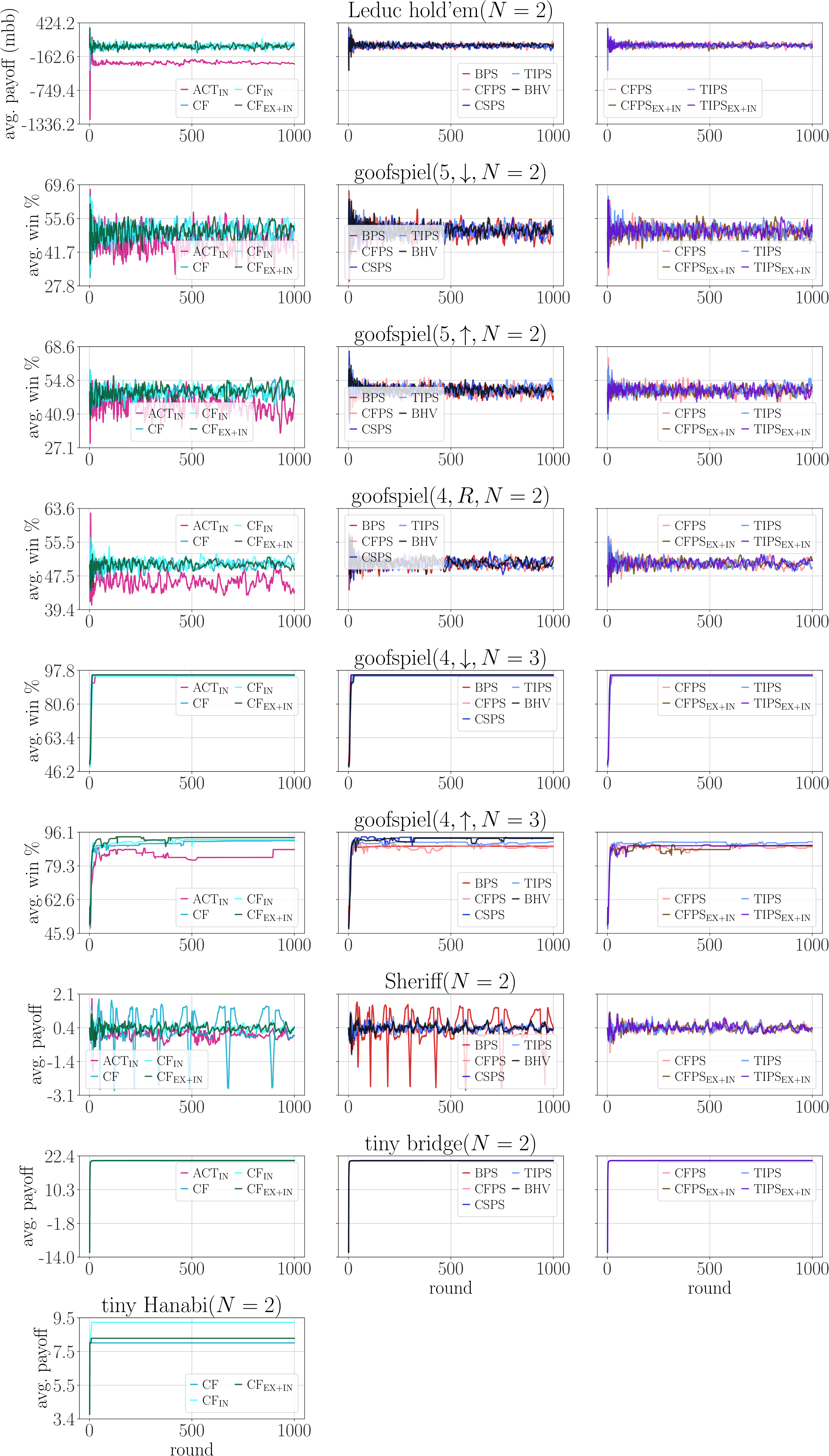}
  \caption{The instantaneous payoff achieved by each EFR variant on each round averaged over play with all EFR variants in each game in the simultaneous regime.}
  \label{fig:lcInstSim}
\end{figure}

\begin{figure}[htbp]
\centering
  \includegraphics[width=0.9\linewidth]{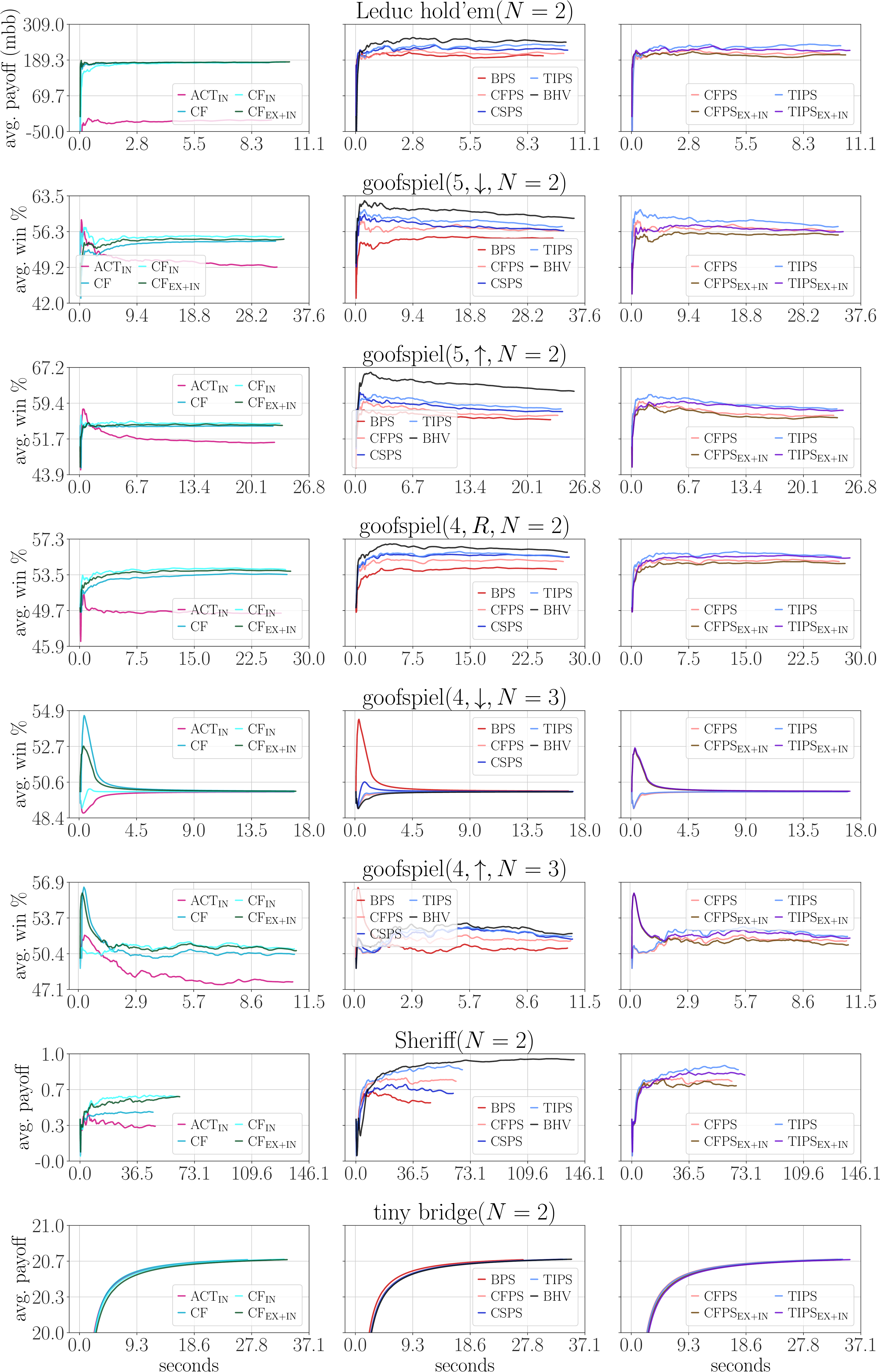}
  \caption{The expected payoff accumulated by each EFR variant over runtime averaged over play with all EFR variants in each game in the fixed regime.}
  \label{fig:runtimeLcAvgFixed}
\end{figure}

\begin{figure}[htbp]
\centering
  \includegraphics[width=0.9\linewidth]{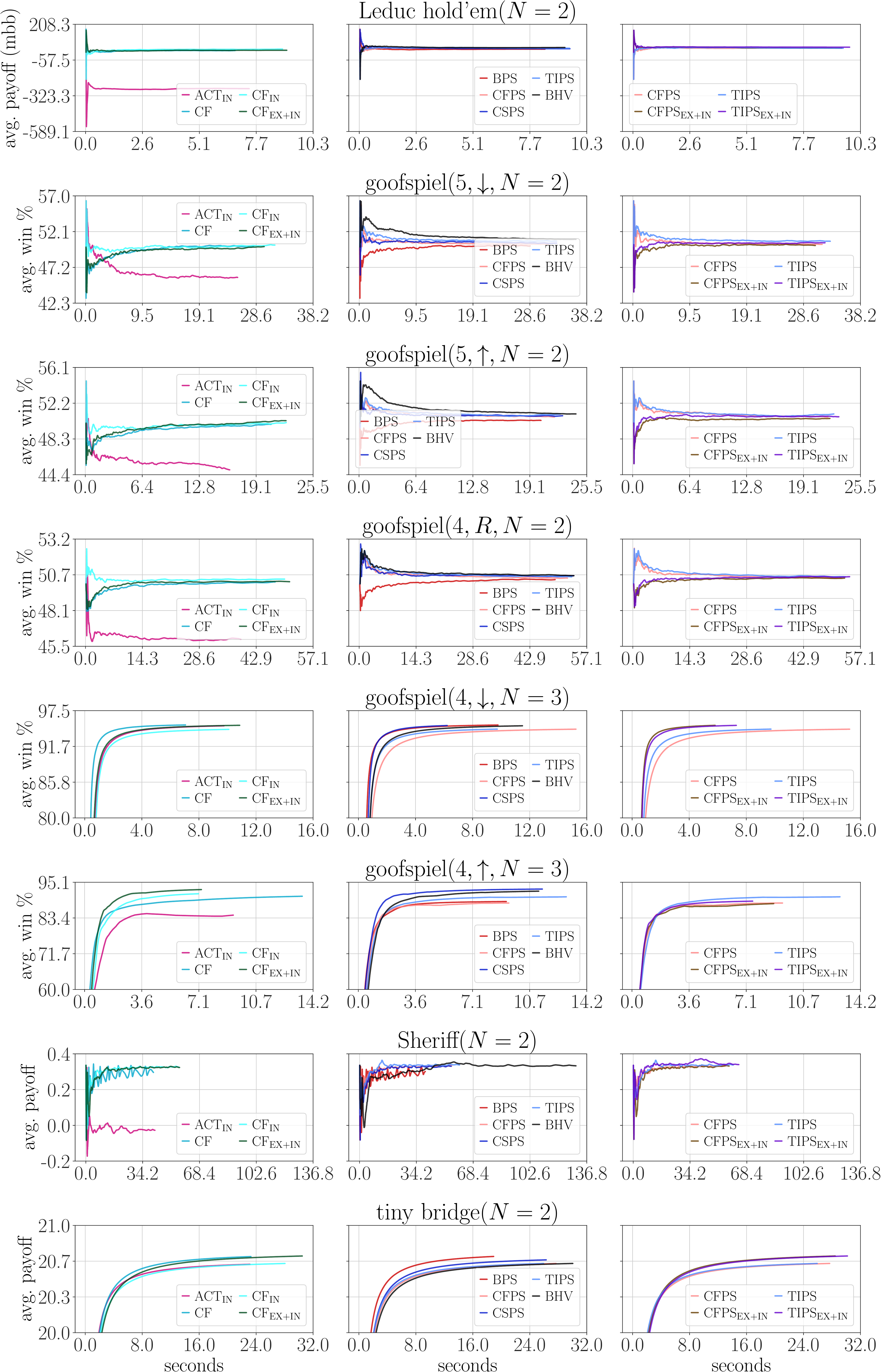}
  \caption{The expected payoff accumulated by each EFR variant over runtime averaged over play with all EFR variants in each game in the simultaneous regime.}
  \label{fig:runtimeLcAvgSim}
\end{figure}

\begin{figure}[htbp]
\centering
  \centerline{\includegraphics[width=1.3\linewidth]{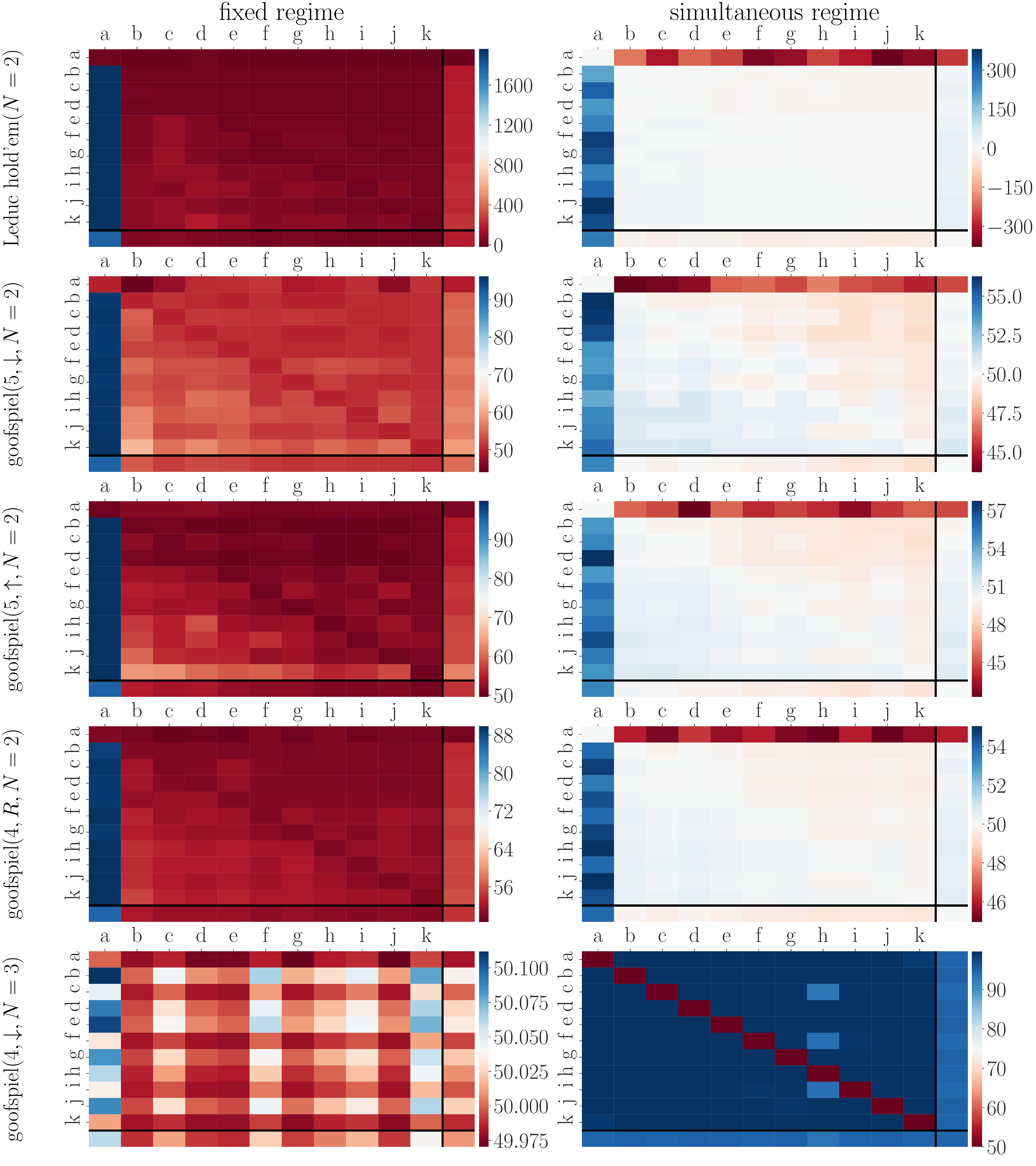}}
  \caption{(1 / 2) The average expected payoff accumulated by each EFR variant (listed by row) from playing with each other EFR variant (listed by column) in each game after 1000 rounds where
    a $\to$ ACT\textsubscript{\IN}, b $\to$ CF, c $\to$ CF\textsubscript{\IN}, d $\to$ CF\textsubscript{\EXT+\IN}, e $\to$ BPS, f $\to$ CFPS, g $\to$ CFPS\textsubscript{\EXT+\IN},
    h $\to$ CSPS, i $\to$ TIPS, j $\to$ TIPS\textsubscript{\EXT+\IN}, k $\to$ BHV. The bottom rows and farthest right columns represent the column and row averages, respectively.}
  \label{fig:heatmaps}
\end{figure}

\begin{figure}[htbp]
\centering
  \centerline{\includegraphics[width=1.3\linewidth]{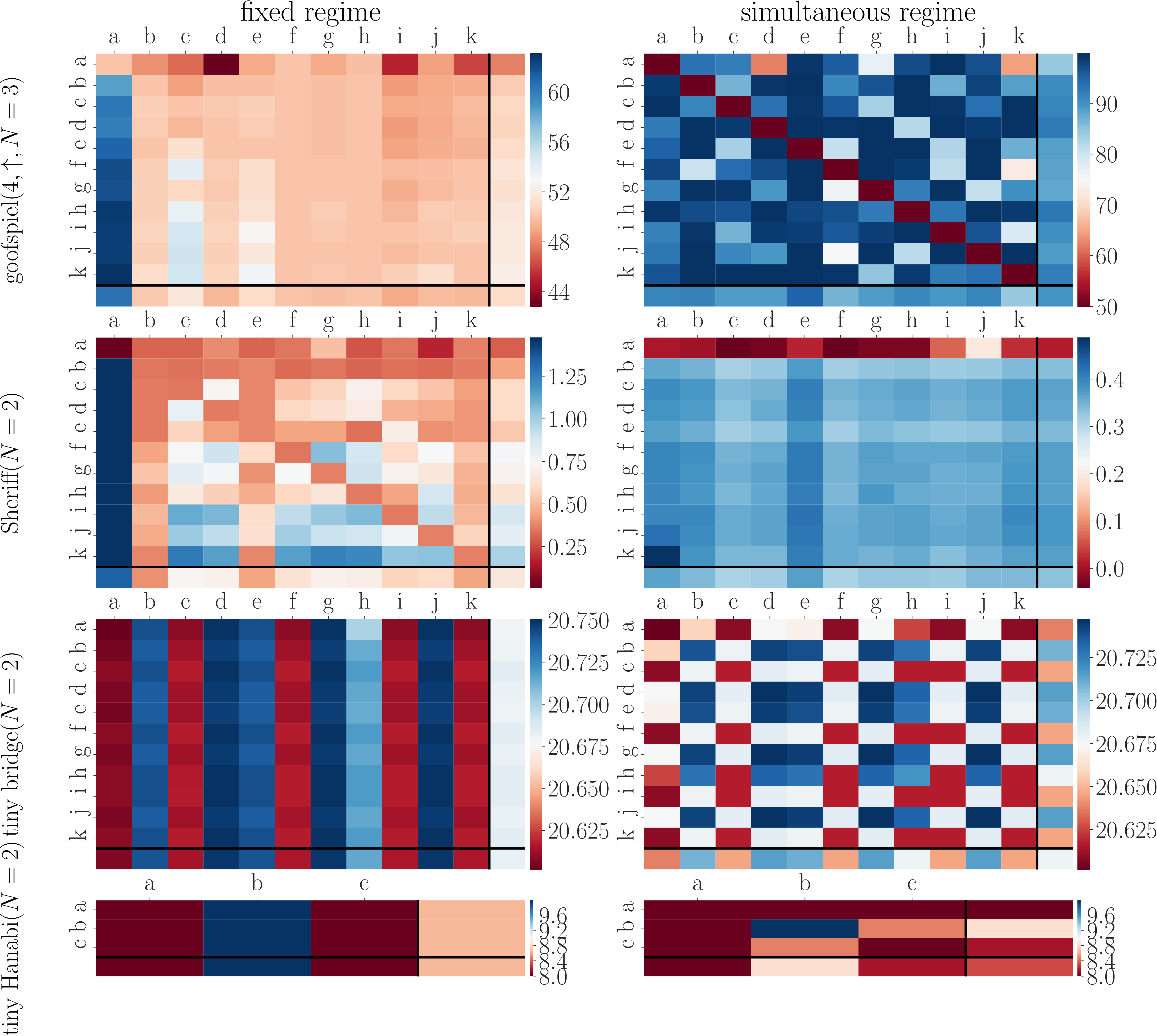}}
  \caption{(2 / 2) The average expected payoff accumulated by each EFR variant (listed by row) from playing with each other EFR variant (listed by column) in each game after 1000 rounds where
    a $\to$ ACT\textsubscript{\IN}, b $\to$ CF, c $\to$ CF\textsubscript{\IN}, d $\to$ CF\textsubscript{\EXT+\IN}, e $\to$ BPS, f $\to$ CFPS, g $\to$ CFPS\textsubscript{\EXT+\IN},
    h $\to$ CSPS, i $\to$ TIPS, j $\to$ TIPS\textsubscript{\EXT+\IN}, k $\to$ BHV. The bottom rows and farthest right columns represent the column and row averages, respectively.}
  \label{fig:heatmaps}
\end{figure}
  
\end{document}